\documentclass[11pt]{article}
\usepackage[margin=1.25in]{geometry}
\usepackage{mathpazo}

\usepackage{caption}
\usepackage{subcaption}
\usepackage{xcolor}
\usepackage{xspace}
\usepackage{euscript}
\usepackage{amstext}
\usepackage{amsmath}
\usepackage{amssymb}
\usepackage{graphicx}
\usepackage{paralist}
\usepackage{theorem}
\usepackage[]{hyperref}
\usepackage[final]{microtype}
\usepackage{picins}

\usepackage{algorithm}
\usepackage{algpseudocode}
\usepackage{listings}
\usepackage{color}


\usepackage{fixltx2e}
\usepackage{xspace}
\usepackage{adjustbox}

\newcommand{\pnt}{p}

\renewcommand{\Re}{\mathbb{R}}
\newcommand{\Sphere}{\mathbb{S}}
\newcommand{\SphereD}{\mathbb{S}^{d-1}}
\newcommand{\SphereDP}{\mathbb{S}^{d-1}_+}
\newcommand{\PntSet}{\mathsf{P}}
\newcommand{\Probl}{RMS problem }
\newcommand{\SbSet}{\mathsf{Q}}

\newcommand{\HprSet}{\mathsf{H}}

\newcommand{\cardin}[1]{| #1 | }
\newcommand{\floor}[1]{\left \lfloor #1 \right \rfloor}

\newcommand{\norm}[1]{\| #1 \|}
\newcommand{\magn}[1]{\| #1 \|}

\newcommand{\dotp}[2]{\langle #1, #2 \rangle}
\newcommand{\wtv}{u}

\newcommand{\Score}{\omega}
\newcommand{\kPoint}{\varphi}
\newcommand{\kSet}{\Phi}
\newcommand{\dirSet}{\Omega}
\newcommand{\AllDir}{\mathbb{U}}
\newcommand{\hpr}{\mathsf{h}}
\newcommand{\ray}{w}
\newcommand{\cell}{C}
\newcommand{\cellSet}{\EuScript{C}}
\newcommand{\NP}{\mathrm{NP}}
\newcommand{\wid}{\mathop{\mathrm{width}}}
\newcommand{\Sets}{\EuScript{R}}
\newcommand{\Basis}{B}
\newcommand{\SetEps}[1]{R_{#1}}
\newcommand{\AllSets}{\EuScript{R}_{\wtv}}
\newcommand{\Net}{\mathsf{N}}
\newcommand{\AllSetsNet}{\EuScript{R}_{\Net}}

\newcommand{\errorProb}{$\min$-error}
\newcommand{\sizeProb}{$\min$-size}

\newcommand{\minError}[1]{\ell(#1)}
\newcommand{\minSize}[1]{s(#1)}
\newcommand{\minSz}{s_{\epsilon}}

\newcommand{\posPoints}{\mathbb{X}}
\newcommand{\allPoints}{\Re^d}

\newcommand{\GreedyKI}{NSLLX}
\newcommand{\GreedyKII}{CTVW}
\newcommand{\HS}{HS}
\newcommand{\CoreSet}{RRS}

\newcommand{\abs}[1]{\left | #1 \right |}
\newcommand{\polylog}{\mathop {\mathrm{polylog}}}

\newcommand{\argmin}{\arg\!\min}

\newcommand{\seclab}[1]{\label{section:#1}}
\newcommand{\secref}[1]{Section~\ref{section:#1}}

\newcommand{\lemlab}[1]{\label{lemma:#1}}
\newcommand{\lemref}[1]{Lemma~\ref{lemma:#1}}

\newcommand{\figlab}[1]{\label{fig:#1}}
\newcommand{\figref}[1]{Figure~\ref{fig:#1}}

\newcommand{\algolab}[1]{\label{algorithm:#1}}
\newcommand{\algoref}[1]{Algorithm~\ref{algorithm:#1}}

\newcommand{\Eqlab}[1]{\label{equation:#1}}
\newcommand{\Eqref}[1]{Equation~\ref{equation:#1}}

\newcommand{\tablelab}[1]{\label{table:#1}}

\newtheorem{theorem}{Theorem}[section]
\newtheorem{lemma}[theorem]{Lemma}
\newtheorem{corollary}[theorem]{Corollary}
\newenvironment{proof}{\trivlist\item[]\emph{Proof}:}%
                  {\unskip\nobreak\hskip 1em plus 1fil\nobreak%
                           \rule{2mm}{2mm}
                           \parfillskip=0pt%
                           \endtrivlist}

\newcommand{\remove}[1]{}

\def\mparagraph#1{\par\medskip\noindent\textbf{#1.}\quad}

\makeatletter
\long\def\@makecaption#1#2{
   \vskip 10pt
   \setbox\@tempboxa\hbox{{\footnotesize \textbf{#1.} #2}}
   \ifdim \wd\@tempboxa >\hsize         
       {\footnotesize \textbf{#1.} #2\par}
     \else                              
       \hbox to\hsize{\hfil\box\@tempboxa\hfil}
   \fi}
\makeatother

\begin{document}
\title{Efficient Algorithms for k-Regret Minimizing Sets\thanks{
Work by Agarwal and Sintos is supported by NSF under grants
  CCF-15-13816, CCF-15-46392, and IIS-14-08846, by ARO grant W911NF-15-1-0408, and by Grant 2012/229 from the U.S.-Israel
  Binational Science Foundation.
Work by Suri and Kumar is supported by NSF under grant CCF-15-25817.}}
\date{}

\remove{
\author{
\alignauthor
Pankaj K. Agarwal\\
       \affaddr{Duke University}\\
\alignauthor
Nirman Kumar\\
       \affaddr{University of Memphis}\\
\alignauthor
Stavros Sintos\\
       \affaddr{Duke University}\\
\and  
\alignauthor
Subhash Suri\\
       \affaddr{UC Santa Barbara}\\
}
}

\author{Pankaj K. Agarwal\thanks{%
Department of Computer Science, Duke University, Durham, NC
27708-0129, USA; {\tt pankaj@cs.duke.edu}.}
\and
Nirman Kumar\thanks{%
Department of Computer Science, University of Memphis, Memphis, TN
38152, USA; {\tt nkumar8@memphis.edu}.}
\and
Stavros Sintos\thanks{%
Department of Computer Science, Duke University, Durham, NC
27708-0129, USA; {\tt ssintos@cs.duke.edu}.}
\and
Subhash Suri\thanks{%
Department of Computer Science, University of California, Santa Barbara, CA
93106, USA; {\tt suri@cs.ucsb.edu}.}
}

\maketitle

\begin{abstract}
A regret minimizing set $\SbSet$ is a small size representation of a much larger database
$\PntSet$ so that user queries executed on $\SbSet$ return answers whose scores are not much
worse than those on the full dataset. In particular, a \emph{$k$-regret minimizing set}
has the property that the regret ratio between the score of the top-$1$ item in $\SbSet$
and the score of the top-$k$ item in $\PntSet$ is minimized, where the score of an item is
the inner product of the item's attributes with a user's weight (preference) vector.
The problem is challenging because we want to find a \emph{single} representative
set $\SbSet$ whose regret ratio is small with respect to \emph{all possible} user weight
vectors.

We show that $k$-regret minimization is $\NP$-Complete for all dimensions $d \geq 3$.
This settles an open problem from Chester et al.~[VLDB 2014], and resolves the
complexity status of the problem for all $d$: the problem is known to have
polynomial-time solution for $d \leq 2$. In addition, we propose two new
approximation schemes for regret minimization, both with provable guarantees,
one based on coresets and another based on hitting sets.
We also carry out extensive experimental evaluation, and show that our schemes
compute regret-minimizing sets comparable in size to the greedy algorithm
proposed in~[VLDB 14] but our schemes are significantly faster and scalable
to large data sets.
\end{abstract}

\section{Introduction}
\seclab{introduction}

Multi-criteria decision problems pose a unique challenge for databases systems:
how to present the space of possible answers to a user. In many instances, there
is no single best answer, and often a very large number of incomparable objects
satisfy the user's query. For instance, a database query for a car or a smart
phone can easily produce an overwhelming number of potential choices to present
to the user, with no obvious way to rank them. Top-$k$ and the skyline operators
are among the two main techniques used in databases to manage this kind of
complexity, but each has its own shortcoming.

The top-$k$ operator relies on the existence of a \emph{utility function} that is used
to rank the objects satisfying the user's query, and then selecting the top $k$ by score
according to this function. A commonly used utility function takes the inner product
of the object attributes with a \emph{weight vector}, also called the \emph{user's
preference}, thus forming a weighted linear combination of the different features.
However, formulating the utility function is complicated, as users often do not know
their preferences precisely, and, in fact, exploring the cost-benefit tradeoffs of
different features is often the goal of database search.

The second approach of skylines is based on the principle of \emph{pareto optimality}:
if an object $p$ is better than another object $q$ on all features, then $p$ is always
preferable to $q$ by any rational decision maker. This coordinate-wise dominance is
used to eliminate all objects that are dominated by some other object. The \emph{skyline}
is the set of objects not dominated by any other object, and has proved to be a powerful
tool in multi-criteria optimization. Unfortunately, while skylines are extremely
effective in reducing the number of objects in low dimensions, their utility drops
off quickly as the dimension (number of features) grows, especially when objects in the database
have anti-correlated features (attributes).
Indeed, theoretically all objects of the database can appear on the
skyline even in two dimensions.
Furthermore, the skyline does not necessarily preserve "top-$k$" objects as $k$ increases, in which case one uses
\emph{$k$-skybands} (\cite{gong2009efficient, liu2012efficient} -- the subset of points each of which is dominated by at most $k$ points.
The size of the skyband grows even more rapidly.

Regret minimization is a recent approach, proposed initially by Nanongkai et al.~\cite{nanongkai2010regret},
to address the shortcomings of both the top $k$ and skylines. The regret minimization
hybridizes top $k$ and skylines by computing a small representative subset $\SbSet$ of the
much larger database $\PntSet$ so that \emph{for any preference vector} the top ranked item
in $\SbSet$ is a good approximation of the top ranked item in $\PntSet$.
The hope is that the size of $\SbSet$ is much smaller than that of the skyline of $\PntSet$. In fact,
it is known that for a given regret ratio, there is always a regret minimizing set whose size depends only on the
regret ratio and the dimension, and not on the size of $\PntSet$. In contrast, as mentioned above, the
skyline size can be as large as $\cardin{\PntSet}$.

The goal is to find a
subset $\SbSet$ of small size whose approximation error is also small: posed in the form of
a decision question, is there a subset of $r$ objects so that every user's top-$1$
query can be answered within error at most $x \%$? In general, this is too stringent
a requirement and motivated Chester et al.~\cite{chester2014computing} to propose a more relaxed version of
the problem, called the $k$-regret minimization.\footnote{%
	We should point out that the term $k$-regret is used to denote different
	things by Nanongkai et al.~\cite{nanongkai2010regret} and Chester et al.~\cite{chester2014computing}.
	In the former, $k$-regret is the representative set of $k$ objects, whereas
	in the latter, $k$-regret is used to denote the regret ratio between the scores
	of top $1$ and top $k$. In our paper, we follow the convention of Chester et al.~\cite{chester2014computing}.}
In $k$-regret minimization, the quality of approximation is measured as the gap between
the score of the top $1$ item in $\SbSet$ and the top $k$ item in $\PntSet$ expressed as a ratio, so
that the value is always between $0$ and $1$.

In this paper, we make a number of contributions to the study of $k$-regret minimizing
sets. As a theoretical contribution, we prove that the $k$-regret minimization problem
is $\NP$-Complete for any dimension $d \geq 3$. This resolves an open problem of Chester
et al.~\cite{chester2014computing} who presented a polynomial-time algorithm for $d=2$ and showed $\NP$-hardness
for dimension $d \geq n$, leaving open the tantalizing question of whether the problem
was in class $\mathrm{P}$ for low dimensions --- the dimension being a fixed constant. Our result shows otherwise
and settles the complexity landscape of the problem for all dimensions.
On more practical side, we present simple and efficient algorithms
that are guaranteed to compute small regret minimizing sets and that are scalable to
large datasets even for larger values of $k$ and even when the size of skyline is large.

\mparagraph{Our Model}
An object is represented as a point $\pnt=(\pnt_1,\ldots, \pnt_d)$ in $\Re^d$ with non-negative
attributes, i.e., $\pnt_i\geq 0$ for every $i\leq d$. Let
$\posPoints=\{(\pnt_1,\ldots,\pnt_d)\in \Re^d \mid \pnt_i\geq 0 \enskip \forall i\}$ denote
the space of all objects, and let $\PntSet\subset \posPoints$ be a set of $n$ objects.
A user preference is also represented as a point $\wtv=(\wtv_1,\ldots, \wtv_d)\in \posPoints$,
i.e., all $\wtv_i\geq 0$. Given a preference $\wtv \in \Re^d$, we define the \emph{score} of an object
$\pnt$ to be $\Score(\wtv,\pnt) = \dotp{\wtv}{\pnt} = \sum_{i=1}^d \wtv_{i}\pnt_i$.

\begin{figure}[ht!]
\centering
\includegraphics[scale=0.25]{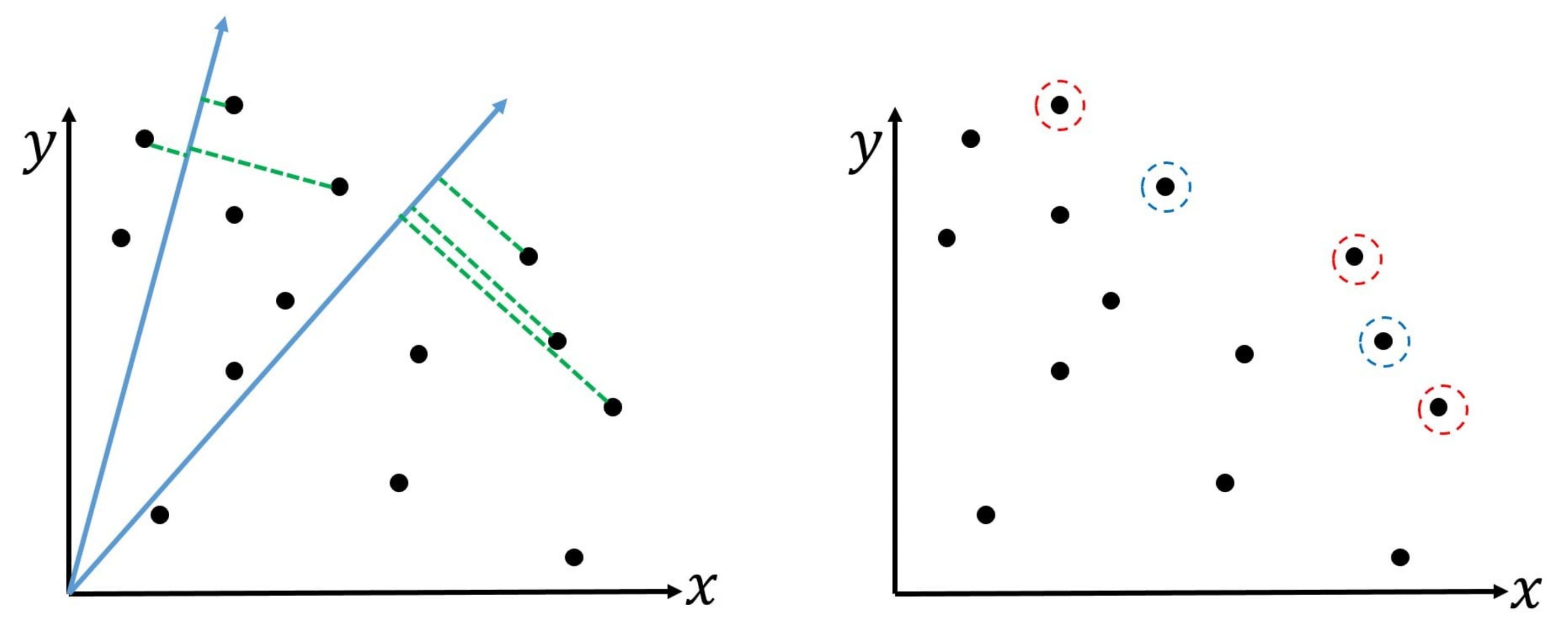}
\caption{Left: top $3$ points in two different preferences. Right: Set of points in the red circles is a $(1,0)$-regret set. Set of points in the blue circles is a $(3,0)$-regret set.}
\figlab{RegretExample}
\end{figure}


For a preference $\wtv \in \posPoints$ and an integer $k \geq 1$, let $\kPoint_k(\wtv,\PntSet)$
denote the point $\pnt$ in $\PntSet$ with the $k$-th largest score (i.e., there are less than $k$
points of $\PntSet$ with larger score than $\Score(\wtv,\pnt)$ and there are at least $k$ points
with score at least $\Score(\wtv,\pnt)$), and let $\Score_k(\wtv,\PntSet)$
denote its score. Set $\kSet_k(\wtv,\PntSet)=\{\kPoint_j(\wtv,\PntSet)\mid 1\leq j\leq k\}$
to be the set of $k$ top points with respect to preference $\wtv$.
\footnote{If there are multiple objects with score $\Score_j(\wtv,\PntSet)$,
then either we include all such points in $\kSet_k(\wtv,\PntSet)$ or break the tie
in a consistent manner.} For brevity, we set $\Score(\wtv,\PntSet)=\Score_1(\wtv,\PntSet)$.
If $\PntSet$ is obvious from the context, we drop $\PntSet$ from the list of the arguments,
i.e., we use $\Score_k(\wtv)$ to denote $\Score_k(\wtv,\PntSet)$ and so on.

For a subset $\SbSet \subseteq \PntSet$ and a preference $\wtv$, define the regret of
$\SbSet$  for preference $\wtv$ (w.r.t. $\PntSet$, denoted by $\ell_k(\wtv,\SbSet,\PntSet)$, as
\[
 \ell_k(\wtv,\SbSet,\PntSet)=\frac{\max\{0,\Score_k(\wtv,\PntSet)-\Score(\wtv,\SbSet)\}}{\Score_k(\wtv,\PntSet)}.
\]
That is, $\ell_k(\wtv,\SbSet,\PntSet)$ is the relative loss in the score of the $k$-th topmost object
if we replace $\PntSet$ with $\SbSet$. We refer to the maximum regret of $\SbSet$
\[
  \ell_k(\SbSet,\PntSet)=\max_{\wtv\in\posPoints} \ell_k(\wtv,\SbSet,\PntSet)
\]
as the \emph{regret ratio} of $\SbSet$ (w.r.t. $\PntSet$). If
$\ell_k(\SbSet)\leq \epsilon$, we refer to $\SbSet$ as a $(k,\epsilon)$-regret set (see \figref{RegretExample}).
By definition, a $(k,\epsilon)$-regret set is also a $(k',\epsilon)$-regret set for
any $k'\geq k$. In particular, a $(1,\epsilon)$-regret set is a $(k,\epsilon)$-regret
set for any $k\geq 1$. However, there may exist a $(k,\epsilon)$-regret set whose size
is much smaller than any $(k-1,\epsilon)$-regret set, so the notion of $(k,\epsilon)$-regret
set is useful for all $k$.

Notice that $\ell_k(\SbSet)$ is a monotonic decreasing function of its argument, i.e.,
if $\SbSet_1 \subseteq \SbSet_2$, then $\ell_k(\SbSet_1) \geq \ell_k(\SbSet_2)$.
Furthermore, for any $t>0$, $\Score(t\wtv,\pnt)=t\Score(\wtv,\pnt)$ but
$\kPoint_k(t\wtv,\PntSet)=\kPoint_k(\wtv, \PntSet)$, $\kSet_k(t\wtv, \PntSet) = \kSet_k(\wtv, \PntSet)$,
and $\ell_k(t\wtv,\SbSet,\PntSet)=\ell_k(\wtv,\SbSet,\PntSet)$ (scale invariance).

Our goal is to compute a small subset $\SbSet\subseteq \PntSet$ with small regret ratio,
which we refer to as the \emph{regret minimizing set} (RMS) problem. Since the regret ratio can be
decreased by increasing the size of the subset, there are two natural formulations of the \Probl.
\begin{enumerate}[(i)]
  \item \emph{\errorProb{}}: Given a set $\PntSet$ of objects and a positive integer $r$,
  compute a subset of $\PntSet$ of size $r$ that minimizes the regret ratio, i.e., return a subset
  \[
   \SbSet^* = \argmin_{\SbSet\subseteq\PntSet: \cardin{\SbSet}\leq r}\ell_k(\SbSet),
  \]
  and let $\minError{r}=\ell_k(\SbSet^*)$,
  \item \emph{\sizeProb{}}: Given a set $\PntSet$ of objects and a parameter $\epsilon>0$,
  compute a smallest size subset with regret ratio at most $\epsilon$, i.e., return
  \[
   \SbSet^{\#}=\argmin_{\SbSet\subseteq \PntSet: \ell_k(\SbSet)\leq \epsilon}\cardin{\SbSet},
  \]
  and set $\minSize{\epsilon}=\cardin{\SbSet}$.
\end{enumerate}

\mparagraph{Our results}
We present the following results in this paper:\\

(I)  We show that the \Probl is NP-Complete even for $d=3$ and $k>1$.
The previous hardness proof \cite{chester2014computing} requires the dimension $d$ to be as
large as $n$, and it was an open question whether the problem was NP-complete in low dimensions.
Since a polynomial-time algorithm exists for both formulations of the regret minimizing set problem
in $2D$, our result settles the problem for $k>1$.
Proving hardness in small dimensions, $d=3$, requires a different proof technique.
In  fact, it is not trivial to check whether $\ell_k(\SbSet)\leq \epsilon$ for given $\epsilon>0$,
i.e., it is not obvious that the \Probl is in NP. Using a few results from discrete geometry,
we present an efficient algorithm for computing $\ell_k(\SbSet)$.\\

(II) We show that for any $\PntSet\subset \posPoints$ and for any $\epsilon>0$
there exists a $(1,\epsilon)$-regret set, and thus a $(k,\epsilon)$-regret set for any $k\geq 1$,
of $\PntSet$ whose size is independent of the size of $\PntSet$.

By establishing a connection between $(k,\epsilon)$-regret sets and the so-called core sets \cite{agarwal2005geometric},
we show that for any $\PntSet\subset \posPoints$ and for any $\epsilon>0$,
a $(1,\epsilon)$-regret set of size $O(\frac{1}{\epsilon^{(d-1)/2)}})$
can be computed in time $O(n+\frac{1}{\epsilon^{d-1}})$. Notice that for the \errorProb{} problem
Nanongkai et al. \cite{nanongkai2010regret} give an algorithm that returns a set $\SbSet$
such that $\minError{r}\leq \frac{d-1}{(r-d+1)^{\frac{1}{d-1}}+d-1}$. Solving for $r$, we get
for a fixed error $\epsilon$ a $(1,\epsilon)$-regret set of size
$O(\frac{1}{\epsilon^{(d-1)}})$. Our result improves this bound significantly
and it is optimal in the worst case. Furthermore, we can maintain our
$(1,\epsilon)$-regret set under insertion/deletion of points in $O(\frac{\polylog(n)}{\epsilon^{d-1}})$ time per update.
The efficient maintenance of a regret set is important in various applications and it has not been considered before.\\

(III) For a given $\PntSet$ and $\epsilon>0$, there may exist a $(k,\epsilon)$-regret set of $\PntSet$
of size much smaller than $1/\epsilon^{\frac{d-1}{2}}$.
We complement our NP-Completeness result by presenting approximation algorithms for the \Probl.
Given $\PntSet\subset \posPoints$ of size $n$ and $\epsilon>0$, we can compute a $(k,2\epsilon)$-regret set of $\PntSet$
\footnote{The approximation ratio $2$ is not important. We can actually compute a $(k,t\epsilon)$-regret set for an arbitrary
small constant $t>1$.} of size $O(\minSize{\epsilon}\log(\minSize{\epsilon}))$ in time $O(\frac{n}{\epsilon^{d-1}}\log(n)\log(\frac{1}{\epsilon}))$.
Roughly speaking, we formulate the regret-minimizing set problem as a classical hitting-set problem and
use a greedy algorithm to compute a small size hitting set.

By plugging the above algorithm into a binary search,
we also obtain an algorithm for the \errorProb{} version of the problem: given a parameter $r$,
we compute a set $\SbSet\subseteq \PntSet$ of size $O(r\log r)$ such that $\ell(cr\log r)\leq \ell_k(\SbSet)\leq 2\ell(r)$
for a sufficiently large constant $c$. The algorithm runs in $O(\frac{n}{(\ell_k(\SbSet))^{d-1}}\log(n)\log(\frac{1}{\ell_k(\SbSet)}))$ time.
If $\ell_k(\SbSet)$ is very small the algorithm runs in $O(n^d)$ time. The expected running time of this algorithm
is much smaller if the objects are uniformly distributed or drawn from some other nice distribution.\\

(IV) We present experimental results to evaluate the efficacy and
the efficiency of our algorithms on both synthetic and real data sets.
We compare our algorithms with the state of the art greedy algorithm for the
$k$-regret minimization problem presented in \cite{chester2014computing}.
Our hitting-set based algorithm is significantly faster than the previous known algorithms
and the maximum regret ratios of the returned sets are very close, if not better,
than the maximum regret ratios of the greedy algorithm.
The core set algorithm is significantly faster than hitting set and greedy algorithms.
Although the (maximum) regret ratio of the set returned by the core-set based
algorithm is worse than those of other algorithms, the regret in $90\%-95\%$ directions
is roughly the same as that of the other two algorithms.




\section{3D RMS is NP-Complete}
\seclab{NPC3D}
In this section we show that the $k$-\Probl is $\NP$-Complete
for $d\geq 3$ and $k\geq 2$. More precisely, given a set $\PntSet\subset \posPoints$
in $\Re^3$, a parameter $\epsilon>0$, and an integer $r$, the problem of determining
whether there is a $(k,\epsilon)$-regret set of $\PntSet$ of size at most $r$ is
$\NP$-Complete. We first show its membership in $\NP$. We then show
$\NP$-hardness for $k=2$ and later show how to extend the argument to higher values of $k$.

\subsection{RMS problem is in NP}
Given a subset $\SbSet\subseteq \PntSet$ of objects, we describe a polynomial-time
algorithm for computing the regret ratio of $\SbSet$. For simplicity, we describe the
algorithm for $d=3$ but it extends to $d>3$.

\begin{figure}[ht!]
\centering
\includegraphics[scale=0.25]{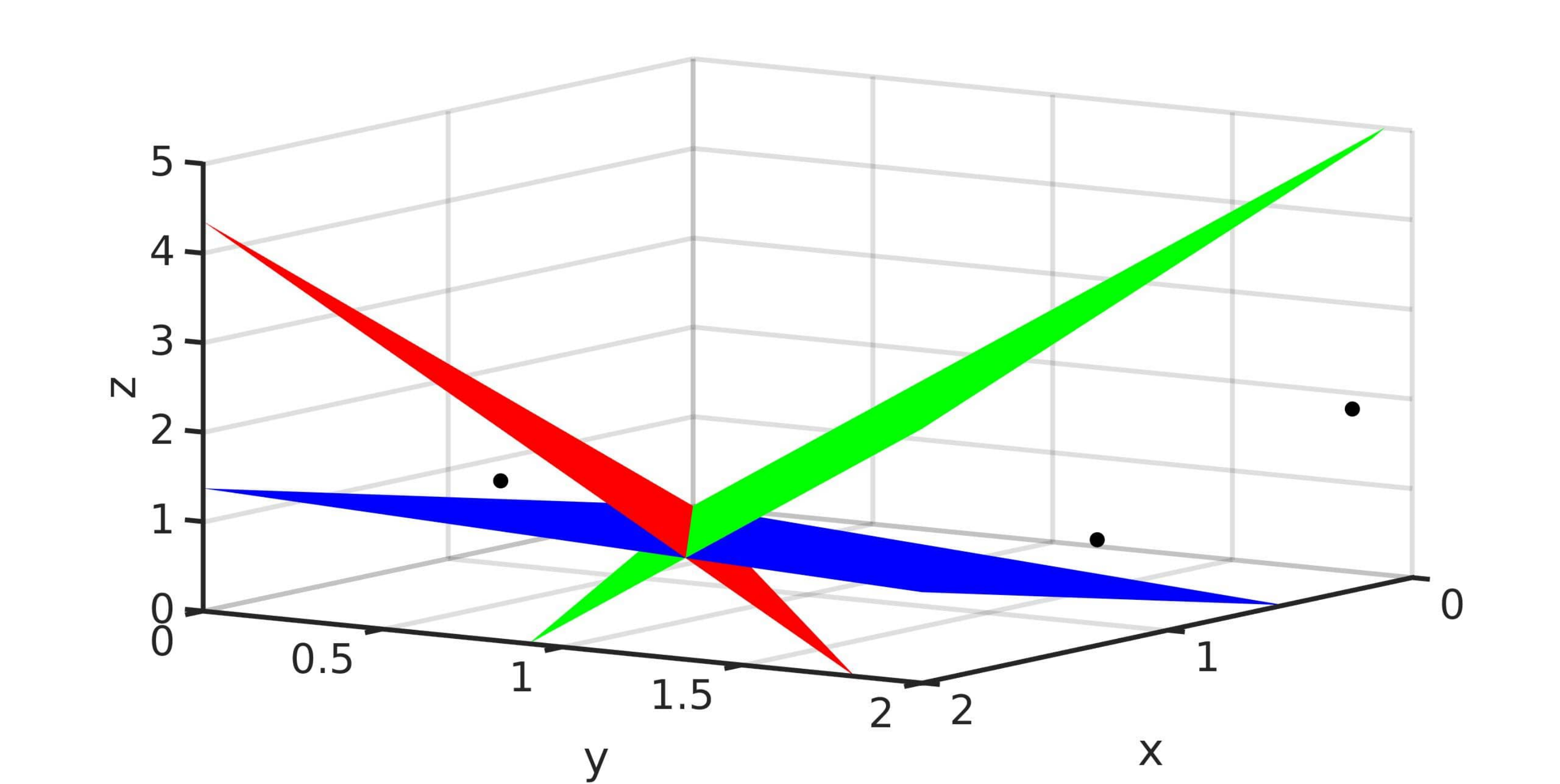}
\caption{$\HprSet$ for a set of $3$ points in $\Re^3$. \label{3dHprs}}
\end{figure}

Let $\dirSet=\{p-q\mid p, q \in\PntSet, p \neq q \}$ be the set of vectors in directions
passing through a pair of points of $\PntSet$. For a vector $\ray\in\dirSet$, let
$\hpr_{\ray}: \dotp{x}{\ray}=0$, be the plane normal to $\ray$ passing through the origin.
By construction, for $\ray=p-q$ the score of $p$ is higher than that of $q$ for all preferences in one
of the open halfspaces bounded by $\hpr_{\ray}$ (namely, $\dotp{x}{\ray}>0$), lower in the other halfspace, and
equal for all preferences in $\hpr_{\ray}$.
Set $\HprSet = \{\hpr_{\ray}\mid \ray\in \dirSet\}\cup\{x_i=0\mid 1\leq i\leq 3\}$, i.e.,
$\HprSet$ includes all the planes $\hpr_{\ray}$ along with the coordinate planes.
$\HprSet$ induces a decomposition $A(\HprSet)$ of $\Re^3$ into cells of various dimensions, where
each cell is a maximal connected region of points lying in the same subset of hyperplanes
of $\HprSet$ (see Figure~\ref{3dHprs}). It is well known that
\begin{enumerate}[(i)]
\item each cell of $A(\HprSet)$ is a polyhedral cone with the origin as
its apex (i.e., each cell is the convex hull of a finite set of rays, each emanating from the origin),and
\item the only $0$-dimensional cell of $A(\HprSet)$ is the origin itself, and
the $1$-dimensional cells are rays emanating from the origin.
Let $\cellSet\subseteq A(\HprSet)$ be the set of cells that lie in $\posPoints$, the positive orthant.
\end{enumerate}
For each cell $\cell\in \cellSet$, let $\ell(\cell,\SbSet) = \max_{\wtv\in \cell}\ell_k(\wtv,\SbSet)$ the regret
ratio of $\SbSet$ within $\cell$. Then $\ell_k(\SbSet) = \max_{\cell\in \cellSet}\ell_k(\cell,\SbSet)$. The following
lemma is useful in computing $\ell(\cell,\SbSet)$.
\begin{lemma}
\lemlab{LemmaA}
  For each cell $\cell\in A(\HprSet)$ and for any $i\leq n$, $\kPoint_i(\wtv,\PntSet)$ (and thus $\kPoint_i(\wtv,\SbSet)$)
  is the same for all $\wtv\in \cell$.
\end{lemma}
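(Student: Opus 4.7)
The plan is to show that within any cell $\cell$, the relative ranking of every pair of points of $\PntSet$ by score is invariant, which immediately forces the $i$-th ranked point to be the same for every $\wtv\in\cell$.

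First I would fix an arbitrary cell $\cell\in A(\HprSet)$ and two preferences $\wtv_1,\wtv_2\in\cell$. For any pair of distinct points $p,q\in\PntSet$, the vector $\ray=p-q$ lies in $\dirSet$ by construction, so the hyperplane $\hpr_{\ray}:\dotp{x}{\ray}=0$ belongs to $\HprSet$. Recall that $\Score(\wtv,p)-\Score(\wtv,q)=\dotp{\wtv}{\ray}$, so the sign of this quantity determines whether $p$ beats $q$, loses to $q$, or ties $q$ at preference $\wtv$. Because $\cell$ is a maximal connected region lying in the same subset of hyperplanes of $\HprSet$, the signs $\mathrm{sgn}\dotp{\wtv_1}{\ray}$ and $\mathrm{sgn}\dotp{\wtv_2}{\ray}$ must agree: if they disagreed, the segment from $\wtv_1$ to $\wtv_2$ would cross $\hpr_{\ray}$, contradicting the assumption that $\wtv_1$ and $\wtv_2$ are in the same cell and have the same incidence pattern with $\HprSet$.

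Applying this to every pair $p,q\in\PntSet$ shows that the total pairwise comparison of scores is identical at $\wtv_1$ and $\wtv_2$. Consequently the sorted ordering of $\PntSet$ by score is identical at the two preferences, and in particular the point of rank $i$ is the same, so $\kPoint_i(\wtv_1,\PntSet)=\kPoint_i(\wtv_2,\PntSet)$. The corresponding statement for any $\SbSet\subseteq\PntSet$ follows at once, since ranks within $\SbSet$ are determined by the same pairwise comparisons restricted to $\SbSet$.

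The only subtlety is ties. If $\wtv\in\cell$ lies on some $\hpr_{\ray}$ with $\ray=p-q$, then by the cell definition every $\wtv'\in\cell$ also lies on $\hpr_{\ray}$, so the tied pair is tied throughout $\cell$; invoking the paper's convention of breaking ties in a consistent manner ensures $\kPoint_i$ is still well defined and uniform on $\cell$. I do not expect any real obstacle here — the lemma is essentially a restatement of the defining property of the arrangement $A(\HprSet)$ — the only care needed is to verify that \emph{every} relevant separating hyperplane, namely $\hpr_{p-q}$ for every pair, is indeed in $\HprSet$, which is immediate from the definition of $\dirSet$.
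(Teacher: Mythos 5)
Your proof is correct and follows essentially the same route as the paper's: both arguments reduce rank invariance to the invariance of pairwise score comparisons within a cell, using the fact that the separating hyperplane $\hpr_{p-q}$ belongs to $\HprSet$ and that points of one cell cannot lie on opposite sides of any plane of $\HprSet$ (the paper phrases this as a contradiction from a flipped pair, you phrase it directly). Your explicit treatment of ties via the consistent tie-breaking convention is a small additional care the paper leaves implicit, but the substance is the same.
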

\begin{proof}
Suppose on the contrary, there are two points
 $\wtv_1,\wtv_2\in \cell$, and $j\geq 0$ such that
 $\kPoint_j(\wtv_1,\SbSet)\neq \kPoint_j(\wtv_2,\SbSet)$. Hence, there are
 two points $\pnt_1$, $\pnt_2\in \SbSet$ such that $\dotp{\wtv_1}{\pnt_1} \geq \dotp{\wtv_1}{\pnt_2}$
 and $\dotp{\wtv_2}{\pnt_1} \leq \dotp{\wtv_2}{\pnt_2}$, and at least one of the inequalities is strict.
 Let $\hpr_{\ray}\in \HprSet$ be the plane
 that is normal to $\pnt_1-\pnt_2$ and passes through the origin. It
 divides $\Re^3$ into two halfspaces. Reference vectors $\wtv_1$, $\wtv_2$ lie in the opposite halfspaces of $\hpr_{\ray}$,
 and at least one of the $\wtv_1, \wtv_2$ lies in the open halfspace.
 However, this is a contradiction because $\wtv_1,\wtv_2$ lie in the same cell of $A(\HprSet)$
 and thus lie on the same side of each plane in $\HprSet$.
\end{proof}
Fix a cell $\cell$.
Let $\pnt_i=\kPoint_1(\wtv,\SbSet)$ and $\pnt_j=\kPoint_k(\wtv,\PntSet)$ for any
$\wtv\in \cell$ (from \lemref{LemmaA} we have that the ordering inside a cell is the same).
Furthermore, let $\hpr_j$ be the plane $\dotp{x}{\pnt_j}=1$ and let $\cell^{\downarrow}=\hpr_j\cap \cell$.
$\cell^{\downarrow}$ is a $2$D polygon and each ray $\rho$ in $\cell$ intersects $\cell^{\downarrow}$
at exactly one point $\rho^{\downarrow}$. Since $\ell(\wtv,\SbSet)$ is the same for all points on $\rho$,
$\ell_k(\cell,\SbSet)=\ell_k(\cell^{\downarrow}, \SbSet)$. Furthermore, by \lemref{LemmaA},
$\ell_k(\cell^{\downarrow}, \SbSet)$ is either $0$ for all $\wtv\in \cell^{\downarrow}$ or
\begin{align*}
\ell_k(\cell^{\downarrow},\SbSet)&=\max_{\wtv\in\cell^{\downarrow}}\frac{\Score(\wtv,\pnt_j)-\Score(\wtv,\pnt_i)}{\Score(\wtv,\pnt_j)}
=\max_{\wtv\in\cell^{\downarrow}} 1-\Score(\wtv,\pnt_i)\\
& =1-\min_{\wtv\in\cell^{\downarrow}}\dotp{\wtv}{\pnt_i}.
\end{align*}
Since $\cell^{\downarrow}$ is convex and $\dotp{\wtv}{\pnt_i}$ is a linear function, it is a minimum within
$\cell^{\downarrow}$ at a vertex of $\cell^{\downarrow}$, so we compute $\dotp{\wtv}{\pnt_i}$ for each vertex
$\wtv\in \cell^{\downarrow}$ and choose the one with the minimum value. Repeating this step for all cells of $\cellSet$
we compute $\ell_k(\SbSet)$.

By a well known result in discrete geometry \cite{agarwal2000arrangements}, the total number of vertices in
$\cell^{\downarrow}$ over all cells $\cell\in\cellSet$ is $O(\cardin{\HprSet}^2)=O(n^4)$. Furthermore,
if $b$ bits are used to represent the coordinates of each point in $\PntSet$, each vertex of $\cell^{\downarrow}$
requires $O(b)$ bits. Finally, the algorithm extends to higher dimensions in a straightforward manner.
The total running time in $\Re^d$ is $O(n^{2d-1})$. We thus conclude the following.
\begin{lemma}
 \lemlab{polyt}%
 Given a set $\PntSet$ of $n$ points in $\Re^d$ and a subset $\SbSet\subseteq \PntSet$,
 $\ell_k(\SbSet)$ can be computed in $O(n^{2d-1})$ time.
\end{lemma}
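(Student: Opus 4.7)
The plan is to generalize the $d=3$ construction already sketched in the excerpt: build a hyperplane arrangement whose cells witness a fixed ranking of the points, then on each cell reduce the regret-ratio computation to a linear program over a convex polytope whose optimum lies at a vertex. First, construct in $\Re^d$ the family $\HprSet = \{\hpr_{p-q} \mid p,q \in \PntSet, p \neq q\} \cup \{x_i = 0 \mid 1 \leq i \leq d\}$, so that $\cardin{\HprSet} = O(n^2)$. By the same argument as in \lemref{LemmaA} (which is dimension-independent, since each swap between two points $\pnt_1, \pnt_2$ is witnessed by the hyperplane normal to $\pnt_1 - \pnt_2$), the entire sorted ordering of $\PntSet$ by $\Score(\wtv, \cdot)$ is constant on each cell $\cell$ of the arrangement $A(\HprSet)$, and in particular $\kPoint_1(\wtv,\SbSet)$ and $\kPoint_k(\wtv,\PntSet)$ are well-defined points $\pnt_i, \pnt_j$ of $\cell$.

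Next, restrict attention to the cells $\cellSet \subseteq A(\HprSet)$ lying in $\posPoints$. By scale invariance, $\ell_k(\wtv,\SbSet)$ depends only on the direction of $\wtv$, so one may replace each cone $\cell$ with the polytope $\cell^{\downarrow} = \cell \cap \{\dotp{x}{\pnt_j} = 1\}$. On $\cell^{\downarrow}$ the regret specializes to the affine function $1 - \dotp{\wtv}{\pnt_i}$ (or is identically zero, if $\pnt_i = \pnt_j$), so $\ell_k(\cell,\SbSet) = 1 - \min_{\wtv \in \cell^{\downarrow}} \dotp{\wtv}{\pnt_i}$ is attained at a vertex of $\cell^{\downarrow}$. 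The overall regret ratio is then obtained by maximizing over $\cell \in \cellSet$; enumerating vertices of every $\cell^{\downarrow}$ is equivalent to enumerating vertices of the $(d-1)$-dimensional arrangement induced by $\HprSet$ on a transversal hyperplane, and a standard result on hyperplane arrangements in $\Re^{d-1}$ \cite{agarwal2000arrangements} bounds this total by $O(\cardin{\HprSet}^{d-1}) = O(n^{2(d-1)})$.

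To turn this enumeration into an algorithm, traverse the arrangement cell by cell (for instance, via a sweep that crosses one hyperplane at a time). Each crossing of $\hpr_{p-q}$ merely swaps two adjacent points in the sorted order, so maintaining the current identity of $\pnt_i$ and $\pnt_j$, as well as a data structure (for example, a sorted array augmented with pointers) that allows the swap and the identification of the new top-$1$ in $\SbSet$ and top-$k$ in $\PntSet$, costs $O(n)$ per vertex; evaluating $\dotp{\wtv}{\pnt_i}$ at each vertex is $O(d)$. Multiplying the $O(n^{2(d-1)})$ vertex count by the $O(n)$ per-vertex work yields the claimed $O(n^{2d-1})$ bound. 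A brief remark on bit-complexity — each vertex is the intersection of $d-1$ hyperplanes from $\HprSet$ with the transversal $\{\dotp{x}{\pnt_j}=1\}$, so its coordinates are rational numbers expressible with $O(db)$ bits if each input point uses $b$ bits — confirms that the arithmetic stays polynomial, which incidentally also gives NP-membership of the decision version.

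The main obstacle is the bookkeeping required to ensure that one does not pay more than $O(n)$ per cell: a naive algorithm would re-sort all of $\PntSet$ from scratch at every cell and inflate the running time by a factor of $n \log n$. Organizing the traversal so that each hyperplane crossing performs a constant number of rank swaps, and having both $\SbSet$ and $\PntSet$ share an annotated sorted list so that the new $\pnt_i, \pnt_j$ can be read off in $O(1)$ (or $O(\log n)$) after each swap, is the technical step that keeps the bound at $O(n^{2d-1})$ rather than $O(n^{2d} \log n)$; the rest of the argument is a direct lift of the three-dimensional analysis already spelled out.
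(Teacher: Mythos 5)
Your proposal follows essentially the same route as the paper: the same arrangement $A(\HprSet)$ of $O(n^2)$ hyperplanes through the origin, the same use of \lemref{LemmaA} to fix the ranking per cell, the same scale-invariant reduction to the section $\cell^{\downarrow}$ where the regret becomes affine and is optimized at a vertex, and the same $O(\cardin{\HprSet}^{d-1})$ vertex count multiplied by $O(n)$ work per cell to reach $O(n^{2d-1})$. The extra detail you give on traversing the arrangement while maintaining the sorted order by swaps is a reasonable way to make the per-cell cost explicit, which the paper leaves implicit, but it does not change the argument.
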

An immediate corollary of the above lemma is the following:
\begin{corollary}
The \Probl is in $\NP$.
\end{corollary}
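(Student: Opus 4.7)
The plan is to exhibit a polynomial-time verifier, so that NP membership follows immediately from the definition. The natural certificate for an instance $(\PntSet,k,\epsilon,r)$ is simply the candidate subset $\SbSet \subseteq \PntSet$ of size at most $r$. Its description length is clearly polynomial in the input size, since each point of $\SbSet$ can be encoded by its index in $\PntSet$ (or, equivalently, by its $b$-bit coordinate representation).

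Given $\SbSet$, the verifier first checks that $\cardin{\SbSet}\leq r$, which is trivial. The only nontrivial step is to confirm that $\ell_k(\SbSet,\PntSet)\leq \epsilon$. Here I would invoke \lemref{polyt} directly: it gives an $O(n^{2d-1})$-time procedure for computing $\ell_k(\SbSet)$ exactly, and the bit-complexity bound established in the discussion preceding the lemma (each vertex of $\cell^{\downarrow}$ is representable in $O(b)$ bits) ensures that the computation itself is carried out on numbers of polynomial bit-length. Comparing the computed value to $\epsilon$ then decides acceptance.

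Since $d$ is a fixed constant in the statement of the problem ($d\geq 3$), $O(n^{2d-1})$ is polynomial in $n$, and so the entire verification runs in polynomial time. The only point requiring a small amount of care is the rationality/bit-length of the quantity $\ell_k(\SbSet)$ and of the vertex coordinates used in its computation, which is why we rely on the explicit bit-complexity claim from \lemref{polyt} rather than treating it as a black box. No obstacle is expected beyond this bookkeeping; the corollary is essentially a direct consequence of the polynomial-time evaluability of the regret ratio.
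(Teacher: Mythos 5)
Your argument is exactly the paper's: the candidate set $\SbSet$ serves as the certificate, and the verifier invokes \lemref{polyt} to evaluate $\ell_k(\SbSet)$ in $O(n^{2d-1})$ time (polynomial for fixed $d$, with the $O(b)$ vertex bit-complexity handling the arithmetic), then compares against $\epsilon$. The proposal is correct and matches the paper's proof, including the attention to bit-length.
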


\subsection{NP-Hardness Reduction}
We first show the hardness for $k = 2$.
Recall that a preference vector has only non-negative coordinates.
For simplicity, however, we first consider
all points in $\Re^3$ as preference vectors and define
$\ell_k(\SbSet)=\max_{\wtv\in \Re^3} \ell_k(\wtv,\SbSet)$,
and later we describe how to restrict the preference vectors to $\posPoints$.

Recall that the \Probl for $\epsilon=0$ and $k=2$ asks:
Is there a subset $\SbSet \subseteq \PntSet$ of size $r$ such that
in every direction $\wtv$, the point in $\SbSet$ with the highest score along
$\wtv$, i.e., $\kPoint_1(\wtv,S)$, has score at least as much as that of the second best in $\PntSet$
along $\wtv$, i.e., of the point $\kPoint_2(\wtv,\PntSet)$?

Let $\Pi$ be a strictly convex polytope in $\Re^3$. The \emph{$1$-skeleton} of $\Pi$ is the graph
formed by the vertices and edges of $\Pi$. Given $\Pi$ and an integer $r>0$, the
\emph{convex-polytope vertex-cover} (CPVC) asks whether the $1$-skeleton of $\Pi$ has a vertex cover
of size at most $r$, i.e., whether there is a subset $C$ of vertices of $\Pi$ of size $r$ such that
every edge is incident on at least one vertex of $C$. The CPVC problem is $\NP$-Complete, as shown by Das and Goodrich
\cite{dg-copcp-97}.
Given $\Pi$ with $V$ as the set of its vertices, we construct an instance of the \Probl
for $k=2$, as follows. First we translate $\Pi$ so that the origin lies inside $\Pi$. Next we set $\PntSet=V$.
The next lemma proves the $\NP$-hardness of the \Probl for $k=2$ and $\epsilon=0$.

\begin{lemma}
A subset $\SbSet\subseteq V$ is a vertex cover of $\Pi$ if and only if $\SbSet$ is a
$(2,0)$-regret set for $\PntSet$.
\end{lemma}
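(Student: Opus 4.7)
Fix any preference $\wtv\in\Re^3$ and let $u=\kPoint_1(\wtv,\PntSet)$. If $u\in\SbSet$, then trivially $\Score(\wtv,\SbSet)\geq \Score_2(\wtv,\PntSet)$ and there is no regret in direction $\wtv$. Otherwise, my plan is to exhibit a neighbor $w$ of $u$ in the $1$-skeleton of $\Pi$ with $\dotp{\wtv}{w}\geq \Score_2(\wtv,\PntSet)$; since $(u,w)$ is then an edge of $\Pi$ and $u\notin\SbSet$, the vertex-cover property forces $w\in\SbSet$, closing the argument. To produce such a $w$ I would invoke the standard LP fact that any vertex which is a local maximum of a linear functional along the $1$-skeleton of a convex polytope is in fact a global maximum. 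Applying this to the top-$2$ super-level set $L=\{x\in V:\dotp{\wtv}{x}\geq \Score_2(\wtv,\PntSet)\}$ shows that $L$ is connected in the $1$-skeleton of $\Pi$: a component of $L$ missing $u$ would contain a strict local maximum of $\dotp{\wtv}{\cdot}$ different from $u$, contradicting the local-equals-global principle. Since both $u$ and at least one top-$2$ achiever lie in $L$, the vertex $u$ must have a neighbor inside $L$, giving the desired $w$.

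\textbf{Reverse direction ($(2,0)$-regret set $\Rightarrow$ vertex cover).} For each edge $e=(u,v)$ of $\Pi$, my plan is to produce a direction $\wtv_e\in\Re^3$ that forces $\SbSet$ to contain $u$ or $v$. Because $\Pi$ is strictly convex, the normal cone of the $1$-face $e$ has nonempty relative interior, so some $\wtv_e$ in that interior satisfies $\dotp{\wtv_e}{u}=\dotp{\wtv_e}{v}>\dotp{\wtv_e}{x}$ for every other vertex $x\in V$. Consequently $\Score_1(\wtv_e,\PntSet)=\Score_2(\wtv_e,\PntSet)=\dotp{\wtv_e}{u}$, and the $(2,0)$-regret hypothesis $\Score(\wtv_e,\SbSet)\geq \Score_2(\wtv_e,\PntSet)$ can hold only if $\SbSet\cap\{u,v\}\neq\emptyset$. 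Ranging over all edges of $\Pi$ yields the vertex-cover property.

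\textbf{Main obstacle.} The delicate step is handling ties among top-$1$ or top-$2$ achievers and verifying that a top-$2$ achiever is effectively ``adjacent'' to the top-$1$ by a single polytope edge. Strict convexity of $\Pi$ is what makes this go through cleanly: the face on which any linear functional attains its maximum is a simplex, so tied top-$1$ vertices already pairwise span edges of $\Pi$, and the local-max-equals-global-max argument over the super-level set $L$ is unobstructed. With these observations, the tie cases reduce to the main argument above, and the biconditional follows.
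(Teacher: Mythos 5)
Your proof is correct in substance, and while the reverse direction coincides with the paper's (a direction in the relative interior of the normal cone of an edge makes its two endpoints the strict top two, forcing one of them into $\SbSet$), your forward direction is argued by a different mechanism. The paper shows directly that the rank-two point $\kPoint_2(\wtv,\PntSet)$ is a neighbor of the top vertex $q=\kPoint_1(\wtv,\PntSet)$, by sweeping a supporting plane normal to $\wtv$ toward the origin and treating the tie case via the face contained in that plane; then $q\notin\SbSet$ and the cover property give $N_q\subseteq\SbSet$, hence $\Score(\wtv,\SbSet)\geq\Score_2(\wtv,\PntSet)$. You instead prove the weaker but sufficient statement that \emph{some} neighbor of $u=\kPoint_1(\wtv,\PntSet)$ scores at least $\Score_2(\wtv,\PntSet)$, by showing that the super-level set $L=\{x\in V:\dotp{\wtv}{x}\geq\Score_2(\wtv,\PntSet)\}$ is connected in the $1$-skeleton, using the simplex-method fact that a local maximum of a linear functional over the vertices of a convex polytope is a global maximum. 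Both arguments rest on the same convexity phenomenon; yours trades the geometric sweep for the LP formulation, which avoids pinning down which vertex is actually second and carries over verbatim to $d>3$.

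One justification needs repair. You assert that strict convexity makes the face on which $\dotp{\wtv}{\cdot}$ is maximized a simplex, so that tied top-$1$ vertices pairwise span edges; that is not implied (the optimal face can be, say, a quadrilateral facet whose diagonal vertices are not adjacent), and the maximum vertex of a component of $L$ is in general only a weak, not strict, local maximum. Neither point is actually needed: a weak local maximum (all neighbors have score at most its own, which is what the component maximum satisfies, since neighbors outside $L$ score below $\Score_2(\wtv,\PntSet)$) is already a global maximum; and all global maximizers lie on a single face of $\Pi$, whose vertex--edge graph is connected and all of whose vertices lie in $L$ because their scores equal $\Score_1(\wtv,\PntSet)\geq\Score_2(\wtv,\PntSet)$. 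Hence every component of $L$ contains a global maximizer and all such maximizers lie in one component, so $L$ is connected, including in the tie case. With that substitution your argument is complete.
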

\begin{proof}
If $\SbSet$ is a vertex cover of $\Pi$, we show that $\SbSet$ is also a $(2,0)$-regret set.
Take a vector $\wtv\in\Re^3$ and assume that $q=\kPoint_1(\wtv,\PntSet)$
(if there is more than one point with rank one, we can let $q$ be any one of them).
If $q\in \SbSet$ then obviously $\Score_1(\wtv,\SbSet)=\Score_1(\wtv,\PntSet)\geq \Score_2(\wtv,\PntSet)$.
Now, assume that $q\notin \SbSet$.
Let $(q,q_1),\ldots, (q,q_g)$ be the edges in $\Pi$ incident on $q$. Set $N_q=\{q_i\mid 1\leq i\leq g\}$.
Since $\SbSet$ is a vertex cover of $\Pi$ and $q\notin \SbSet$, $N_q\subseteq \SbSet$.
We claim that $\kPoint_2(\wtv,\PntSet)\in N_q$, which implies that $\Score(\wtv,\SbSet)\geq \Score_2(\wtv,\PntSet)$.
Hence, $\SbSet$ is a $(2,0)$-regret set.

Indeed, since $\Pi$ is convex, and $q$ is maximal along direction $\wtv$,
the plane $\hpr$ on $q$ vertical to $\wtv$ is a supporting hyperplane for $\Pi$. A plane $\hpr'$ parallel to $\hpr$ is translated toward the origin starting with its initial position at $\hpr$. There are two cases. In the first case, where $q$ and $\kPoint_2(\wtv,\PntSet)$ have the same score, they belong to the same face of $\Pi$ that must be contained in $\hpr$ itself --- in this case $\hpr$ also contains a point from $N_q$, since every face containing $q$ and points other than $q$ must contain a $1$ dimensional face as well, and therefore a point in $N_q$. In the second case, as $\hpr'$ is translated, it must first hit one of the
neighbors of $q$, by convexity. As a result, in any case, there will be a point in $N_q$ that
gives the rank-two point on $\wtv$.

Next, if $\SbSet$ is a $(2,0)$-regret set, we show that $\SbSet$ is a vertex cover of $\Pi$.
Suppose to the contrary $\SbSet$ is not a vertex cover of $\Pi$, i.e., there is an edge $(q_1,q_2)$
in $\Pi$ but $q_1, q_2\notin \SbSet$.
Since $\Pi$ is a strictly convex polytope, there is a plane $\hpr$ tangent to $\Pi$ at the edge $(q_1,q_2)$
that does not contain any other vertex of $\Pi$.
If we take the direction $\wtv$ normal to $\hpr$ then $\kSet_2(\wtv,\PntSet)=\{q_1, q_2\}$.
If $q_1, q_2\notin \SbSet$ then $\Score_1(\wtv,\SbSet)<\Score_2(\wtv,\PntSet)$, which contradicts the
assumption that $\SbSet$ is a $(2,0)$-regret set of $\PntSet$.
\end{proof}

\mparagraph{Restricting to $\posPoints$}
In order to show that the \Probl is $\NP$-hard
even when preferences are restricted to $\posPoints$, polytope
$\Pi$ needs to have two additional properties:
\begin{enumerate}[(i)]
\item All vertices of $\Pi$ must lie in the first orthant.
\item For any edge $(v_1, v_2)$ of $\Pi$, where $v_1, v_2$ are vertices of $P$, there is
a direction $\wtv\in \posPoints$ such that $v_1, v_2$ are the top vertices in direction $\wtv$.
\end{enumerate}
It is easy to satisfy property (i) because the translation of the vertices of a polytope
does not
change the rank of the points in any direction. On the other hand, property (ii)
is not guaranteed by the construction in \cite{dg-copcp-97}.

We show that there is an affine transformation of $\Pi$ that can be computed and applied in
polynomial time, to get a polytope $\Pi'$ with the same combinatorial structure as $\Pi$, but
that also satisfies properties (i), and (ii). The fact that the polytope has the same
combinatorial structure implies that the underlying graph is the same, and therefore a vertex
cover will also be a $(2,0)$-regret set of $\Pi$.
The details of the transformation can be found in Appendix~\ref{Ap2}.
The first part of the $\NP$-hardness proof is the same with the case of all directions in $\Re^3$,
if $\SbSet$ is a vertex cover of $\Pi'$ then it is also a $(2,0)$-regret set.
Using property (ii) of $\Pi'$, it is straightforward to show the other direction, as well.

\mparagraph{Choosing $\epsilon > 0$}
While the above suffices to prove the hardness of the \Probl for $\epsilon=0$,
it is possible that when $\epsilon > 0$ the problem is strictly easier.
However, we show the stronger result that the \Probl
is $\NP$-hard even when $\epsilon$ is required to be strictly positive.
In order to get the $\NP$-hardness of the \Probl for $\epsilon>0$ and $k=2$, we
find a small enough strictly positive  value of $\epsilon$ with bounded bit complexity such that
any $(2,\epsilon)$-regret set is also a $(2,0)$-regret set, and vice versa.
For each cell $\cell\in \cellSet$, we take a direction $\wtv_{\cell}\in \cell$ and let
$\lambda_{\cell}=1-\Score_3(\wtv_{\cell},\PntSet)/\Score_2(\wtv_{\cell},\PntSet)>0$. By defining
$\epsilon=\frac{1}{2}\min_{\cell}\lambda_c$ we can conclude the result.

\mparagraph{Larger values of $k$}
By making $k-1$ copies of each point in the above construction it is straightforward
to show that the \Probl is $\NP$-complete for any $k\geq 2$ and $d\geq 3$.

\begin{theorem}
 The \Probl is $\NP$-complete for $d\geq 3$ and for $k\geq 2$.
\end{theorem}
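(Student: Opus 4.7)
The plan is to package together the lemmas and arguments already developed in this section. First, membership in $\NP$ is immediate from the corollary following \lemref{polyt}: given a guessed subset $\SbSet$, one can compute $\ell_k(\SbSet)$ in polynomial time and compare it to $\epsilon$. So the remaining task is to establish $\NP$-hardness uniformly over all $d\geq 3$ and $k\geq 2$.

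For the base case $d=3$, $k=2$, I will invoke the reduction from Convex-Polytope Vertex-Cover already constructed: using the affine transformation from Appendix~\ref{Ap2}, the vertex set of the input polytope $\Pi$ produces a point set $\PntSet\subset \posPoints$ whose $(2,0)$-regret sets are precisely the vertex covers of the $1$-skeleton. To lift this from $\epsilon=0$ to a strictly positive but polynomially representable $\epsilon$, I take the slack $\lambda_\cell=1-\Score_3(\wtv_\cell,\PntSet)/\Score_2(\wtv_\cell,\PntSet)>0$ at a test direction in each cell of $A(\HprSet)$ and set $\epsilon=\tfrac{1}{2}\min_\cell \lambda_\cell$. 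Since there are polynomially many cells and the coordinates of $\PntSet$ have bounded bit complexity, $\epsilon$ has polynomial bit complexity as well.

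To extend to arbitrary $k\geq 2$ at $d=3$, I will replace each vertex $v\in V$ by $k-1$ coincident copies, forming a point set $\PntSet'$. For any direction $\wtv$, the top $k-1$ positions of $\PntSet'$ are all copies of the best original vertex $v_1$, and position $k$ is a copy of the second-best $v_2$; hence $\Score_k(\wtv,\PntSet')=\Score_2(\wtv,V)$. Placing multiple copies of the same vertex into a candidate $\SbSet'$ is wasteful, so an optimal solution corresponds to a subset $T\subseteq V$, and the $(k,0)$-regret condition on $\SbSet'$ reduces to the $(2,0)$-regret condition on $T$. The $\epsilon$ argument carries over verbatim using the same $\lambda_\cell$ computed in the uncopied set $V$: any $T$ failing to cover an edge $(v_1,v_2)$ incurs regret at least $\lambda_\cell$ in the direction $\wtv_\cell$ where $\{v_1,v_2\}$ are the two top original vertices.

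Finally, to cover $d>3$, I will embed the $3$D instance in $\Re^d$ by appending $d-3$ zero coordinates to every point. Every preference $\wtv\in\posPoints\subset\Re^d$ then induces the same ranking on the padded point set as its projection $(\wtv_1,\wtv_2,\wtv_3)$ induces on the original, so the $d$-dimensional instance admits a $(k,\epsilon)$-regret set of size $r$ if and only if the $3$D instance does. The main technical nuisance is the $\epsilon$-bookkeeping in the copy step, since duplicate scores could a priori collapse the slack; the point is that by measuring the slack on the uncopied vertices $V$ (whose relevant pairwise scores are distinct by the properties from Appendix~\ref{Ap2}), the quantity $\lambda_\cell$ remains strictly positive with polynomial bit complexity. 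Combined with $\NP$ membership, this gives $\NP$-completeness for every $d\geq 3$ and $k\geq 2$.
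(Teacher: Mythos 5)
Your proposal is correct and follows essentially the same route as the paper: membership via the polynomial-time computation of $\ell_k(\SbSet)$, hardness via the convex-polytope vertex-cover reduction with the Appendix~\ref{Ap2} transformation, the cell-wise slack $\lambda_{\cell}$ to push $\epsilon$ strictly above zero, and point duplication for $k>2$. You merely spell out two steps the paper leaves implicit (the $k-1$-fold copies making $\Score_k$ equal to $\Score_2$ of the original vertices, and zero-padding for $d>3$), and those details check out.
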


\section{Coreset-based Approximation} \seclab{coreSet}

In this section, we present an approximation scheme for the \Probl using coresets.
The general idea of a coreset is to approximately preserve some desired characteristics of
the full data set using only a tiny subset~\cite{agarwal2005geometric}.
The particular geometric characteristic most relevant to our problem is the \emph{extent} of
the input data in any direction, which can be formalized as follows. Given a set of points
$\PntSet$ and a direction $\wtv \in \allPoints$, the \emph{directional width} of $\PntSet$
along $\wtv$, denoted $\wid(\wtv,\PntSet)$, is the distance between the two supporting
hyperplanes of $\allPoints$, one in direction $\wtv$ and the other in direction $- \wtv$.
The connection between $k$-regret and the directional width comes from the fact that the
supporting hyperplane in a direction $\wtv$ is defined by the extreme point in that direction,
and its distance from the origin is simply its score. Therefore, we have the equality:
	$$\wid(\wtv,\PntSet) \:=\: \Score(\wtv,\PntSet)+\Score(-\wtv,\PntSet).$$

We use coresets that approximate directional width to approximate $k$-regret sets.
In particular, a subset $\SbSet\subseteq \PntSet$ is called an \emph{$\epsilon$-kernel} coreset if
$\wid(\wtv,\SbSet)\geq (1-\epsilon)\wid(\wtv,\PntSet)$, for all directions $\wtv\in \allPoints$.

\begin{lemma}
\lemlab{LemCoreSet}
If $\SbSet\subseteq \PntSet$ is an $\epsilon$-kernel coreset of $\PntSet$ then $\SbSet$
is also $(1,\epsilon)$-regret set of $\PntSet$.
\end{lemma}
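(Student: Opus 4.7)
The plan is a short direct calculation. To certify that $\SbSet$ is a $(1,\epsilon)$-regret set, I need to show $\ell_1(\wtv,\SbSet,\PntSet) \leq \epsilon$ for every $\wtv \in \posPoints$. Unfolding the definition, this is equivalent to establishing the one-sided inequality $\Score(\wtv,\SbSet) \geq (1-\epsilon)\Score(\wtv,\PntSet)$; the $\max\{0,\cdot\}$ in the numerator is harmless because $\SbSet \subseteq \PntSet$ already forces $\Score(\wtv,\SbSet) \leq \Score(\wtv,\PntSet)$.

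I would start from the $\epsilon$-kernel hypothesis together with the identity $\wid(\wtv,\PntSet) = \Score(\wtv,\PntSet) + \Score(-\wtv,\PntSet)$ stated just above the lemma, applied to both $\PntSet$ and $\SbSet$. This gives
$$\Score(\wtv,\SbSet) + \Score(-\wtv,\SbSet) \;\geq\; (1-\epsilon)\bigl(\Score(\wtv,\PntSet) + \Score(-\wtv,\PntSet)\bigr).$$
Since $\SbSet \subseteq \PntSet$, the reverse-direction score satisfies $\Score(-\wtv,\SbSet) \leq \Score(-\wtv,\PntSet)$; moving this term to the right-hand side and cancelling yields
$$\Score(\wtv,\SbSet) \;\geq\; (1-\epsilon)\Score(\wtv,\PntSet) - \epsilon\,\Score(-\wtv,\PntSet).$$

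The only remaining step, and the key conceptual content of the lemma, is to discard the residual $\epsilon\,\Score(-\wtv,\PntSet)$. This is exactly where the non-negativity assumptions enter: because $\wtv \in \posPoints$ and $\PntSet \subset \posPoints$, we have $\dotp{\wtv}{\pnt} \geq 0$ for every $\pnt \in \PntSet$, hence $\Score(-\wtv,\PntSet) = -\min_{\pnt \in \PntSet}\dotp{\wtv}{\pnt} \leq 0$. Consequently $-\epsilon\,\Score(-\wtv,\PntSet) \geq 0$, and the desired bound $\Score(\wtv,\SbSet) \geq (1-\epsilon)\Score(\wtv,\PntSet)$ follows for every $\wtv \in \posPoints$.

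The main obstacle, conceptually, is reconciling a \emph{symmetric} guarantee (widths treat $\wtv$ and $-\wtv$ alike) with the \emph{one-sided} guarantee that regret requires (only the max-score direction matters). The restriction of both the data and the preferences to the positive orthant is precisely the hinge that lets the $-\wtv$ side be dropped; without this restriction, the coreset bound would degrade by an additive term proportional to $\Score(-\wtv,\PntSet)$, and the implication would fail in general.
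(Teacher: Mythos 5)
Your proof is correct and follows essentially the same route as the paper's: both combine the $\epsilon$-kernel inequality with $\Score(-\wtv,\SbSet)\leq\Score(-\wtv,\PntSet)$ and then use $\Score(-\wtv,\PntSet)\leq 0$ (from the positive-orthant restriction) to drop the $-\wtv$ contribution. The only difference is cosmetic bookkeeping — you discard the residual $\epsilon\,\Score(-\wtv,\PntSet)$ at the end, while the paper folds the same fact in earlier via $\wid(\wtv,\PntSet)\leq\Score(\wtv,\PntSet)$.
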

\begin{proof}
If $\SbSet$ is  an $\epsilon$-kernel coreset of $\PntSet$ then
$\wid(\wtv,\PntSet)-\wid(\wtv,\SbSet)\leq \epsilon\wid(\wtv,\PntSet)\leq \epsilon\Score(\wtv,\PntSet)$.
The last inequality follows because $\Score(-\wtv,\PntSet)\leq 0$. Furthermore
$\Score(-\wtv,\SbSet)\leq \Score(-\wtv,\PntSet)$. We thus have
\begin{align*}
\Score(\wtv,\PntSet)-\Score(\wtv,\SbSet) &= \Score(\wtv,\PntSet)+\Score(-\wtv,\PntSet)-\Score(\wtv,\SbSet)-\Score(-\wtv,\PntSet)\\
& \leq \wid(\wtv,\PntSet)-\Score(\wtv,\SbSet)-\Score(-\wtv,\SbSet)\\
& = \wid(\wtv,\PntSet)-\wid(\wtv,\SbSet)\\
& \leq \epsilon\Score(\wtv,\PntSet).
\end{align*}
Hence, $\Score(\wtv,\SbSet)\geq (1-\epsilon)\Score(\wtv,\PntSet)$.
\end{proof}
We use the results of \cite{agarwal2004approximating, chan2004faster}
that compute small $\epsilon$-kernel coresets efficiently, as well as allows dynamic updates, and
prove the following result.

\begin{theorem}
\label{ThmCoreRegret}
Given a set $\PntSet$ of $n$ points in $\Re^d$, $\epsilon>0$ and an integer $k>0$, we can compute
in time $O(n+\frac{1}{\epsilon^{d-1}})$
a subset $\SbSet\subseteq \PntSet$ of size $O(\frac{1}{\epsilon^{(d-1)/2}})$ whose $k$-regret ratio
is at most $\epsilon$, i.e. $\ell_k (\SbSet, \PntSet)\leq \epsilon$.
\end{theorem}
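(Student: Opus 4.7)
The plan is to invoke the lemma just proved together with an off-the-shelf $\epsilon$-kernel construction. Concretely, I would run the coreset algorithm of Chan~\cite{chan2004faster} (or Agarwal--Har-Peled--Varadarajan~\cite{agarwal2004approximating}) on the input $\PntSet$, which in time $O(n + 1/\epsilon^{(d-1)/2})$ preprocessing plus $O(1/\epsilon^{d-1})$ for the kernel extraction (total $O(n+1/\epsilon^{d-1})$) returns a subset $\SbSet\subseteq \PntSet$ of size $O(1/\epsilon^{(d-1)/2})$ that is an $\epsilon$-kernel, i.e.\ $\wid(\wtv,\SbSet)\geq (1-\epsilon)\wid(\wtv,\PntSet)$ for every direction $\wtv\in\allPoints$.

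Next I would apply \lemref{LemCoreSet}: the extent-preserving property of $\SbSet$ immediately yields that $\SbSet$ is a $(1,\epsilon)$-regret set of $\PntSet$, so $\ell_1(\SbSet,\PntSet)\leq \epsilon$. Finally, I would invoke the monotonicity observation made right after the definition of $(k,\epsilon)$-regret sets in the introduction --- namely that a $(1,\epsilon)$-regret set is automatically a $(k,\epsilon)$-regret set for every $k\geq 1$, since increasing $k$ can only decrease $\Score_k(\wtv,\PntSet)$ in the denominator while the numerator is unchanged (more precisely, it can only shrink when the max over $k\geq 1$ is taken into $\ell_k$). This gives $\ell_k(\SbSet,\PntSet)\leq \ell_1(\SbSet,\PntSet)\leq \epsilon$, completing the proof.

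There is essentially no obstacle: all the work has already been done by \lemref{LemCoreSet} and by the existing coreset literature. The only mild subtlety worth flagging in the write-up is that the kernel machinery in~\cite{agarwal2004approximating,chan2004faster} is stated for arbitrary directions in $\Sphere^{d-1}$, whereas \lemref{LemCoreSet} only needs the width bound for directions $\wtv\in\posPoints$; since the kernel guarantee is stronger than what we need, restricting to the positive orthant is free. No dependence on $k$ appears in either the size or the running time because the coreset guarantee is uniform over all preference vectors, and the reduction from $(k,\epsilon)$- to $(1,\epsilon)$-regret is direction-by-direction.
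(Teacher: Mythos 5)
Your proposal is correct and follows essentially the same route as the paper: compute an $\epsilon$-kernel via Chan's algorithm in $O(n+1/\epsilon^{d-1})$ time with size $O(1/\epsilon^{(d-1)/2})$, apply \lemref{LemCoreSet} to get a $(1,\epsilon)$-regret set, and conclude via the observation that any $(1,\epsilon)$-regret set is a $(k,\epsilon)$-regret set for all $k\geq 1$. The extra remarks on restricting directions to $\posPoints$ and on the $\ell_k\leq\ell_1$ monotonicity are fine but not substantively different from the paper's argument.
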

\begin{proof}
We choose $\SbSet$ as an $\epsilon$-kernel coreset of $\PntSet$. By \lemref{LemCoreSet},
$\SbSet$ is a $(1,\epsilon)$-regret set of $\PntSet$ and thus also a $(k,\epsilon)$-regret set
of $\PntSet$ for any $k\geq 1$. Chan \cite{chan2004faster} has described an algorithm for computing
$\epsilon$-kernel of size $O(\frac{1}{\epsilon^{(d-1)/2}})$ in time $O(n+\frac{1}{\epsilon^{d-1}})$.
Hence, the theorem follows.
\end{proof}

We conclude this section by making two remarks:

The size of $\SbSet$ in the preceding theorem is asymptotically optimal: there exist point sets for
which no smaller subset satisfies this property. The size optimality follows from the construction
described below for $2$D---generalization to higher dimensions is straightforward.
Fix a positive integer $k\leq n\sqrt{\epsilon}$. Consider a set of $n/k$ points $\PntSet\subset \Sphere_+^{1}$
uniformly distributed on the unit circle, on the arc in the first quadrant.
Make $k$ copies of these points to get a set of $n$ points.
This construction ensures that the top-$k$ points in any direction $\wtv$ have exactly the same score;
that is, $\Score_1(\wtv,\PntSet)=\Score_k(\wtv,\PntSet)$ for all $\wtv\in \Re^2$. It is know that
when points are uniformly distributed on a circle, an $\epsilon$-kernel coreset gives the optimum
(asymptotically) coreset with size $\Theta(\frac{1}{\sqrt{\epsilon}})$. Since $\kPoint_1(\wtv,\PntSet')$
and $\kPoint_k(\wtv,\PntSet')$ lie at the same position for any $\wtv\in \Re^2$, we also have that
the optimal $(k,\epsilon)$-regret set has size $\Theta(\frac{1}{\sqrt{\epsilon}})$.

The set $\SbSet$ can also be maintained under insertion/deletion of points in $\PntSet$ in time
 $O(\frac{\log^{d}n}{\epsilon^{d-1}})$ per update.
The dynamic update performance follows from the construction in~\cite{agarwal2004approximating}.

\section{Regret Approximation using Hitting Sets} 	\seclab{approxAlg}
Theorem \ref{ThmCoreRegret} shows that a $(k,\epsilon)$-regret set of size
$O(\frac{1}{\epsilon^{(d-1)/2}})$ can be computed quickly. However, given $\PntSet$
and $\epsilon>0$, there may be a $(k,\epsilon)$-regret set of much smaller size.
In this section, we describe an algorithm that computes a $(k,\epsilon)$-regret set of size close
to $\minSz:=\minSize{\epsilon}$, the minimum size of a $(k,\epsilon)$-regret set,
by formulating the \Probl as a hitting-set problem.

A range space (or set system) $\Sigma = (\mathsf{X},\Sets)$ consists of a set $\mathsf{X}$
of objects and a family $\Sets$ of subsets of $\mathsf{X}$. A subset $H\subseteq \mathsf{X}$
is a hitting set of $\Sigma$ if $H\cap R\neq \emptyset$ for all $R\in \Sets$. The hitting
set problem asks to compute a hitting set of the minimum size. The hitting set problem is a
classical $\NP$-Complete problem, and a well-known greedy $O(\log n)$-approximation algorithm
is known.

We construct a set system $\Sigma=(\PntSet, \Sets)$ such that a subset $\SbSet\subseteq \PntSet$
is a $(k,\epsilon)$-regret set if and only if $\SbSet$ is a hitting set of $\Sigma$.
We then use the greedy algorithm to compute a small-size hitting set of $\Sigma$.
A weakness of this approach is that the size of $\Sets$ could be very large and the greedy
algorithm requires $\Sets$ to be constructed explicitly. Consequently, the approach is
expensive even for moderate inputs say $d\sim 5$.

Inspired by the above idea, we propose a bicriteria approximation algorithm: given $\PntSet$
and $\epsilon>0$, we compute a subset $\SbSet\subseteq \PntSet$ of size $O(\minSz\log\minSz)$
that is a $(k,2\epsilon)$-regret set of $\PntSet$; the constant $2$ is not important, it can be
made arbitrarily small at the cost of increasing the running time.
By allowing approximations to both the error and size concurrently, we can construct a much smaller range space
and compute a hitting set of this range space.

The description of the algorithm is simpler if we assume the input to be well conditioned.
We therefore transform the input set, without affecting an RMS, so that the score of the topmost point
does not vary too much with the choice of preference vectors, i.e., the ratio
$\frac{\max_{\wtv\in \posPoints}\Score(\wtv,\PntSet)}{\min_{\wtv\in \posPoints}\Score(\wtv,\PntSet)}$ is
bounded by a constant that depends on $d$.

We transform $\PntSet$ into another set $\PntSet'$, so that (i) for any
$\wtv\in \posPoints$, $\kPoint_1(\wtv,\PntSet')$ does not lie close to the origin and
(ii) for any $(k,\epsilon)$-regret set $\SbSet\subseteq \PntSet$, the subset
$\SbSet'\subseteq \PntSet'$ is a $(k,\epsilon)$-regret set in $\PntSet'$, and vice versa.
Nanongkai et al.~\cite{nanongkai2010regret} showed that a non-uniform scaling of
$\PntSet$ satisfies (ii). In the following lemma, we show a stronger result.
\begin{lemma}
\lemlab{Transfs}
 Let $\PntSet$ be a set of $n$ points in $\Re^d$, and let $M$ be a full rank
 $d\times d$ matrix. A subset $\SbSet\subseteq\PntSet$ is a
 $(k,\epsilon)$-regret set of $\PntSet$ if and only if $\SbSet'=M\SbSet$ is a
 $(k,\epsilon)$-regret set of $\PntSet'=M\PntSet$.
\end{lemma}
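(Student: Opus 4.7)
The plan is to establish a score-preserving correspondence between preferences for the original and the transformed instance. The backbone of the argument is the elementary identity
\[
\langle \wtv, Mp \rangle = \langle M^{T}\wtv, p \rangle,
\]
valid for every $\wtv, p \in \Re^d$: the score assigned to $Mp$ by the preference $\wtv$ equals the score assigned to $p$ by $M^{T}\wtv$. Since $M$ is full rank, $M^{T}$ is a bijection of $\Re^d$.

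From this identity I would deduce that ranks are preserved under the pairing $\wtv \leftrightarrow M^{T}\wtv$: the order $\wtv$ induces on $M\PntSet$ is exactly the order that $M^{T}\wtv$ induces on $\PntSet$. Hence $\kPoint_j(\wtv, M\PntSet) = M\,\kPoint_j(M^{T}\wtv, \PntSet)$ and $\Score_j(\wtv, M\PntSet) = \Score_j(M^{T}\wtv, \PntSet)$ for every $j$, and analogously with $\SbSet$ in place of $\PntSet$. Substituting into the definition of the regret ratio yields the pointwise identity
\[
\ell_k(\wtv, M\SbSet, M\PntSet) = \ell_k(M^{T}\wtv, \SbSet, \PntSet),
\]
which is the heart of the lemma.

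Taking suprema then gives the claim. If $\ell_k(\SbSet, \PntSet) \leq \epsilon$, then by the pointwise identity and the fact that $M^{T}$ is a bijection, $\ell_k(M\SbSet, M\PntSet) \leq \epsilon$; the converse follows by applying the same argument with $M^{-1}$ in place of $M$ (which is again full rank and sends $M\PntSet$ back to $\PntSet$, $M\SbSet$ back to $\SbSet$). The main bookkeeping obstacle, and the step I expect to require most care, is the preference domain: the paper maximizes the regret over $\posPoints$, while $M^{T}$ need not preserve the positive orthant. One handles this by observing that the invariance is really an invariance of the regret as a function on directions, so the worst-case preference on one side corresponds under $M^{T}$ to the worst-case preference on the other; it can be checked cell-by-cell on the arrangement $A(\HprSet)$ introduced in the proof of \lemref{LemmaA}, where inside each cell $\ell_k$ is a simple rational function of $\wtv$ and the change of variable $\wtv \mapsto M^{T}\wtv$ is transparent. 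The algebraic content of the proof, however, is just the one-line score identity above.
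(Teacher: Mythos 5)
Your proposal is correct and takes essentially the same route as the paper: the one-line identity $\dotp{\wtv}{M\pnt}=\dotp{M^T\wtv}{\pnt}$ together with the preference correspondence $\wtv\mapsto M^T\wtv$ (inverse $(M^{-1})^T\wtv$), which transfers the top-$k$ scores and hence the regret condition between the two instances. The orthant subtlety you flag is genuine, but the paper glosses over it in exactly the same way, simply declaring that these maps send $\posPoints$ to $\posPoints$ --- which holds for the positive diagonal scaling matrix the paper actually applies, where the issue disappears.
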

\begin{proof}
First, observe that $\dotp{\wtv}{M\pnt} = \wtv^TM\pnt = (M^T\wtv)^T\pnt =
\dotp{M^T\wtv}{\pnt}$, and so $\Score_k(\wtv,M\PntSet)=\Score_k(M^T\wtv,\PntSet)$.
We define a mapping $F:\posPoints \to \posPoints$ and its inverse $F^{-1}:\posPoints \to \posPoints$
as $F(\wtv)=(M^{-1})^T\wtv$ and $F^{-1}(\wtv)=M^T\wtv$.
Our proof now follows easily from these mappings.

 If $\SbSet$ is a $(k,\epsilon)$-regret set for $\PntSet$, then for any $\wtv\in \SphereDP$ we have
 $\Score_1(\wtv,M\SbSet)=\Score_1(M^T\wtv,\SbSet)=\Score_1(F^{-1}(\wtv),\SbSet)
 \geq (1-\epsilon)\Score_k(F^{-1}(\wtv), \PntSet)=
 (1-\epsilon)\Score_k(\wtv, M\PntSet)=(1-\epsilon)\Score_k(\wtv, \PntSet')$

 Conversely, if $\SbSet'$ is a $(k,\epsilon)$-regret set for $\PntSet'$, then for any
 $\wtv\in \SphereDP$,
 $\Score_1(\wtv,\SbSet)=\Score_1(\wtv,M^{-1}\SbSet')=\Score_1((M^{-1})^T\wtv,\SbSet')=\\ \Score_1(F(\wtv),\SbSet')
 \geq (1-\epsilon)\Score_k(F(\wtv),\PntSet') = (1-\epsilon)\Score_k(\wtv,M^{-1}\PntSet')
 =(1-\epsilon)\Score_k(\wtv,\PntSet)$.
This completes the proof.
\end{proof}

We now describe the transformation of the input points, which is a non-uniform scaling of
$\PntSet$. Specifically, for each $1\leq j\leq d$, let $m_j=\max_{\pnt_i\in P}\pnt_{ij}$ be
the maximum value of the $j$th coordinate among all points. Let $\Basis\subseteq \PntSet$ be
the subset of at most $d$ points, one per coordinate, corresponding to these $m_j$ values.
We refer to $\Basis$ as the basis of $\PntSet$, and let $\textsc{Basis}(\PntSet)$ be the method to find the basis $\Basis$.
We divide the $j$-th coordinate of all points by $m_j$, for all $j=1,2,\ldots,d$.
Let $\PntSet'$ be the resulting set,
and let $\Basis'$ be the transformation of $\Basis$. We note that for each coordinate
$j$ there is a point $\pnt_i'\in\Basis'$ with $\pnt_{ij}'=1$.
The different scaling factor in each coordinate can be represented by the diagonal matrix
$M$ where $M_{jj}=1/m_j$, and so $\PntSet'=M\PntSet$.
Let $\textsc{Scale}(\PntSet)$ be the procedure that scales the set $\PntSet$ according to the above transformation.
The key property of this affine transformation is the following lemma.
\begin{lemma}
\lemlab{Ltransf}
Let $M$ be the affine transform described above and let $\PntSet'=M\PntSet$.
Then, for all $\wtv\in \posPoints$,
\[
 \sqrt{d}\cdot \magn{\wtv} \:\geq \: \Score(\wtv,\PntSet')  \:\geq \: \frac{1}{\sqrt{d}}\cdot\magn{\wtv}.
\]
\end{lemma}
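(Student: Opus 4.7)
The plan is to prove the two inequalities separately, using the two structural properties of the scaled set $\PntSet'$: every point in $\PntSet'$ has all coordinates in $[0,1]$, and for each coordinate axis $j$ the basis contains a point $\pnt' \in \Basis'$ with $\pnt'_j = 1$.

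For the upper bound, I would argue that every $\pnt' \in \PntSet'$ satisfies $\magn{\pnt'} \leq \sqrt{d}$, since $\pnt'_j \in [0,1]$ gives $\magn{\pnt'}^2 = \sum_{j=1}^d (\pnt'_j)^2 \leq d$. Then Cauchy-Schwarz yields $\dotp{\wtv}{\pnt'} \leq \magn{\wtv}\magn{\pnt'} \leq \sqrt{d}\,\magn{\wtv}$ for every $\pnt' \in \PntSet'$, and taking the maximum over $\pnt' \in \PntSet'$ gives $\Score(\wtv,\PntSet') \leq \sqrt{d}\,\magn{\wtv}$.

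For the lower bound, let $\wtv = (\wtv_1, \ldots, \wtv_d) \in \posPoints$ and let $j^* = \argmax_{1 \le j \le d} \wtv_j$. Since $\wtv_{j^*}$ is the largest of $d$ non-negative entries, we have
\[
\wtv_{j^*}^2 \;\geq\; \frac{1}{d}\sum_{j=1}^d \wtv_j^2 \;=\; \frac{1}{d}\magn{\wtv}^2,
\]
so $\wtv_{j^*} \geq \magn{\wtv}/\sqrt{d}$. Now choose the basis point $\pnt' \in \Basis' \subseteq \PntSet'$ for which $\pnt'_{j^*} = 1$ (which exists by construction of the scaling). Because $\wtv \in \posPoints$ and $\pnt' \in \posPoints$, all cross-terms in the inner product are non-negative, and therefore $\Score(\wtv,\pnt') = \sum_{j=1}^d \wtv_j \pnt'_j \geq \wtv_{j^*} \pnt'_{j^*} = \wtv_{j^*} \geq \magn{\wtv}/\sqrt{d}$. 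Since $\Score(\wtv,\PntSet')$ is the maximum score over all points of $\PntSet'$, it is at least $\Score(\wtv,\pnt') \geq \magn{\wtv}/\sqrt{d}$.

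Neither direction presents a real obstacle; both are short and rely only on elementary inequalities (Cauchy-Schwarz for the upper bound, a pigeonhole-style observation that the maximum coordinate of a $d$-vector is at least $1/\sqrt{d}$ times its norm for the lower bound). The only subtlety worth stating explicitly is why a basis point with $\pnt'_{j^*} = 1$ must exist in $\PntSet'$, which follows directly from the definition of $\textsc{Scale}$: before scaling, some point of $\PntSet$ achieves the coordinate-wise maximum $m_{j^*}$ along axis $j^*$, and dividing that coordinate by $m_{j^*}$ yields the value $1$.
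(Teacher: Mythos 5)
Your proof is correct and takes essentially the same route as the paper: the upper bound via $\magn{\pnt'}\leq\sqrt{d}$ and Cauchy--Schwarz, and the lower bound by observing that some coordinate of $\wtv$ is at least $\magn{\wtv}/\sqrt{d}$ and pairing it with the basis point whose corresponding coordinate equals $1$. The only cosmetic difference is that the paper normalizes to $\magn{\wtv}=1$ up front, while you carry $\magn{\wtv}$ through explicitly; the substance is identical.
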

\begin{proof}
Since $\Score(\cdot,\cdot)$ is a linear function, without loss of generality consider a
vector $\wtv\in\posPoints$ with $\magn{\wtv}=1$.
 After the transformation $M$, for each coordinate $j$, we have $\pnt_{j}'\leq 1$.
 Therefore, $\magn{\pnt'}\leq\sqrt{d}$ and also $\sqrt{d}\geq \Score(\wtv,\PntSet')$
	because $\wtv$ is a unit vector.
For the second inequality, we note that for any unit norm vector $\wtv$ we must have
$\wtv_j\geq\frac{1}{\sqrt{d}}$, for some $j$. Since our transform ensures the existence
of a point $\pnt'\in \Basis'$ with $\pnt_{j}'=1$, we must have
$\Score(\wtv,\PntSet')\geq \dotp{\wtv}{\pnt'}\geq \frac{1}{\sqrt{d}}$.
This completes the proof.
\end{proof}

In the following, without loss of generality, we assume that $\PntSet\subset [0,1]^d$ and
there is a set $\Basis\subseteq \PntSet$ of at most $d$ points, such that
for any $1\leq j\leq d$, there is a point $\pnt\in\Basis$ with $\pnt_j=1$.

\subsection{Approximation Algorithms}
We first show how to formulate the \sizeProb{} version of the \Probl as a hitting set problem.
Let $\PntSet$, $k$, and $\epsilon$ be fixed. For a vector $\wtv\in \posPoints$, let
$\SetEps{\wtv} =\{\pnt \in \PntSet\mid \Score(\wtv,\pnt)\geq (1-\epsilon)\Score_k(\wtv,\pnt)\}$.
Note that if $\epsilon=0$, then $\SetEps{\wtv} =\kSet_k(\wtv)$, the set of top-$k$ points of $\PntSet$
in direction $\wtv$. Set $\AllSets=\{\SetEps{\wtv} \mid \wtv\in \posPoints\}$.
Although there are infinitely many preferences we show below that $\cardin{\AllSets}$ is polynomial
in $\cardin{\PntSet}$. We now define the set system $\Sigma=(\PntSet, \AllSets)$.

\begin{lemma}
\lemlab{lem:HS}
\begin{enumerate}[(i)]
\item $\cardin{\AllSets}=O(n^d)$.
\item A subset $\SbSet\subseteq \PntSet$ is a hitting set of $\Sigma$ if and only if $\SbSet$ is a $(k,\epsilon)$-regret set of $\PntSet$.
\end{enumerate}
\end{lemma}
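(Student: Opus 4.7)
The plan is to handle (ii) first, since it is the heart of the hitting-set reformulation, and then to bound $\cardin{\AllSets}$ by an arrangement argument in preference space.

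For (ii), I would just unwrap the definitions. If $\SbSet$ hits every set in $\AllSets$, then for each $\wtv \in \posPoints$ there exists $\pnt \in \SbSet \cap \SetEps{\wtv}$; by the definition of $\SetEps{\wtv}$ this gives $\Score(\wtv, \SbSet) \geq \dotp{\wtv}{\pnt} \geq (1-\epsilon)\Score_k(\wtv, \PntSet)$, so $\ell_k(\wtv, \SbSet, \PntSet) \leq \epsilon$. Taking the maximum over $\wtv \in \posPoints$ shows that $\SbSet$ is a $(k,\epsilon)$-regret set. Conversely, if $\SbSet$ is a $(k,\epsilon)$-regret set, fix any $\wtv \in \posPoints$ and let $\pnt^{\star}\in \SbSet$ realize $\Score(\wtv, \SbSet)$; the regret bound gives $\dotp{\wtv}{\pnt^{\star}} \geq (1-\epsilon)\Score_k(\wtv, \PntSet)$, so $\pnt^{\star} \in \SbSet \cap \SetEps{\wtv}$ and $\SbSet$ hits $\SetEps{\wtv}$. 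Both directions use non-strict inequalities, matching the closed definition of $\SetEps{\wtv}$, so no edge case arises.

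For (i), I would show that $\SetEps{\wtv}$ is locally constant on the cells of a suitable hyperplane arrangement and then count cells. For each ordered pair $i \neq j$, let $h_{ij}$ be the hyperplane $\dotp{\wtv}{\pnt_i - (1-\epsilon)\pnt_j} = 0$; crossing $h_{ij}$ is the only way to flip the sign of $\dotp{\wtv}{\pnt_i} - (1-\epsilon)\dotp{\wtv}{\pnt_j}$. Inside any open cell of the arrangement of the $h_{ij}$, every comparison of the form $\Score(\wtv, \pnt_i) \geq (1-\epsilon)\Score(\wtv, \pnt_j)$ has a fixed outcome, so the identity of $\kPoint_k(\wtv, \PntSet)$, the piecewise-linear threshold $(1-\epsilon)\Score_k(\wtv, \PntSet)$, and the membership of each $\pnt_i$ in $\SetEps{\wtv}$ are all constant on the cell. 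A naive count via the hyperplane arrangement theorem yields $O(n^{2d})$ cells, which already suffices for a polynomial bound; to sharpen this to $O(n^d)$ I would appeal to the classical result that the combinatorial complexity of the $k$-level of the arrangement of the $n$ ``score hyperplanes'' (viewing $\wtv \mapsto \dotp{\wtv}{\pnt_i}$ as linear functions) is $O(n^d)$, and observe that each distinct $\SetEps{\wtv}$ is determined by which linear piece of $\Score_k(\cdot, \PntSet)$ contains $\wtv$ together with the outcome of comparing each of the $n$ score functions against one fixed linear threshold function on that piece.

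The main obstacle is obtaining the sharp $O(n^d)$ bound instead of the easy $O(n^{2d})$ bound coming from the $O(n^2)$ threshold hyperplanes; this requires invoking the classical bound on the complexity of the $k$-level of a hyperplane arrangement rather than a naive pairwise count, and a small argument that the within-piece refinement only multiplies the count by $O(n)$ rather than $O(n^{d-1})$. Part (ii), by contrast, is essentially definition-chasing once the right range space $\Sigma = (\PntSet, \AllSets)$ has been isolated.
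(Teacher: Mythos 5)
Your part~(ii) is fine and is exactly the paper's argument: both directions are definition-unwrapping, and the non-strict inequalities match the closed definition of $\SetEps{\wtv}$, as you note.

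Part~(i) has a genuine gap in the step that upgrades your $O(n^{2d})$ bound to the claimed $O(n^d)$. The idea you are missing is the paper's one-liner: $\SetEps{\wtv}$ is exactly the subset of $\PntSet$ lying in the halfspace $\dotp{\wtv}{x}\geq (1-\epsilon)\Score_k(\wtv,\PntSet)$, i.e.\ it is linearly separable from $\PntSet\setminus\SetEps{\wtv}$, and a set of $n$ points in $\Re^d$ has only $O(n^d)$ linearly separable subsets (halfspace ranges). Your route instead counts regions of \emph{preference} space on which $\SetEps{\wtv}$ is constant, and this cannot deliver $O(n^d)$ as sketched, for two reasons. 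First, the ``classical result'' you invoke --- that a single $k$-level of the $n$ score hyperplanes has complexity $O(n^d)$ --- is not a theorem; bounding the complexity of one level is the notorious $k$-set/$k$-level problem, and already for $n$ lines in the plane the middle level can have $n\,2^{\Omega(\sqrt{\log n})}$ edges, so the roughly $O(n^{d-1})$ pieces you would need (so that an $O(n)$ within-piece factor yields $O(n^d)$) fails in the worst case; moreover the within-piece refinement by the $n$ comparisons against the threshold function can create up to $O(n^{d-1})$ subcells, not $O(n)$, and you give no argument for the smaller factor. Second, and more fundamentally, counting arrangement cells on which $\SetEps{\wtv}$ is constant inherently overcounts the number of \emph{distinct} sets --- many cells induce the same subset --- which is precisely why the paper counts distinct halfspace ranges in the point space rather than regions in the preference space. (A minor further slip: fixing the signs of $\dotp{\wtv}{\pnt_i-(1-\epsilon)\pnt_j}$ does not fix the identity of $\kPoint_k(\wtv,\PntSet)$; this is harmless, since $\pnt_i\in\SetEps{\wtv}$ iff fewer than $k$ indices $j$ satisfy $\dotp{\wtv}{\pnt_i-(1-\epsilon)\pnt_j}<0$, so your coarse $O(n^{2d})$ bound does stand --- but that weaker polynomial bound proves a weaker statement than the lemma, even if it would suffice for some downstream uses.)
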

\begin{proof}
(i) Note that $\SetEps{\wtv}$ is a subset of $\PntSet$ that is separated from $\PntSet\setminus\SetEps{\wtv}$ by
the hyperplane $\hpr_{\wtv}: \dotp{\wtv}{x}\geq (1-\epsilon)\Score_k(\wtv,\PntSet)$. Such a subset is called
linearly separable. A well-known result in discrete geometry \cite{agarwal2000arrangements} shows that
a set of $n$ points in $\Re^d$ has $O(n^d)$ linearly separable subsets. This completes the proof of (i).

(ii) First, by definition any $(k,\epsilon)$-regret set $\SbSet$ has to contain a point of $\SetEps{\wtv}$ for
all $\wtv\in \posPoints$ because otherwise $\ell_k(\wtv,\SbSet)>\epsilon$. Hence, $\SbSet$ is a hitting set
of $\Sigma$. Conversely, if $\SbSet\cap \SetEps{\wtv}\neq \emptyset$, then $\ell_k(\wtv,\SbSet)\leq \epsilon$.
If $\SbSet$ is a hitting set of $\Sigma$, then $\SbSet\cap \SetEps{\wtv}\neq \emptyset$ for all $\wtv\in \posPoints$,
so $\SbSet$ is also a $(k,\epsilon)$-regret set.
\end{proof}

We can thus compute a small-size $(k,\epsilon)$-regret set of $\PntSet$ by running the greedy hitting set
algorithm on $\Sigma$. In fact, the greedy algorithm in \cite{Bronniman1995} returns a hitting set of size
$O(\minSz\log \minSz)$. As mentioned above, the challenge is the size of $\AllSets$. Even for small values
of $k$, $\cardin{\AllSets}$ can be $\Omega(n^{\floor{d/2}})$ \cite{agarwal2000arrangements}.
Next, we show how to construct a much smaller set system.

Recall that $\ell_k(\wtv,\SbSet)$ is independent of $\magn{\wtv}$ so we focus on unit preference vectors, i.e.,
we assume $\magn{\wtv}=1$. Let $\AllDir=\{\wtv\in\posPoints\mid\magn{\wtv}=1\}$ be the space of all unit
preference vectors; $\AllDir$ is the portion of the unit sphere restricted to the positive orthant.
For a given parameter $\delta>0$, a set $\Net\subset \AllDir$ is called a \emph{$\delta$-net} if the spherical
caps of radius $\delta$ around the points of $\Net$ cover $\AllDir$, i.e. for any $\wtv\in \AllDir$, there is
a point $v\in \Net$ with $\dotp{\wtv}{v}\geq \cos(\delta)$. A $\delta$-net of size $O(\frac{1}{\delta^{d-1}})$
can be computed by drawing a "uniform" grid on $\AllDir$. In practice, it is simpler and more efficient
to simply choose a random set of $O(\frac{1}{\delta^{d-1}}\log\frac{1}{\delta})$ directions --- this will be a $\delta$-net with probability at least $1/2$.
Set $\delta=\frac{\epsilon}{2d}$. Let $\Net$ be be a $\delta$-net of $\AllDir$, and let
$\AllSetsNet=\{\SetEps{\wtv}\mid \wtv\in \Net\}$.

Set $\Sigma_{\Net}=(\PntSet,\AllSetsNet)$.
Note that $\cardin{\AllSetsNet}=O(\frac{1}{\epsilon^{d-1}})$. Our main observation, stated in the lemma below,
is that it suffices to compute a hitting set of $\Sigma_{\Net}$. That is, a subset $\SbSet\subseteq \PntSet$
that has a small regret with respect to vectors in $\Net$ has small regret for all preferences.

\begin{figure*}
\begin{subfigure}{0.25\textwidth}
  \includegraphics[width=\linewidth]{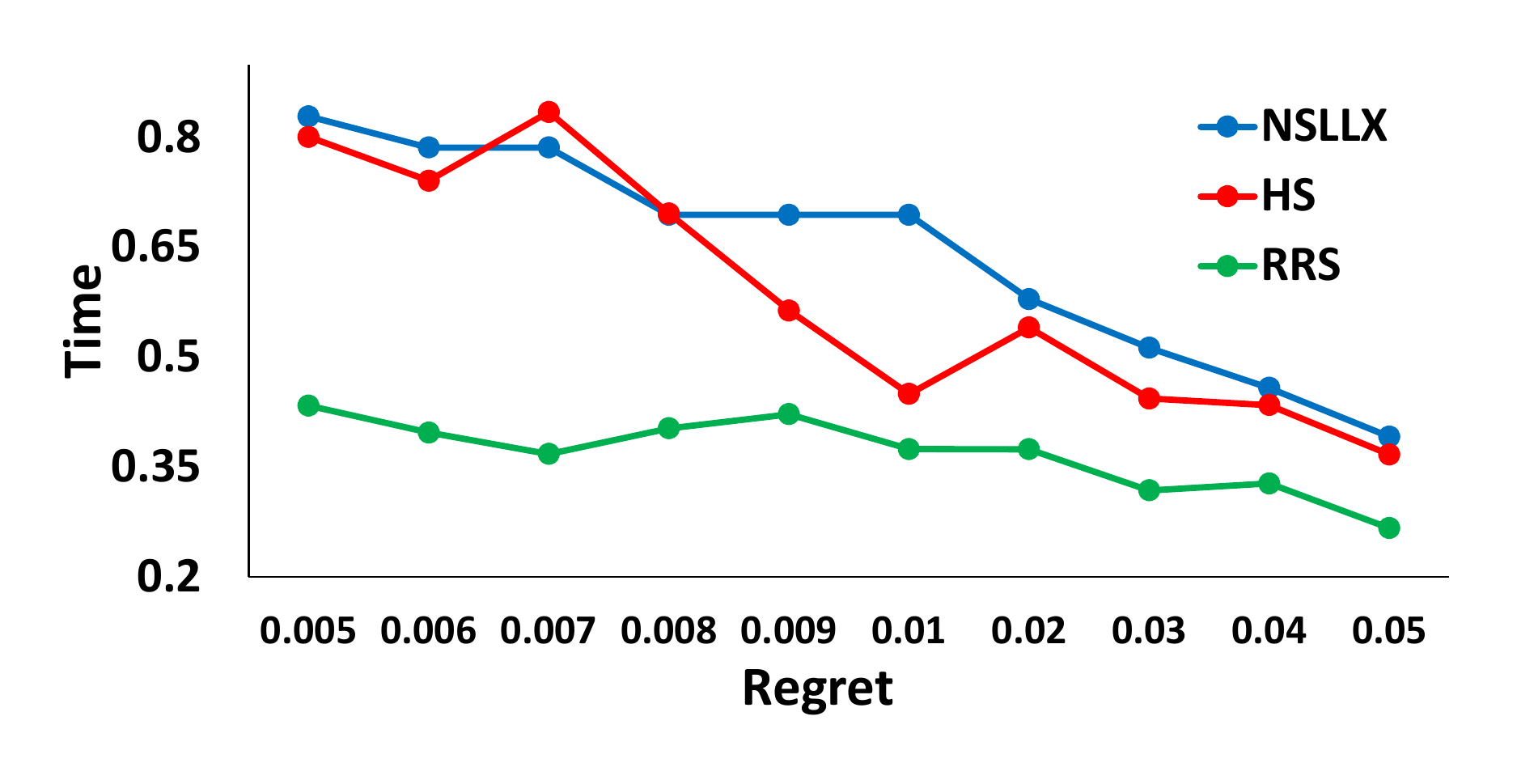}
  \caption{BB}\label{fig:BBT1}
\end{subfigure}\hfill
\begin{subfigure}{0.25\textwidth}
  \includegraphics[width=\linewidth]{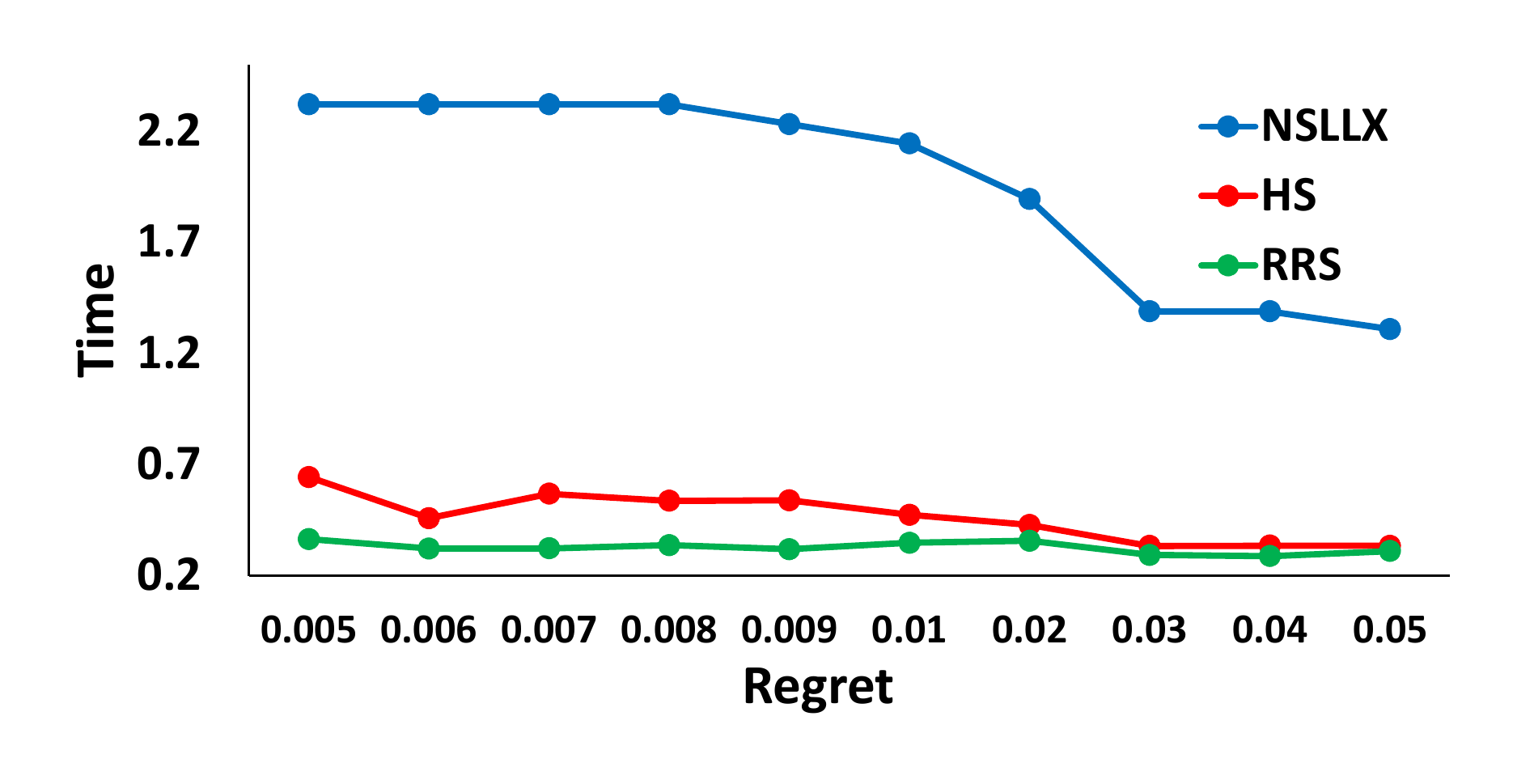}
  \caption{AntiCor}\label{fig:ACT1}
\end{subfigure}\hfill
\begin{subfigure}{0.25\textwidth}
  \includegraphics[width=\linewidth]{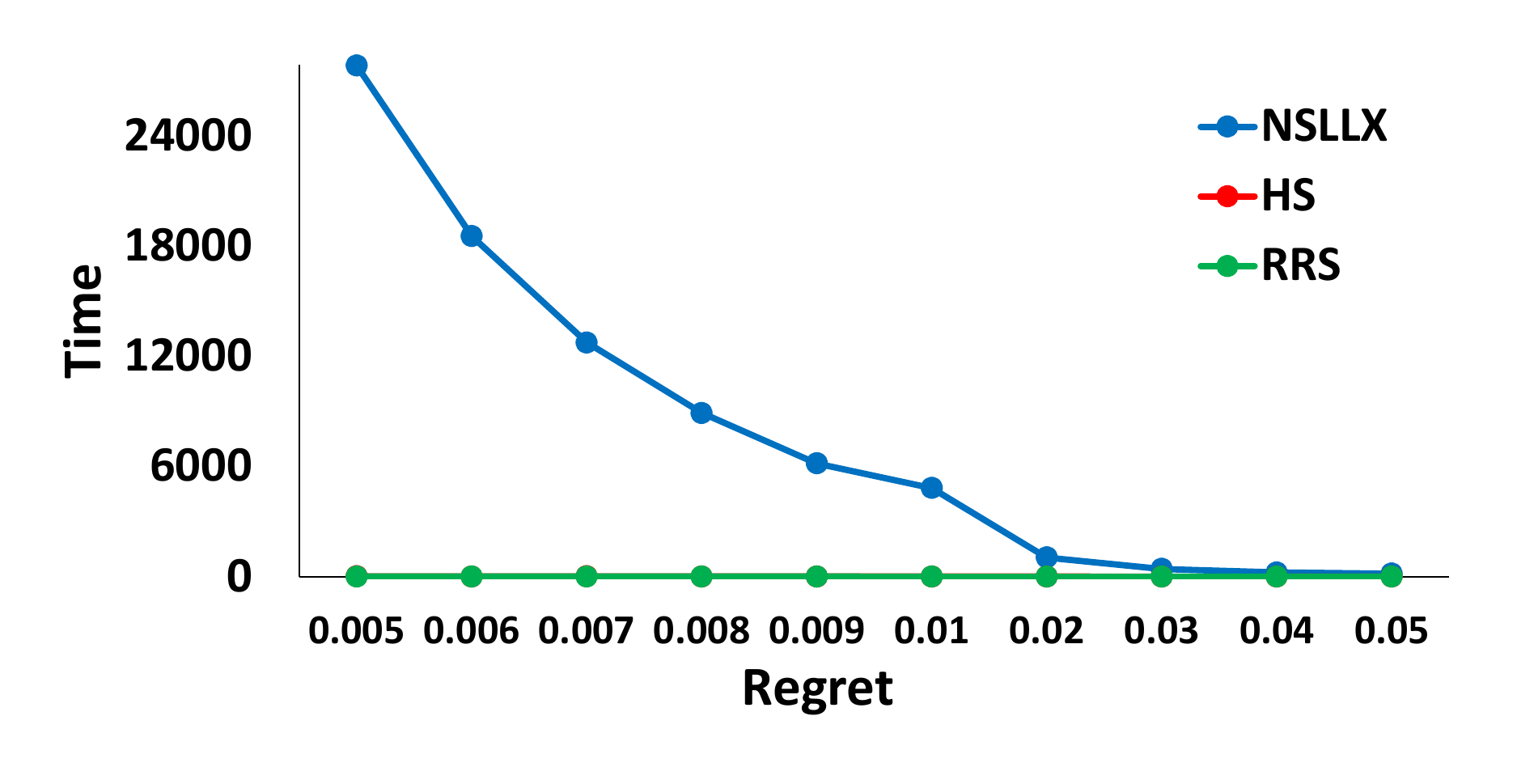}
  \caption{Sphere}\label{fig:PST1}
\end{subfigure}\hfill
\begin{subfigure}{0.25\textwidth}
  \includegraphics[width=\linewidth]{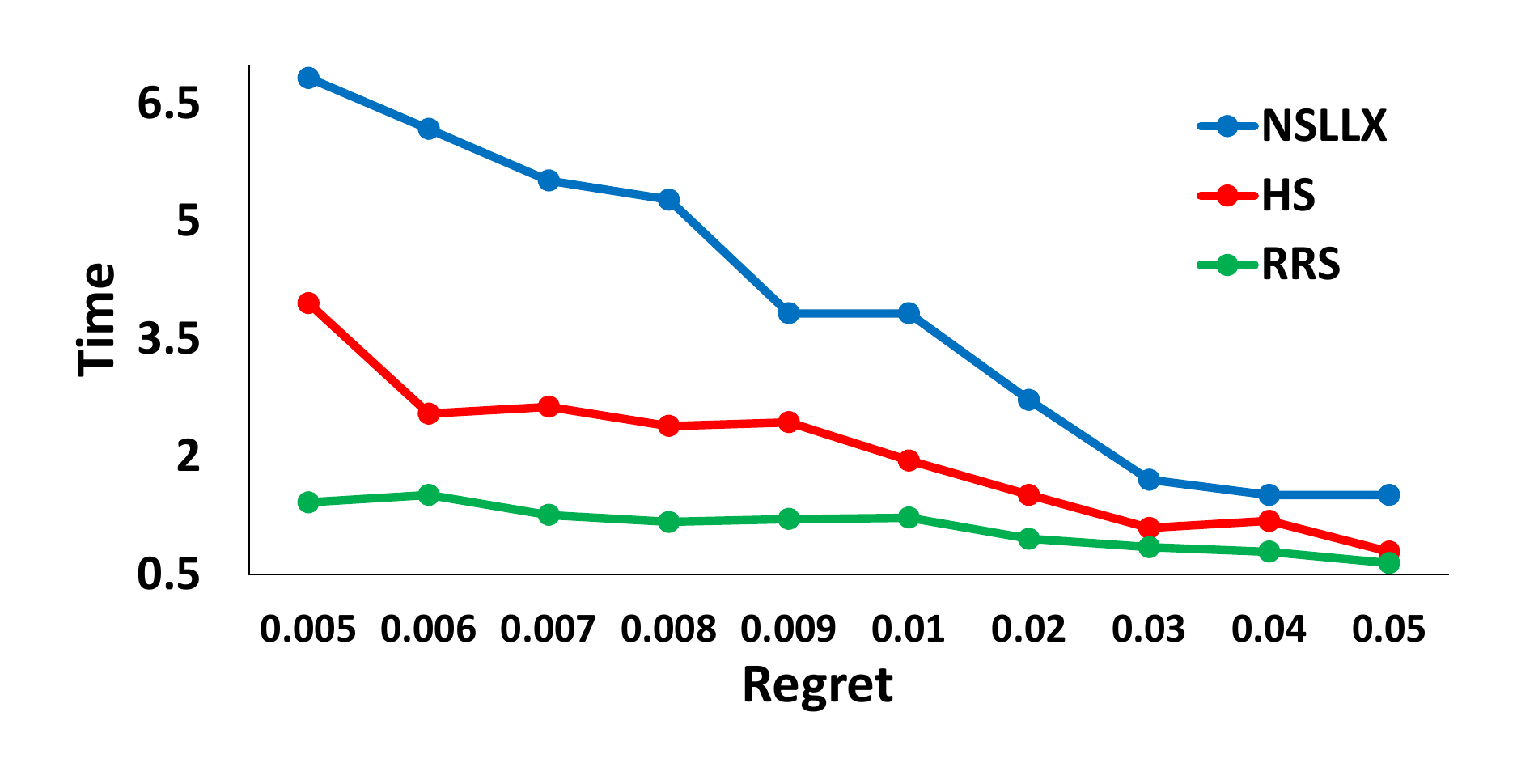}
  \caption{ElNino}\label{fig:ENT1}
\end{subfigure}
\caption{Running time for $k=1$.}
\figlab{fig:TimeK1}
\end{figure*}

\begin{figure*}
\begin{subfigure}{0.25\textwidth}
  \includegraphics[width=\linewidth]{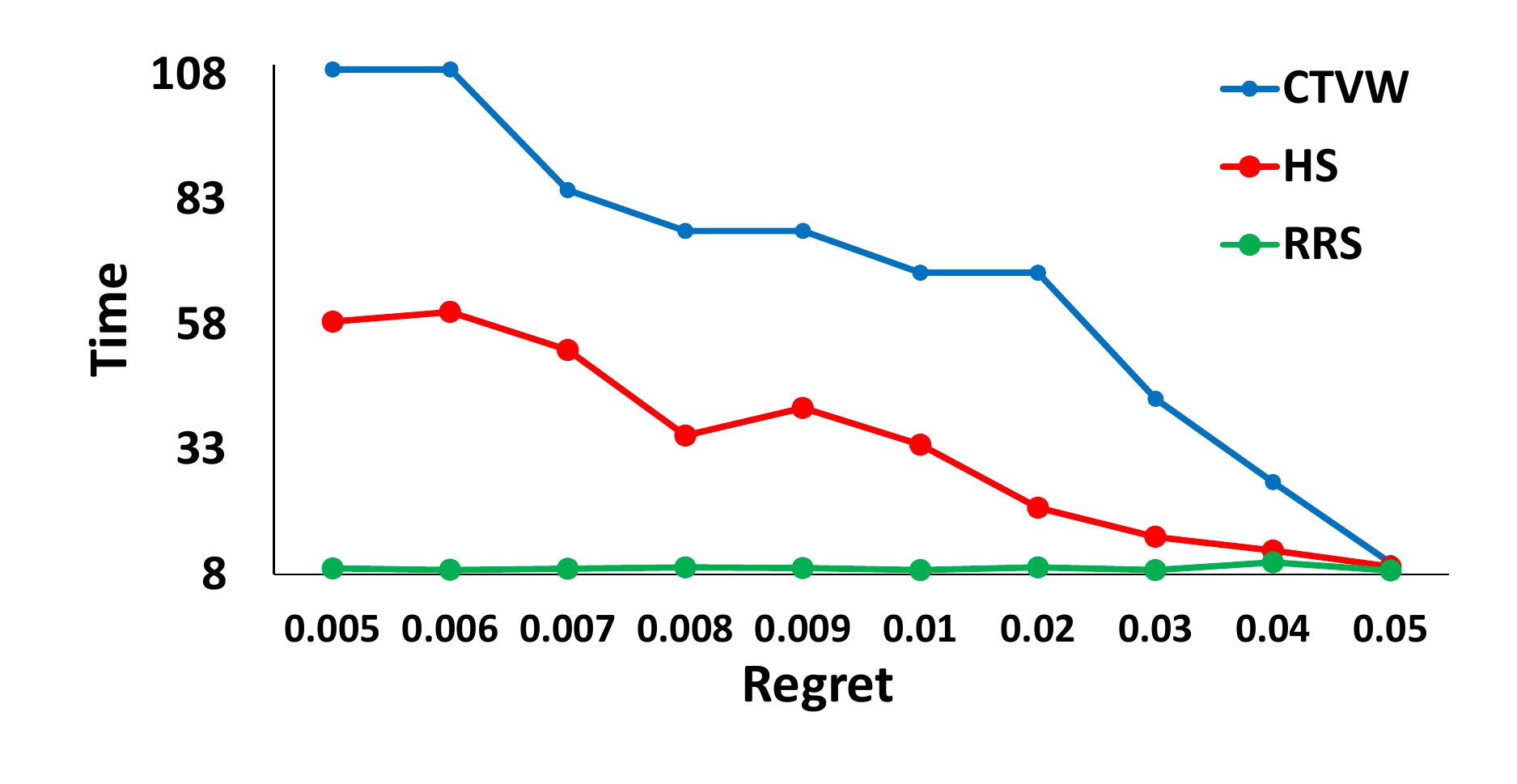}
  \caption{BB}\label{fig:BBT10}
\end{subfigure}\hfill
\begin{subfigure}{0.25\textwidth}
  \includegraphics[width=\linewidth]{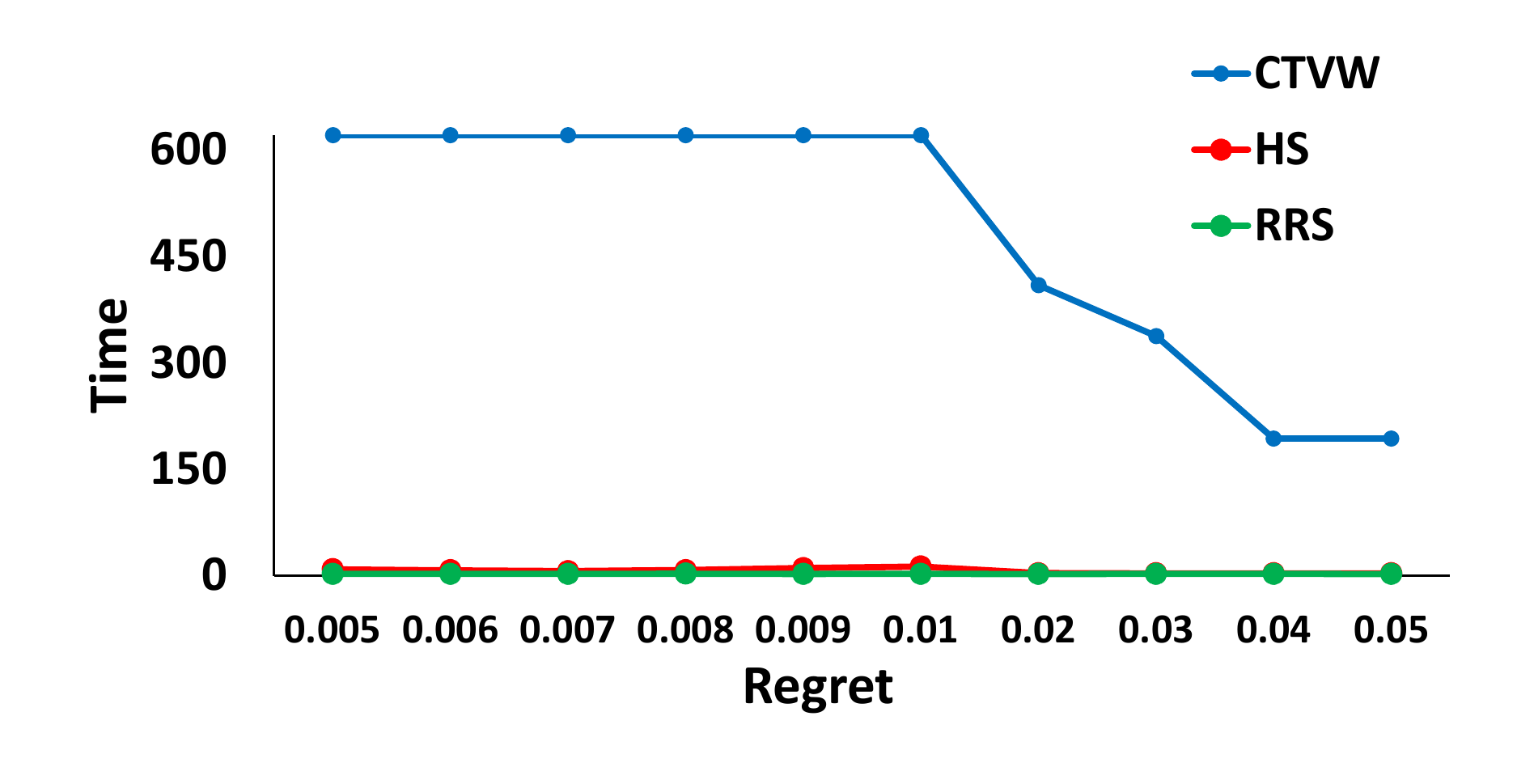}
  \caption{AntiCor}\label{fig:ACT10}
\end{subfigure}\hfill
\begin{subfigure}{0.25\textwidth}
  \includegraphics[width=\linewidth]{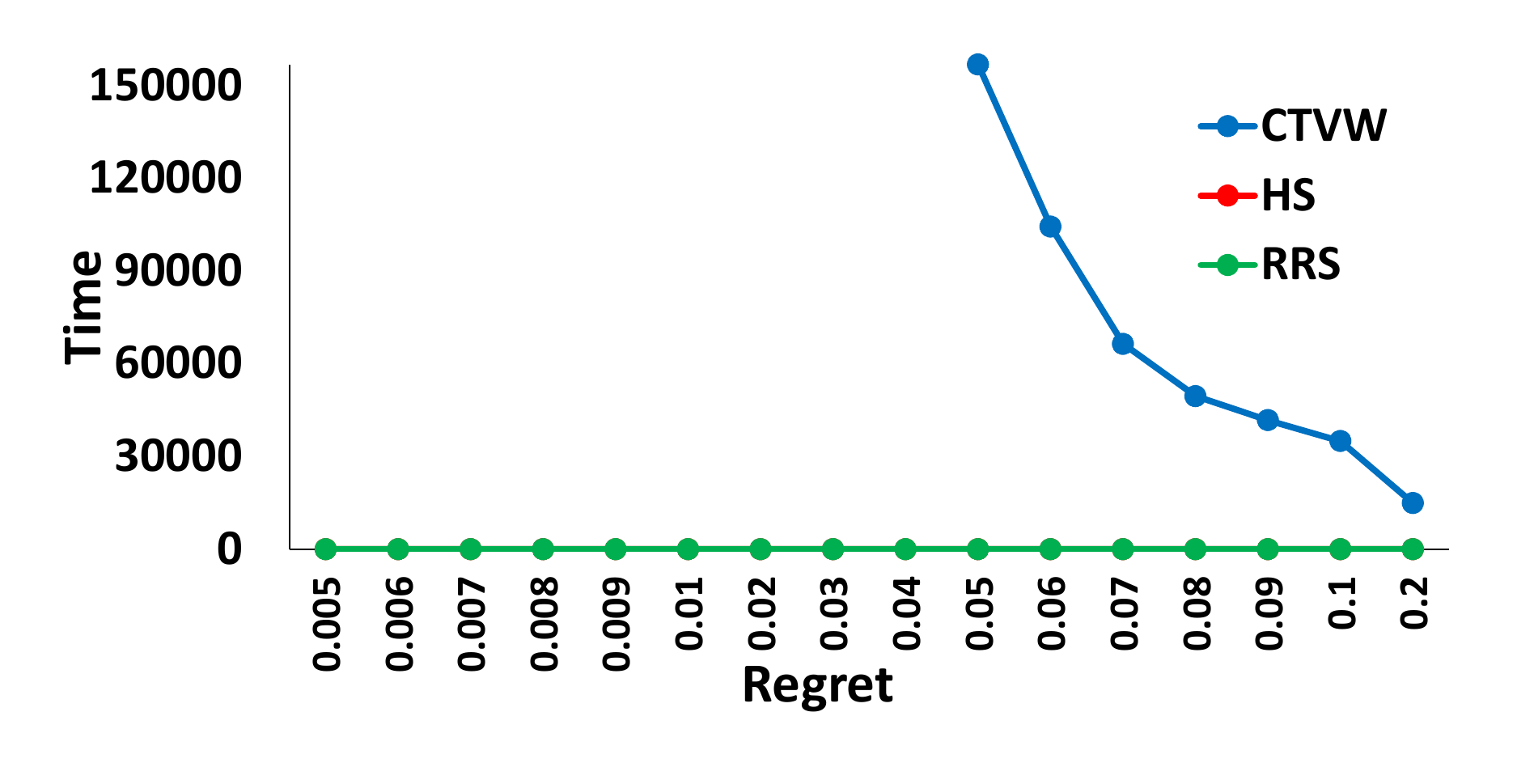}
  \caption{Sphere}\label{fig:PST10}
\end{subfigure}\hfill
\begin{subfigure}{0.25\textwidth}
  \includegraphics[width=\linewidth]{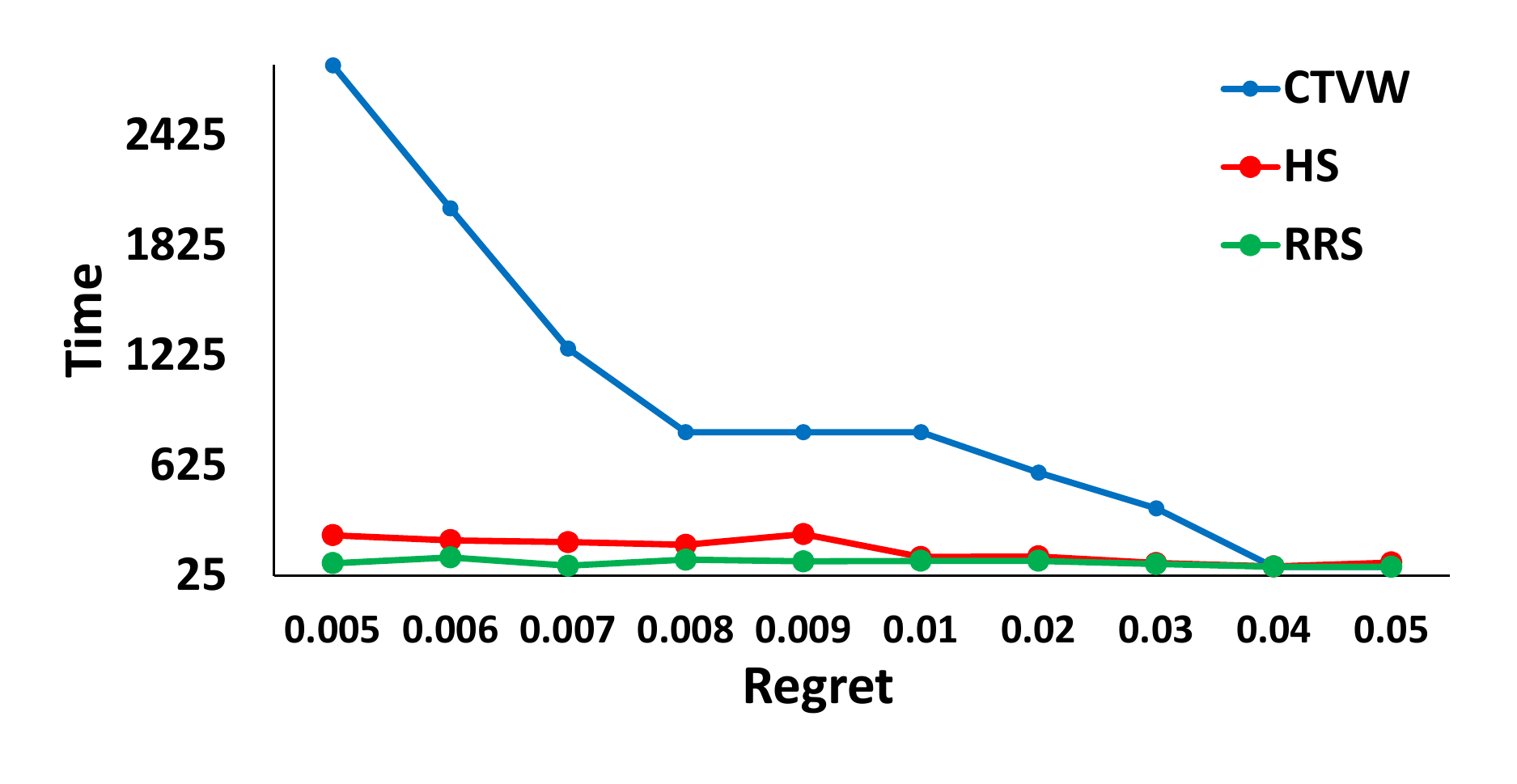}
  \caption{ElNino}\label{fig:ENT10}
\end{subfigure}
\caption{Running time for $k=10$.}
\figlab{fig:TimeK10}
\end{figure*}

\begin{lemma}
\lemlab{MainLemma}
Let $\SbSet'$ be a hitting set of $\Sigma_{\Net}$, and let $\Basis$ be the basis of $\PntSet$.
Then $\SbSet=\SbSet'\cup \Basis$ is a $(k,2\epsilon)$-regret set of $\PntSet$.
\end{lemma}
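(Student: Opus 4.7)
The plan is to fix an arbitrary unit preference vector $\wtv\in\AllDir$ and exhibit a single point of $\SbSet$ whose score in direction $\wtv$ is at least $(1-2\epsilon)\Score_k(\wtv,\PntSet)$. Two ingredients combine: a hitting-set witness from $\SbSet'$ at the nearest net direction, and a universal lower bound on $\Score(\wtv,\SbSet)$ contributed by $\Basis$.

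First I would pick $v\in\Net$ with $\dotp{\wtv}{v}\geq \cos\delta$, from which $\|\wtv-v\|^2=2-2\dotp{\wtv}{v}\leq \delta^2$, so $\|\wtv-v\|\leq\delta=\epsilon/(2d)$. Since $\PntSet\subseteq[0,1]^d$ (by the normalization preceding Lemma \lemref{Ltransf}), every $p\in\PntSet$ has $\|p\|\leq\sqrt{d}$, and Cauchy--Schwarz gives the uniform bound $|\Score(\wtv,p)-\Score(v,p)|\leq\delta\sqrt{d}=\epsilon/(2\sqrt{d})$. Applying this uniformly yields the Lipschitz property $\Score_k(v,\PntSet)\geq \Score_k(\wtv,\PntSet)-\delta\sqrt{d}$: the $k$ points achieving the top-$k$ scores in direction $\wtv$ each have $v$-score at least $\Score_k(\wtv,\PntSet)-\delta\sqrt{d}$, so the $k$-th largest $v$-score is at least that much.

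Because $\SbSet'$ hits $\SetEps{v}\in \AllSetsNet$, there is some $p\in\SbSet'\subseteq\SbSet$ with $\Score(v,p)\geq(1-\epsilon)\Score_k(v,\PntSet)$. Stringing the inequalities together,
\begin{align*}
\Score(\wtv,p) &\geq \Score(v,p)-\delta\sqrt{d} \geq (1-\epsilon)\Score_k(v,\PntSet)-\delta\sqrt{d}\\
&\geq (1-\epsilon)\bigl(\Score_k(\wtv,\PntSet)-\delta\sqrt{d}\bigr)-\delta\sqrt{d}
\geq (1-\epsilon)\Score_k(\wtv,\PntSet)-2\delta\sqrt{d}.
\end{align*}
Separately, $\Basis\subseteq\SbSet$ together with Lemma \lemref{Ltransf} gives $\Score(\wtv,\SbSet)\geq\Score(\wtv,\Basis)\geq 1/\sqrt{d}$ for every $\wtv\in\AllDir$.

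I would then finish with a short case split on the size of $\Score_k(\wtv,\PntSet)$ relative to the threshold $1/\sqrt{d}=2\delta\sqrt{d}/\epsilon$. If $\Score_k(\wtv,\PntSet)\leq 1/\sqrt{d}$, then the basis bound already gives $\Score(\wtv,\SbSet)\geq 1/\sqrt{d}\geq \Score_k(\wtv,\PntSet)\geq(1-2\epsilon)\Score_k(\wtv,\PntSet)$. Otherwise $\Score_k(\wtv,\PntSet)\geq 1/\sqrt{d}$, so $2\delta\sqrt{d}=\epsilon/\sqrt{d}\leq \epsilon\Score_k(\wtv,\PntSet)$, and the hitting-set bound above yields $\Score(\wtv,p)\geq(1-2\epsilon)\Score_k(\wtv,\PntSet)$. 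In either case the regret at $\wtv$ is at most $2\epsilon$, and taking the maximum over $\wtv$ finishes the proof. The main obstacle, and the whole reason for including $\Basis$ in $\SbSet$, is precisely the regime where $\Score_k(\wtv,\PntSet)$ is so small that the additive slack $\delta\sqrt{d}$ cannot be absorbed into a multiplicative $\epsilon$ factor; the basis converts that additive slack into a free lower bound on $\Score(\wtv,\SbSet)$ and lets the case split close out.
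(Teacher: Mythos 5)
Your proof is correct and follows essentially the same route as the paper's: pick the nearest net direction, use Cauchy--Schwarz with $\|\pnt\|\leq\sqrt{d}$ to bound the score perturbation, invoke the hitting-set guarantee at the net direction, and fall back on $\Basis$ (via \lemref{Ltransf}) when $\Score_k(\wtv,\PntSet)$ is below roughly $1/\sqrt{d}$. The only differences are bookkeeping: you carry the additive slack $2\delta\sqrt{d}$ to the end and split cases at $1/\sqrt{d}$, whereas the paper splits at $\frac{1}{(1-\epsilon)\sqrt{d}}$ and converts the additive error to a multiplicative factor inside the chain.
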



%

\algoref{HitSetAlg} summarizes the algorithm. $\textsc{Greedy\_HS}$ is the greedy algorithm in \cite{Bronniman1995}
for computing a hitting set.
\captionof{algorithm}{RMS\_HS}\algolab{HitSetAlg}
\noindent Input: $\PntSet$: Input points, $k\geq 1$: rank, $\epsilon>0$: error parameter.\\
\noindent Output: $\SbSet$ a $(k,2\epsilon)$-regret set
\begin{algorithmic}[1]
\State $B:=\textsc{Basis}(\PntSet)$
\State $\PntSet:= \textsc{Scale}(\PntSet)$.
\State $\delta:= \frac{\epsilon}{2d}$
\State $\Net:= \delta$-net of $\AllDir$
\State $\SetEps{\wtv} := \{\pnt\in \PntSet\mid \Score(\wtv,\pnt)\geq (1-\epsilon)\Score_k(\wtv,\PntSet)\}$
\State $\AllSetsNet:= \{\SetEps{\wtv} \mid \wtv\in \Net\}$
\State $\SbSet':= \textsc{Greedy\_HS}(\PntSet, \AllSetsNet)$
\State Return $\SbSet:= \SbSet'\cup \Basis$
\end{algorithmic}

\vspace{2 mm}

\mparagraph{Analysis}
The correctness of the algorithm follows from \lemref{MainLemma}.
Since a hitting set of $\Sigma$ is also a hitting set of $\Sigma_{\Net}$, $\Sigma_{\Net}$ has
a hitting set of size at most $\minSz$. The greedy algorithm in \cite{Bronniman1995} returns
a hitting set of size $O(\minSz\log \minSz)$ for $d\geq 4$ and of size $O(\minSz)$ for $d\leq 3$.
Therefore $\cardin{\SbSet}=O(\minSz\log \minSz)$ for $d\geq 4$ and $O(\minSz)$ for $d=3$.
Computing the set $\Basis$ takes $O(n)$ time. $\Net$ can be constructed in $O(\cardin{\Net})$ time
and we can compute $\SetEps{\wtv}$ for each $\wtv\in \Net$ in $O(n)$ time. The greedy algorithm in
\cite{Bronniman1995} takes $O(\frac{n}{\epsilon^{d-1}}\log n\log\frac{1}{\epsilon})$ expected time
(the bound on the running time also holds with high probability).

We now prove \lemref{MainLemma}.
\mparagraph{Proof of \lemref{MainLemma}}
\newcommand{\sd}{\bar{\wtv}}
It suffices to show that for any direction $\wtv\in \AllDir$ there is a point $q\in \SbSet$ for which
$\Score(\wtv,q)\geq (1-2\epsilon)\Score_k(\wtv,\PntSet)$.

Let us first consider the case when $\Score_k(\wtv,\PntSet) \leq \frac{1}{(1-\epsilon)\sqrt{d}}$.
In this case, by \lemref{Ltransf} the set $\Basis$ is guaranteed to contain a point $q$ with
$\Score(\wtv,q) \geq \frac{1}{\sqrt{d}}$, which proves the claim.
So let us now assume that $\Score_k(\wtv,\PntSet) > \frac{1}{(1-\epsilon)\sqrt{d}}$.
Let $\sd\in \Net$ be a direction in the net $\Net$ such that, $(\widehat{\wtv,\sd}) \leq \epsilon/2d$,
where $(\widehat{\wtv,\sd})$ is the angle between $\wtv$ and $\sd$.
Such a direction exists because $\Net$ is a $\frac{\epsilon}{2d}$-net on $\AllDir$. Observe that,
	\[
		\norm{\wtv - \sd} = \sqrt{2 - 2 \cos((\widehat{\wtv,\sd}))}
	                      = 2 \sin\left(\frac{(\widehat{\wtv,\sd})}{2}\right)
			      \leq \frac{\epsilon}{2d},
	\]
	where we have used first the cosine rule, the identity
	$1 - \cos \theta = 2 \sin^2 \left( \frac{\theta}{2} \right)$,
	as well as the inequality $\sin \theta \leq \theta$ for $\theta \geq 0$ in the final step. Also, observe that for any $\pnt \in \PntSet$ we have,
	\begin{equation}
		\Eqlab{eqnA}%
		\abs{\Score(\wtv,\pnt) - \Score(\sd,\pnt)} \leq \frac{\epsilon}{2\sqrt{d}}.
	\end{equation}
	This follows because,
	\begin{align*}
	  \abs{\Score(\wtv,\pnt) - \Score(\sd,\pnt)} & = \abs{\dotp{\wtv}{\pnt} - \dotp{\sd}{\pnt}} = \abs{\dotp{\wtv - \sd}{\pnt}}\\
	                                               & \leq \norm{\wtv - \sd}\norm{\pnt} \leq \frac{\epsilon}{2d} \times \sqrt{d} = \frac{\epsilon}{2\sqrt{d}},
	\end{align*}
	where we have used the Cauchy-Schwarz inequality for the first
	inequality, the upper bound on $\norm{\wtv - \sd}$ derived earlier, along with $\norm{\pnt} \leq \sqrt{d}$ for the second inequality.

	Let $x_1, x_2, \ldots, x_k \in \PntSet$ be the top-$k$ points along direction $\wtv$, i.e., $x_i = \kPoint_i(\wtv,\PntSet)$. Also, let $y_k$ be the top-$k$
	point along direction $\bar{\wtv}$. As remarked we can assume,
	$\Score(\wtv,x_i) \geq \Score(\wtv,x_k) \geq \frac{1}{(1-\epsilon)\sqrt{d}}$.
	Now, for any $i = 1, 2, \ldots, k$ we have that,
	\begin{align*}
	\Score(\sd,x_i) & \geq \Score(\wtv,x_i) - \frac{\epsilon}{2\sqrt{d}}\\
									 & \geq \Score(\wtv,x_i) - \frac{(1-\epsilon)\epsilon}{2} \Score(\wtv,x_i)\\
	       &= \Score(\wtv,x_i)\left(1 - \frac{(1-\epsilon)\epsilon}{2}\right)\\
	       & \geq \Score(\wtv,x_k)\left(1 - \frac{(1-\epsilon)\epsilon}{2}\right)   .
	\end{align*}
    The first inequality follows by $\Eqref{eqnA}$, and the second inequality holds since
    $\Score(\wtv,x_i) \geq \Score(\wtv,x_k) > \frac{1}{(1-\epsilon)\sqrt{d}}$.
	This implies that there are $k$ points whose scores are each at least
	$\Score(\wtv,x_k)\left(1 - \frac{(1-\epsilon)\epsilon}{2} \right)$, and therefore the $k$-th best score along $\sd$, i.e., $\Score(\sd,y_k)$, is at least $\Score(\wtv,x_k)\left(1 - \frac{(1-\epsilon)\epsilon}{2}\right)$. Now, the algorithm guarantees that there is a point $q \in Q$ such that
	$\Score(\bar{\wtv},q) \geq (1-\epsilon)\Score(\bar{\wtv},y_k)$. We claim
	that this $q$ ``settles'' direction $\wtv$ as well, up-to the factor $(1-\epsilon)^2$. Indeed,
	\begin{align*}
	\Score(\wtv,q) &\geq \Score(\bar{\wtv},q) - \frac{\epsilon}{2\sqrt{d}} \geq (1-\epsilon)\Score(\bar{\wtv},y_k) - \frac{\epsilon}{2\sqrt{d}} \\
		       &\geq (1-\epsilon)\left(1 - \frac{(1-\epsilon)\epsilon}{2}\right) \Score(\wtv,x_k) - \frac{\epsilon}{2\sqrt{d}}\\
		       &\geq (1-\epsilon)\left(1 - \frac{(1-\epsilon)\epsilon}{2}\right)\Score(\wtv,x_k) - \frac{(1-\epsilon)\epsilon}{2}\Score(\wtv,x_k) \\
		       &= (1-\epsilon)(1-\epsilon + \epsilon^2/2)\Score(\wtv,x_k) \geq (1-\epsilon)^2\Score(\wtv,x_k)\\
                &\geq (1-2\epsilon)\Score(\wtv,x_k).
	\end{align*}
	This completes the proof.

Putting everything together, we obtain the following:
\begin{theorem}
\label{Theor1}
Let $\PntSet\subset \posPoints$ be a set of $n$ points in $\Re^d$, $k\geq 1$ an integer,
and $\epsilon>0$ a parameter. Let $\minSz$ be the minimum size of a $(k,\epsilon)$-regret set
of $\PntSet$. A subset $\SbSet\subseteq \PntSet$ can be computed in $O\left(\frac{n}{\epsilon^{d-1}}\log (n)\log\left(\frac{1}{\epsilon}\right)\right)$
expected time such that $\SbSet$ is a $(k,2\epsilon)$-regret set of $\PntSet$. The size of $\SbSet$
is $O(\minSz\log \minSz)$ for $d\geq 4$ and $O(\minSz)$ for $d\leq 3$.
\end{theorem}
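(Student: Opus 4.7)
The plan is to assemble the theorem directly from the pieces that have already been proved: the scaling invariance of \lemref{Transfs}, the hitting-set formulation of \lemref{lem:HS}, and the net-transfer result of \lemref{MainLemma}, together with the known guarantees of the Br\"onnimann--Goodrich greedy hitting-set algorithm. So first I would argue that replacing $\PntSet$ by its scaled copy $M\PntSet$ (as done in lines~1--2 of \algoref{HitSetAlg}) preserves the set of $(k,\epsilon)$-regret sets by \lemref{Transfs}; this lets us assume $\PntSet\subset [0,1]^d$ and that a coordinate-wise basis $\Basis\subseteq \PntSet$ exists, so \lemref{Ltransf} and hence \lemref{MainLemma} apply to the scaled input.

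For correctness, I would invoke \lemref{MainLemma} directly: if $\SbSet'$ is any hitting set of $\Sigma_{\Net}=(\PntSet,\AllSetsNet)$, then $\SbSet=\SbSet'\cup \Basis$ is a $(k,2\epsilon)$-regret set of $\PntSet$. The greedy algorithm \textsc{Greedy\_HS} of~\cite{Bronniman1995} indeed returns a hitting set of $\Sigma_{\Net}$, so the output of \algoref{HitSetAlg} satisfies the regret guarantee.

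For the size bound, the key observation is that any $(k,\epsilon)$-regret set of $\PntSet$ is a hitting set of the full range space $\Sigma=(\PntSet,\AllSets)$ by \lemref{lem:HS}, and since $\AllSetsNet\subseteq \AllSets$, it is also a hitting set of $\Sigma_{\Net}$. Hence the optimum hitting set for $\Sigma_{\Net}$ has size at most $\minSz$. The Br\"onnimann--Goodrich algorithm returns a hitting set whose size is bounded in terms of the $\eps$-net size of the range space: for halfspace-induced (linearly separable) range spaces in $\Re^d$, one has $\eps$-nets of size $O\bigl((1/\eps)\log(1/\eps)\bigr)$ for $d\ge 4$ and $O(1/\eps)$ for $d\le 3$, yielding hitting sets of size $O(\minSz\log\minSz)$ and $O(\minSz)$ respectively. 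Adding the at most $d$ points of $\Basis$ does not change these asymptotics.

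For the running time, I would track the four nontrivial steps. Computing $\Basis$ and the scaling take $O(nd)$ time. Drawing a random $\delta$-net of $\AllDir$ with $\delta=\eps/(2d)$ produces $|\Net|=O\bigl((1/\eps^{d-1})\log(1/\eps)\bigr)$ directions with probability at least $1/2$; for each $\wtv\in\Net$, computing $\SetEps{\wtv}$ requires selecting the top-$k$ score and scanning $\PntSet$ in $O(n)$ time, for a total of $O\bigl((n/\eps^{d-1})\log(1/\eps)\bigr)$. Finally, the Br\"onnimann--Goodrich greedy procedure on $\Sigma_{\Net}$ runs in time $O\bigl((n/\eps^{d-1})\log n\log(1/\eps)\bigr)$, which dominates. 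The main obstacle in writing up this argument is being precise about what is imported from \cite{Bronniman1995}: namely, both the approximation ratio (which depends on the $\eps$-net size of the underlying range space of linearly separable subsets, giving the split between $d\le 3$ and $d\ge 4$) and the expected running time bound. Once those are cited cleanly, all other steps are bookkeeping on the already-established lemmas.
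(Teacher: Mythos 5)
Your proposal is correct and follows essentially the same route as the paper: correctness from \lemref{MainLemma} after the scaling step justified by \lemref{Transfs}, the bound $\mathrm{OPT}(\Sigma_{\Net})\leq \minSz$ via \lemref{lem:HS} and $\AllSetsNet\subseteq\AllSets$, the size guarantees from the Br\"onnimann--Goodrich greedy algorithm (with the $d\leq 3$ versus $d\geq 4$ split coming from halfspace $\eps$-net sizes), and the same accounting of the running time dominated by the greedy hitting-set step. The only differences are expository --- you spell out details the paper leaves implicit, such as the containment $\AllSetsNet\subseteq\AllSets$ and the negligible cost of adding the at most $d$ basis points.
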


\mparagraph{Min-error RMS}
Recall that the \errorProb{} problem takes as input a parameter $r$, and returns a subset
$\SbSet\subseteq \PntSet$ of size at most $r$ such that $\ell_k(\SbSet) = \ell(r)$.
We propose a bicriteria approximation algorithm for the \errorProb{} problem by using
\algoref{HitSetAlg} for the \sizeProb{} version of the problem.
Let $c$ be a sufficiently large constant.
We perform a binary search on the values of the regret ratio $\bar{\epsilon}$ in the
range $[0,1]$. For each value of the $\bar{\epsilon}$ we run \algoref{HitSetAlg} with parameter $\epsilon=\bar{\epsilon}$ and let $\SbSet_{\bar{\epsilon}}$
be the returned set. If $\cardin{\SbSet_{\epsilon}}>cr\log r$ we set $\bar{\epsilon}\leftarrow 2\bar{\epsilon}$,
otherwise $\bar{\epsilon}\leftarrow \bar{\epsilon}/2$ and we continue the binary search with the new parameter.
We stop when we find a set $\SbSet_{\hat{\epsilon}}$ such that $\cardin{\SbSet_{\hat{\epsilon}}}<cr\log r$ and
$\cardin{\SbSet_{\hat{\epsilon}/2}}>cr\log r$. The stopping condition satisfies that
$\ell(cr\log r)\leq \hat{\epsilon}\leq 2\ell(r)$.
The following theorem summarizes the results of the \errorProb{} version of the problem.
\begin{theorem}
%
Let $\PntSet\subset \posPoints$ be a set of $n$ points in $\Re^d$, $k\geq 1$ an integer,
and $r>0$ a parameter. Let $\minError{r}$ be the minimum regret ratio of a subset of
$\PntSet$ of size at most $r$. A subset $\SbSet\subseteq \PntSet$ can be computed in
$O\left(\min\{\frac{n}{\ell_k(\SbSet)^{d-1}}\log (n)\log\left(\frac{1}{\ell_k(\SbSet)}\right),n^d\}\right)$ expected time such that
$\minError{cr\log r}\leq \ell_k(\SbSet)\leq 2\minError{r}$ for $d\geq 4$ and
$\minError{cr}\leq \ell_k(\SbSet)\leq 2\minError{r}$ for $d\leq 3$ for a sufficiently large constant $c$.
The size of $\SbSet$ is $O(r\log r)$ for $d\geq 4$ and $O(r)$ for $d\leq 3$.
\end{theorem}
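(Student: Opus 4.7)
The plan is to reduce the \errorProb{} problem to a logarithmic number of calls to \algoref{HitSetAlg} and exploit Theorem~\ref{Theor1} as a subroutine via the doubling/halving search on $\bar\epsilon$ sketched before the theorem statement.

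First I would establish the duality between $\minError{\cdot}$ and $\minSize{\cdot}$: if $\bar\epsilon \geq \minError{r}$ then $\minSize{\bar\epsilon}\leq r$, since any size-$r$ witness of $\minError{r}$ also witnesses $\minSize{\bar\epsilon}$, and conversely $\minSize{\bar\epsilon}\leq r$ implies $\minError{r}\leq \bar\epsilon$. Combining this with Theorem~\ref{Theor1}, a call to \algoref{HitSetAlg} at parameter $\bar\epsilon\geq\minError{r}$ returns a set of size $O(\minSize{\bar\epsilon}\log\minSize{\bar\epsilon})=O(r\log r)$ for $d\geq 4$ and $O(r)$ for $d\leq 3$. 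Equivalently, whenever the returned set exceeds $cr\log r$ (respectively $cr$) points for a sufficiently large constant $c$, we can certify $\bar\epsilon<\minError{r}$. This is precisely the oracle needed to drive the search.

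Next I would execute the doubling/halving search described in the paragraph preceding the theorem, terminating at a value $\hat\epsilon$ with $\cardin{\SbSet_{\hat\epsilon}}\leq cr\log r$ and $\cardin{\SbSet_{\hat\epsilon/2}}>cr\log r$. The second condition forces $\hat\epsilon/2<\minError{r}$, so $\hat\epsilon<2\minError{r}$, while Theorem~\ref{Theor1} gives $\ell_k(\SbSet_{\hat\epsilon})\leq t\hat\epsilon$ for any approximation factor $t>1$ chosen in that theorem. Choosing $t$ close to $1$ (and, if desired, replacing the doubling factor $2$ by $\beta$ slightly larger than $1$) yields $\ell_k(\SbSet)\leq 2\minError{r}$. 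The matching lower bound $\minError{cr\log r}\leq \ell_k(\SbSet)$ is immediate since $\cardin{\SbSet}\leq cr\log r$ and $\minError{\cdot}$ is monotone in its argument.

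For the running time, each call to \algoref{HitSetAlg} at parameter $\bar\epsilon$ costs $O(\frac{n}{\bar\epsilon^{d-1}}\log n\log\frac{1}{\bar\epsilon})$ by Theorem~\ref{Theor1}. Because the sequence of $\bar\epsilon$ values forms a geometric progression ending at $\hat\epsilon=\Theta(\ell_k(\SbSet))$, the final call dominates the total and produces the first term of the $\min\{\cdot,\cdot\}$ bound. Whenever this quantity would exceed $n^d$, I would bypass the $\delta$-net entirely: by \lemref{lem:HS}(i) the full range space $\Sigma=(\PntSet,\AllSets)$ has only $O(n^d)$ sets, so we can enumerate $\AllSets$ explicitly via the arrangement $A(\HprSet)$ of \lemref{polyt} and run the greedy hitting-set algorithm on $\Sigma$ directly in $O(n^d)$ time, yielding the stated ceiling. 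The main obstacle is showing that the doubling/halving search halts in a small number of iterations; I would argue this by observing that the thresholds at which $\minSize{\cdot}$ changes are determined by the $O(n^{2d})$ vertices of $A(\HprSet)\cap\cell^{\downarrow}$ analyzed in \lemref{polyt}, each rational and of $\mathrm{poly}(n,d,b)$ bit complexity, so $\minError{r}$ admits a $\mathrm{poly}(n,d,b)$-bit representation and the search terminates in $O(\log(1/\ell_k(\SbSet)))$ steps, comfortably within the claimed running time.
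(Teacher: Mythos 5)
Your proposal is correct and follows essentially the same route as the paper: a doubling/halving (binary) search on $\bar{\epsilon}$ that wraps \algoref{HitSetAlg}, using Theorem~\ref{Theor1} together with the size-versus-error duality to certify $\bar{\epsilon}<\minError{r}$ whenever the returned set is larger than $cr\log r$, and reading off the two-sided bound from the stopping condition. Your additional care about the constant factor (choosing the approximation factor $t$ and the step ratio so their product is at most $2$), about termination via the bit complexity of the critical values, and about the $O(n^d)$ fallback are elaborations of points the paper leaves implicit, not a different approach.
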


\mparagraph{Remarks}
\begin{enumerate}[(i)]
\item For $k=1$ the optimum solution of the \Probl will always be a subset of the skyline of $\PntSet$.
Hence, to reduce the running time we can only run our algorithms on skyline points.
We can show that we still get the same approximation factors.
\item Instead of choosing $O(\frac{1}{\epsilon^{d-1}})$ directions in one step and find a Hitting Set,
we can sample in stages and maintain a hitting set until we find a $(k,\epsilon)$-regret set.
\end{enumerate} 
\section{Experiments} \seclab{experiments}

\begin{figure*}
\begin{subfigure}{0.25\textwidth}
  \includegraphics[width=\linewidth]{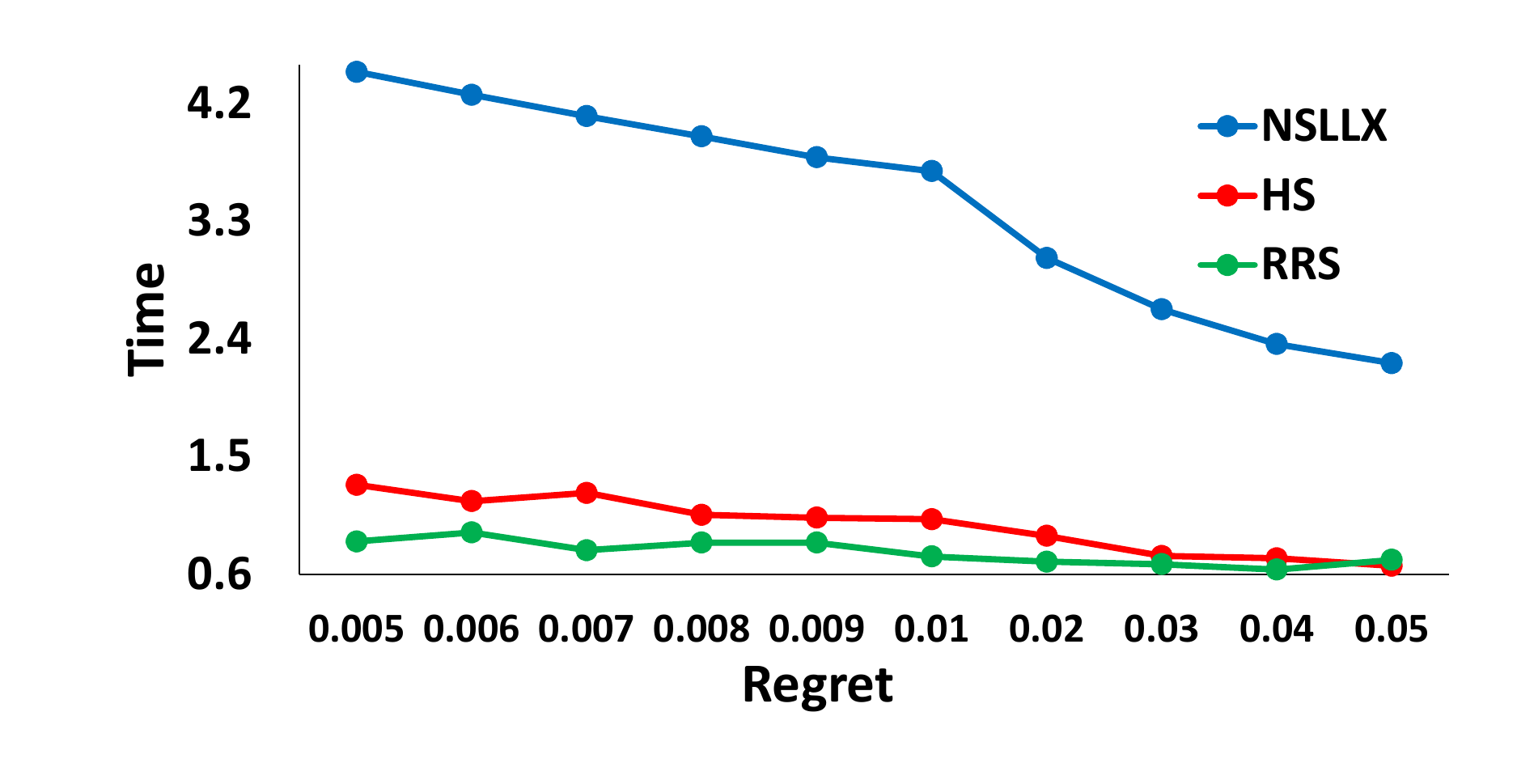}
  \caption{Sphere, $k=1$}\label{fig:LOGSD4_1}
\end{subfigure}\hfill
\begin{subfigure}{0.25\textwidth}
  \includegraphics[width=\linewidth]{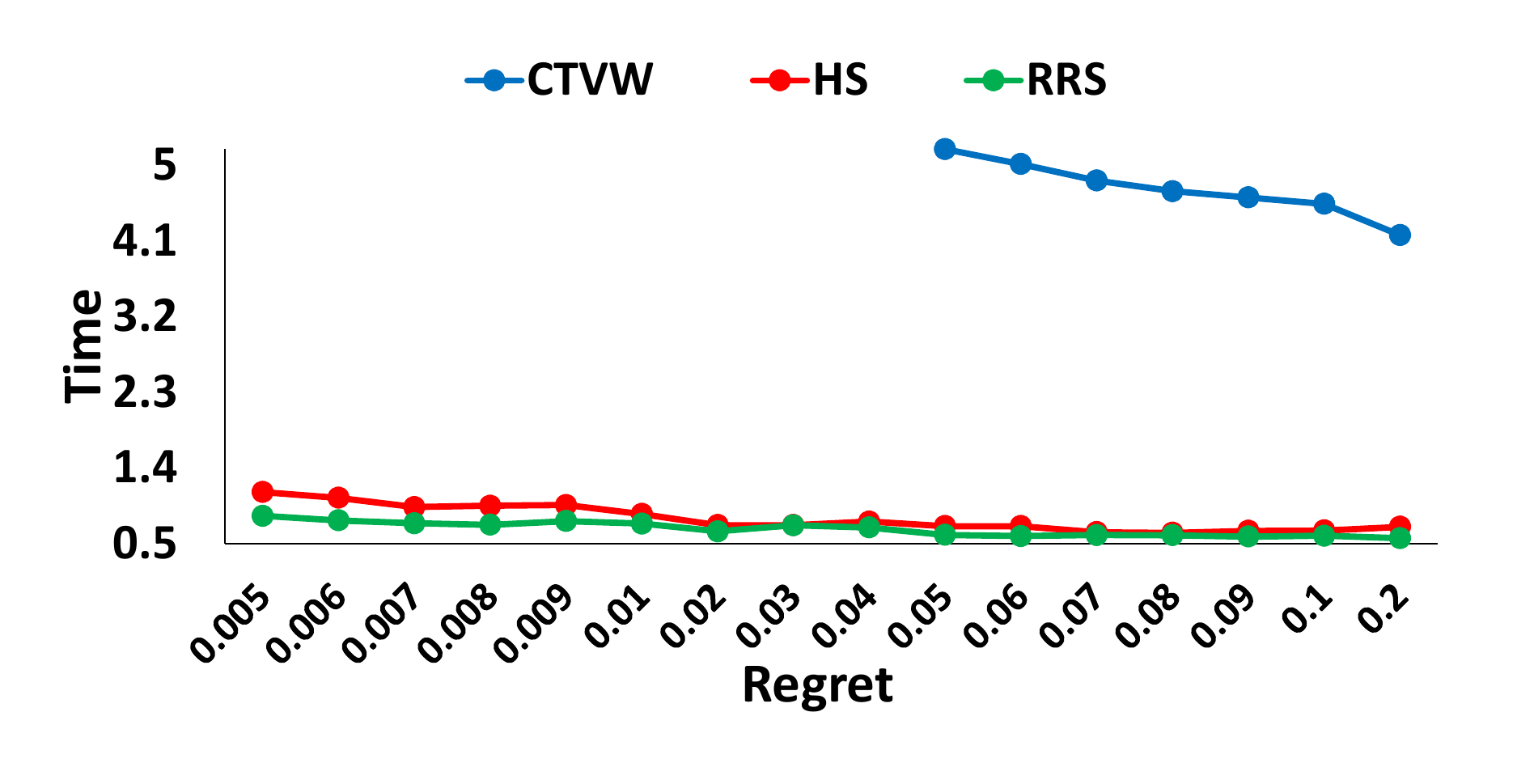}
  \caption{Sphere, $k=10$}\label{fig:LOGSD4_1}
\end{subfigure}\hfill
\begin{subfigure}{0.25\textwidth}
  \includegraphics[width=\linewidth]{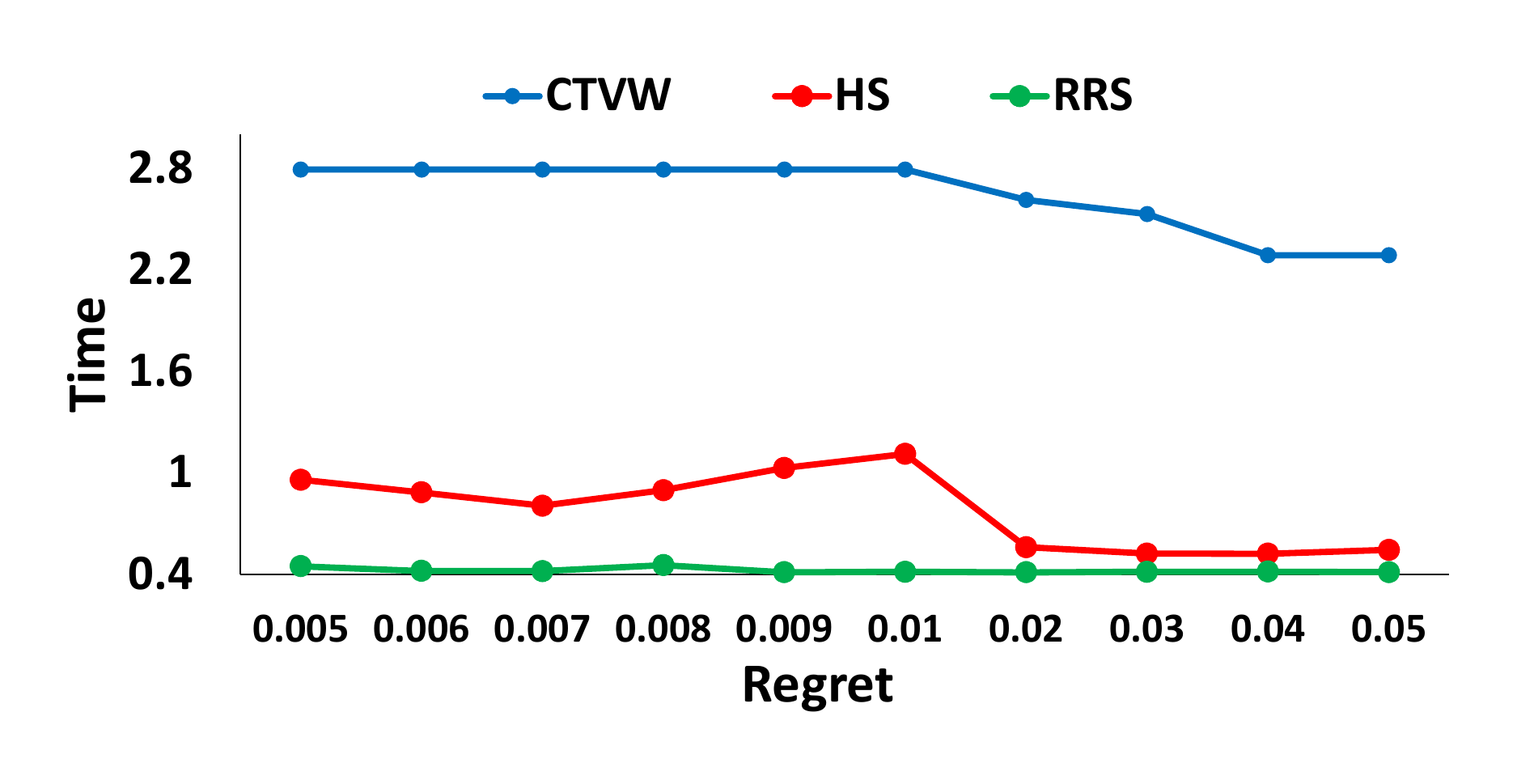}
  \caption{AntiCor, $\sigma=0.1$, $k=10$}\label{fig:LOGAntiCor_10}
\end{subfigure}\hfill
\begin{subfigure}{0.25\textwidth}
  \includegraphics[width=\linewidth]{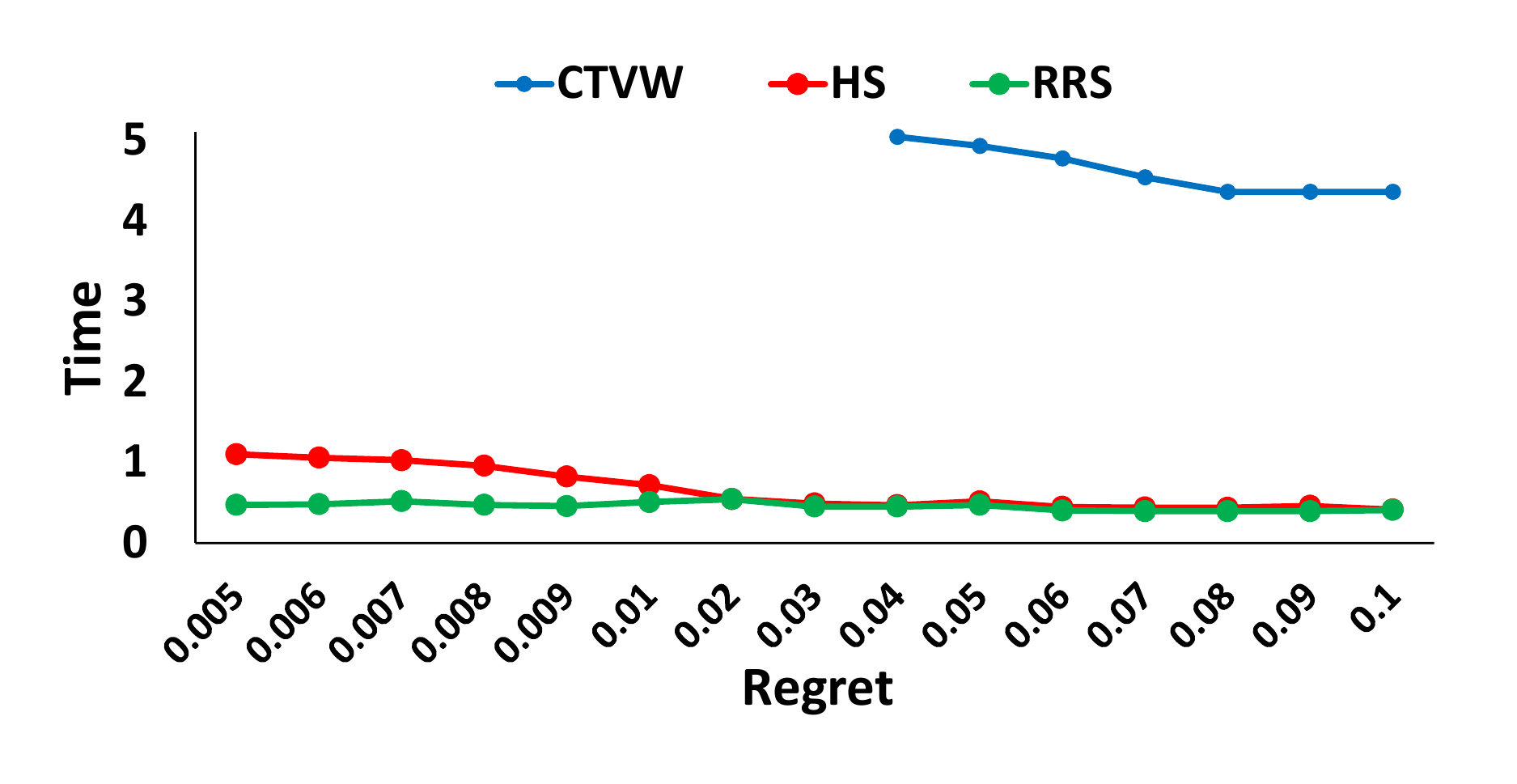}
  \caption{AntiCor, $\sigma=0.01$, $k=10$}\label{fig:LOGAntiCors001_10}
\end{subfigure}\hfill
\caption{$\log_{10}$-scale running time.}
\label{fig:LOGTimeSPhere}
\end{figure*}

\begin{figure}
\begin{subfigure}{0.24\textwidth}
  \includegraphics[width=\linewidth]{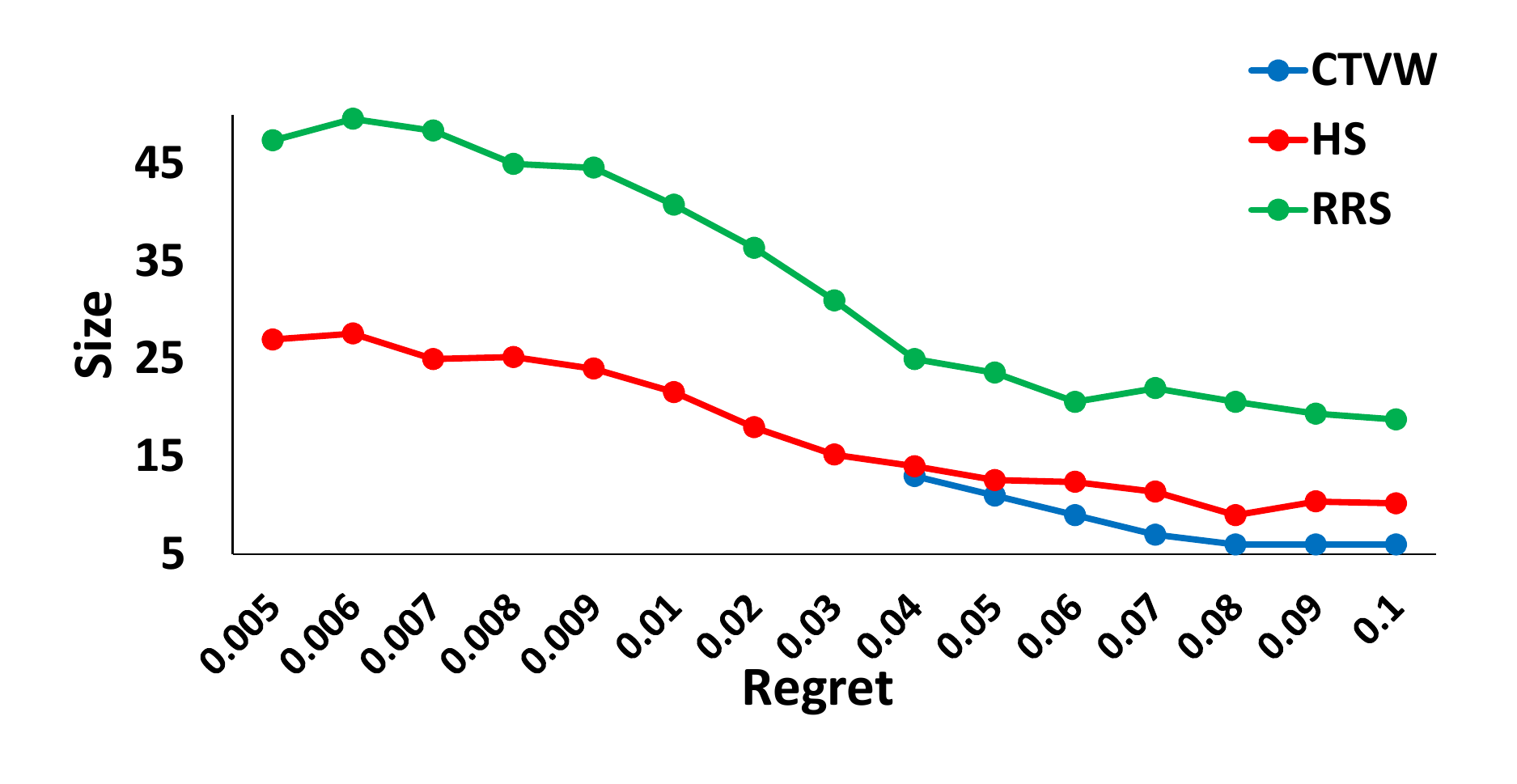}
  \caption{Regret Ratio}\label{fig:AntiCors001K10}
\end{subfigure}\hfill
\begin{subfigure}{0.24\textwidth}
  \includegraphics[width=\linewidth]{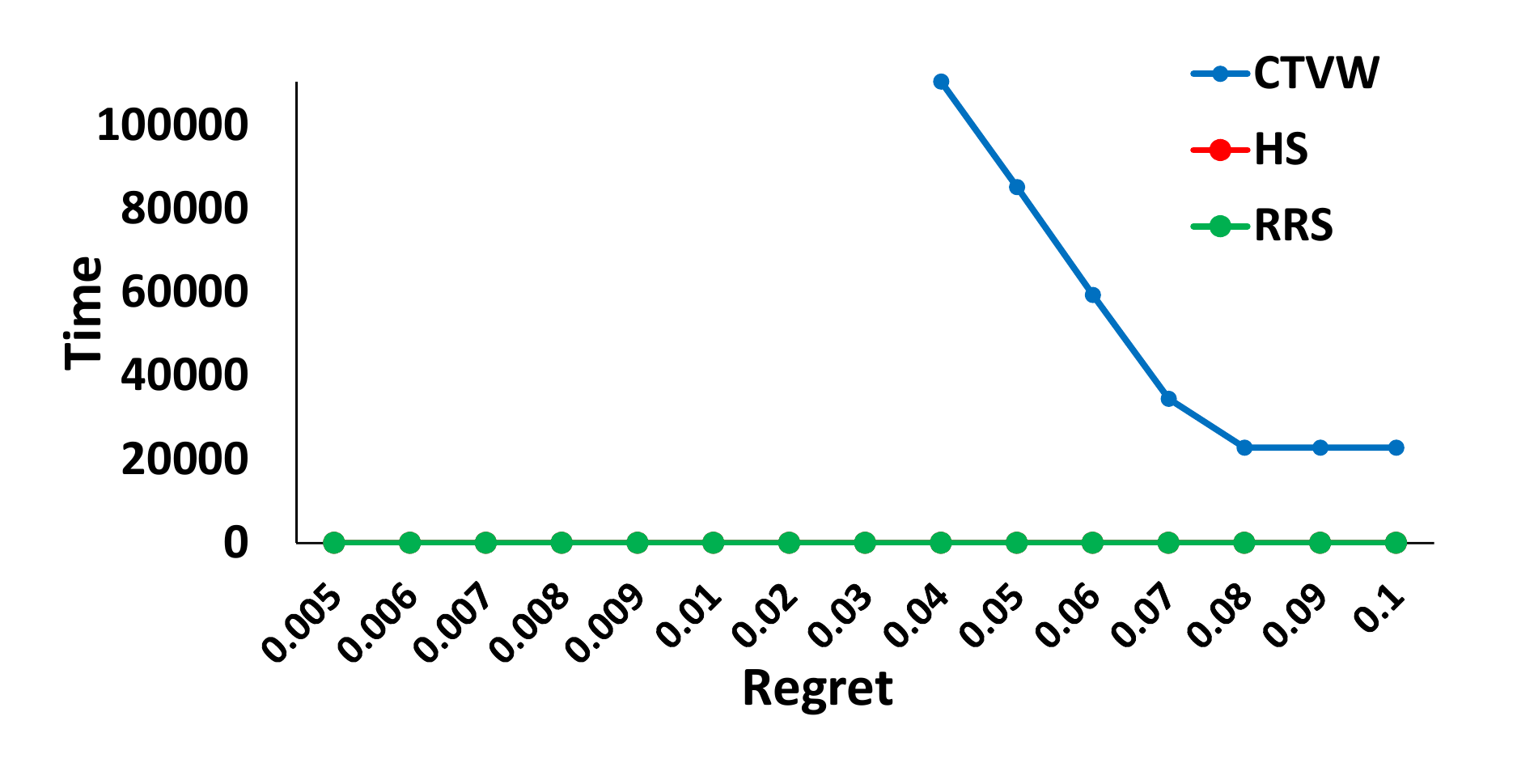}
  \caption{Running Time}\label{fig:AntiCors001TimeK10}
\end{subfigure}\
\caption{Regret ratio and running time of AntiCor, $\sigma=0.01$, $k=10$.}
\label{fig:AntiCorD4s001}
\end{figure}

We have implemented our algorithms as well as the current state of the art, namely, the greedy
algorithms described in~\cite{nanongkai2010regret, chester2014computing}, and experimentally
evaluated their relative performance.

\mparagraph{Algorithms}
In particular, the four algorithms we evaluate are the following:

  \textbf{\textit{\CoreSet{}}} is the \textbf{\textit{R}}andomized \textbf{\textit{R}}egret \textbf{\textit{S}}et
	algorithm, based on coresets, described in \secref{coreSet}. In our implementation,
	rather than choosing $O(\frac{1}{\epsilon^{(d-1)/2}})$ random preferences all at once,
	we choose them in stages and maintain a subset $\SbSet$ until $\ell_k(\SbSet)\leq \epsilon$.

\smallskip

  \textbf{\textit{\HS{}}} is the \textbf{\textit{H}}itting \textbf{\textit{S}}et algorithm
	presented in \secref{approxAlg}, and our implementation incorporates the remarks made at the end of \secref{approxAlg}.

\smallskip

  \textbf{\textit{\GreedyKI{}}} is the greedy algorithm for $1$-\Probl described in \cite{nanongkai2010regret},
	which iteratively finds the preference $\wtv$ with the maximum regret using an LP algorithm
	and adds $\kPoint_1(\wtv,\PntSet)$ to the regret set. We use Gurobi software \cite{gurobi}
  	to solve the LP problems efficiently. We remark that this algorithm, as a preprocessing,
	removes all data points that are not on the skyline.

\smallskip

  \textbf{\textit{\GreedyKII{}}} is the extension of the \GreedyKI{} algorithm for $k>1$, proposed by
  \cite{chester2014computing}, and it is the state of the art for the $k$-\Probl{}. In \cite{chester2014computing}
  they discard all the points not on the skyline as preprocessing to run the experiments.
  The CTVW algorithm solves many (in the worst-case, $\Omega(n)$) instances of large LP
  	programs to add the next point to the regret set. The number of LP programs is
	controlled by a parameter $T$---a larger $T$ increases the probability of adding
	a good point to the regret set, but also leads to a slower algorithm.
  In the original paper, the authors suggested a value of $T$ that is exponential in $k$;
  for instance, $T \geq 2.4\cdot 10^7$ for $k=10$, which is clearly not practical.
	In practice, Chester et al. \cite{chester2014computing} used $T=54$ for $k=4$,
	which is also the value we adopted in our experiments for comparison.
	Indeed, using $T>54$ increases the running time but does not lead to significantly better
	regret sets.

\smallskip

The algorithms are implemented in C++ and we run on a $64$-bit machine with four $3600$ MHz cores
and $16$GB of RAM with Ubuntu $14.04$.

In evaluating the quality $\ell_k(\SbSet)$ of a regret set $\SbSet\subseteq \PntSet$,
we compute the regret for a large set of random preferences (for example for $d=3$ we take $20000$
preferences), and use the maximum value found as our estimate. In fact, this approach
gives us the distribution of the regret over the entire set of preference vectors.

\mparagraph{Datasets}
We use the following datasets in our experiments, which include both synthetic and real-world.

 \textbf{\textit{BB}}
 is the basketball dataset \footnote{\texttt{databasebasketball.com}} that has been
 widely used for testing algorithms for skyline computation, top-$k$ queries,
 and the $k$-\Probl problem, \cite{chester2014computing, jasna2014algorithm, kulkarni2015skyline, kulkarni2016parallel, vlachou2010reverse}.
 Each point in this dataset represents a basketball player and its coordinates contain
five statistics (points, rebounds, blocks, assists, fouls) of the player.

\smallskip

\textbf{\textit{ElNino}}
	is the ElNino dataset \footnote{\texttt{archive.ics.uci.edu/ml/datasets/El+Nino}}
	containing oceanographic data such as wind speed, water temperature, surface temperature etc.
	measured by buoys stationed in the Pacific ocean, and also used in \cite{chester2014computing}.
	This dataset has some missing values, which we have filled in with the minimum value of
	a coordinate for the point. If some values are negative (where it does not make sense to
		have negative values) they are replaced by the absolute value.

\smallskip

\textbf{\textit{Colors}} is a data set containing the mean, standard deviation, and skewness of
	each $H$, $S$, and $V$ in the $HSV$ color space of a color
	image.\footnote{\texttt{www.ics.uci.edu/~mlearn/MLRepository.html}}
	This set is also a popular one for evaluating skylines and regret sets
	(see \cite{bartolini2008efficient, nanongkai2010regret}).

\smallskip

 \textbf{\textit{AntiCor}} is a synthetic set of points with \emph{anti-correlated} coordinates.
    Specifically, let $\hpr$ be the hyperplane with normal $\mathsf{n}=(1,\ldots, 1)$, and at distance
	$0.5$ from the origin.  To generate a point $\pnt$, we choose a random point $\widetilde{\pnt}$ on
		$\hpr\cap \posPoints$, a random number $t\backsim \mathcal{N}(0,\sigma^2)$, for a small standard deviation $\sigma$,
	and $\pnt=\widetilde{\pnt}+tn$.
	If $\pnt\in \posPoints$ we keep it, otherwise discard $\pnt$.
    	By design, many points lie on the skyline and the top-$k$ elements can differ significantly
    	even for nearby preferences.
	This data set is also widely used for testing top-$k$ queries or the skyline computation
	(see \cite{bks-skyline-01, nanongkai2010regret, vlachou2010reverse, morse2007efficient}).
    For our experiments we set $\sigma=0.1$ and generate $10000$ points.

\smallskip

 \textbf{\textit{Sphere}} is a set of points uniformly distributed on the unit sphere inside $\posPoints$,
	in which clearly all the points lie on the skyline. We generate the Sphere dataset with $15000$ points for $d=4$ (all points lie on the skyline).

\smallskip

 \textbf{\textit{SkyPoints}}
 	is a modification of the Sphere data set. We choose a small fraction of points from
	the Sphere data set and add, say, $20$ points that lie very close
	to $\pnt$ but are dominated by $\pnt$. For larger value of $k$, say $k>5$,
    considering only the skyline points is hard to decide which point is going to decrease
    the maximum regret ratio in the original set.
    We generate SkyPoints data set for $d=3$, $500$ points; with $100$ points on the skyline.

\begin{table}[t]
\centering
\begin{adjustbox}{width=0.45\textwidth}
\begin{tabular}{c | l | c | c | c}
ID & Description & d & n & Skyline \\ \hline
BB & Basketball & $5$ & $21961$ & $200$\\
ElNino & Oceanographic & $5$ & $178080$ & $1183$\\
Colors & Colors & $9$ & $68040$ & $674$\\
AntiCor & Anti-correlated points & $4$ & $10000$ & $657$\\
Sphere & Points on unit sphere & $4$ & $15000$ & $15000$ \\
SkyPoints & Many points close to skyline & $3$ & $500$ & $100$\\ \hline
\end{tabular}
\end{adjustbox}
\caption{Summary of datasets used in experiments.}
\tablelab{datasets}%
\end{table}

\smallskip

In evaluating the performance of algorithms, we focus on two main criteria, the runtime and
the regret ratio, but also consider a number of other factors that influence their performance
such as the value of $k$, the size of the skyline etc.

\CoreSet{} and \HS{} are both randomized algorithms so we report the average size
of the regret sets and the average running time computed over $5$ runs.
For $k=1$, we use the \GreedyKI{} algorithm, and for $k=10$, we use its extension,
the \GreedyKII{} algorithm.
In some plots there are missing values for the \GreedyKII{} algorithm,
because we stopped the execution after running it on a data set for $2$ days.

\begin{figure*}
\begin{subfigure}{0.25\textwidth}
  \includegraphics[width=\linewidth]{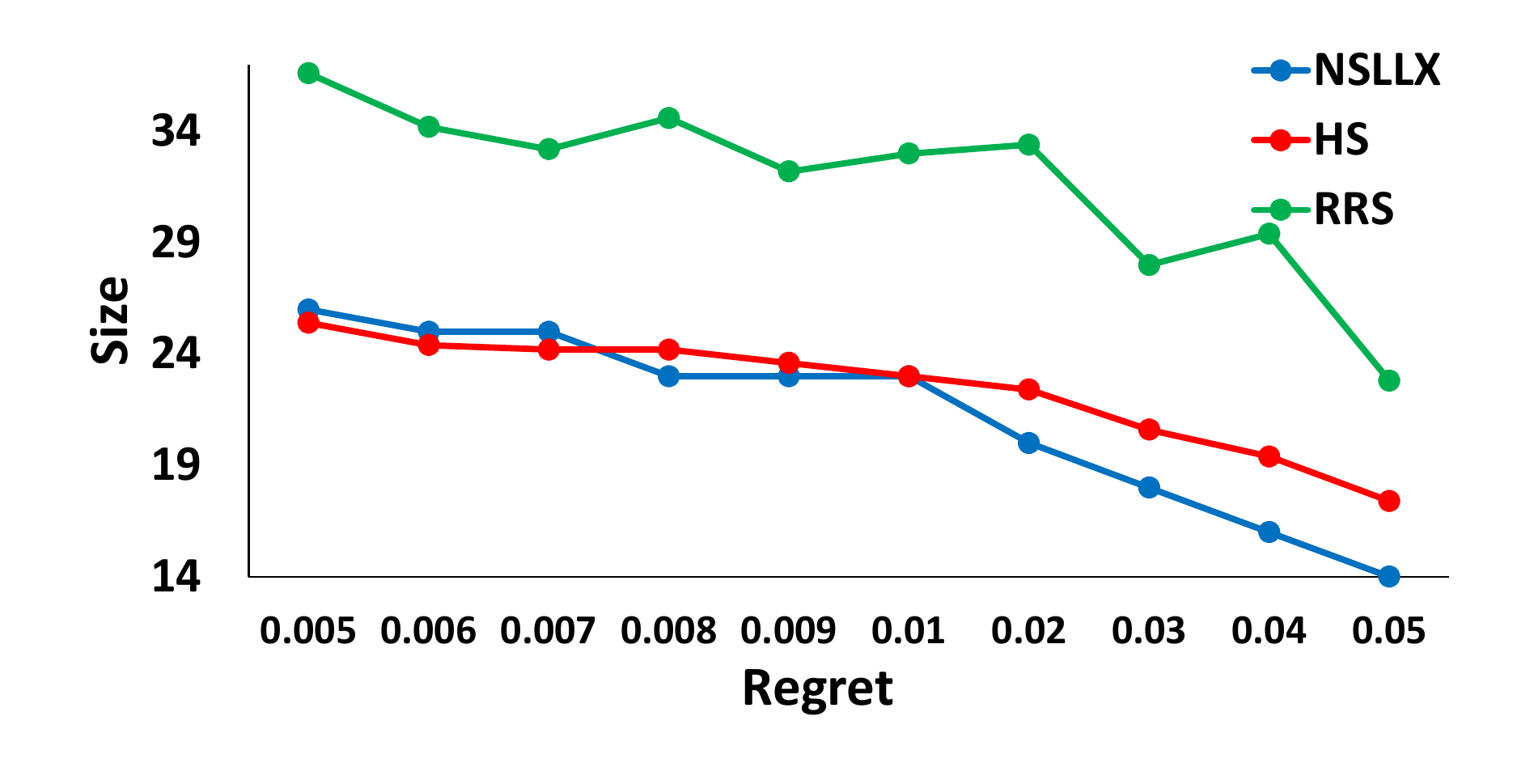}
  \caption{BB}\label{fig:BBM1}
\end{subfigure}\hfill
\begin{subfigure}{0.25\textwidth}
  \includegraphics[width=\linewidth]{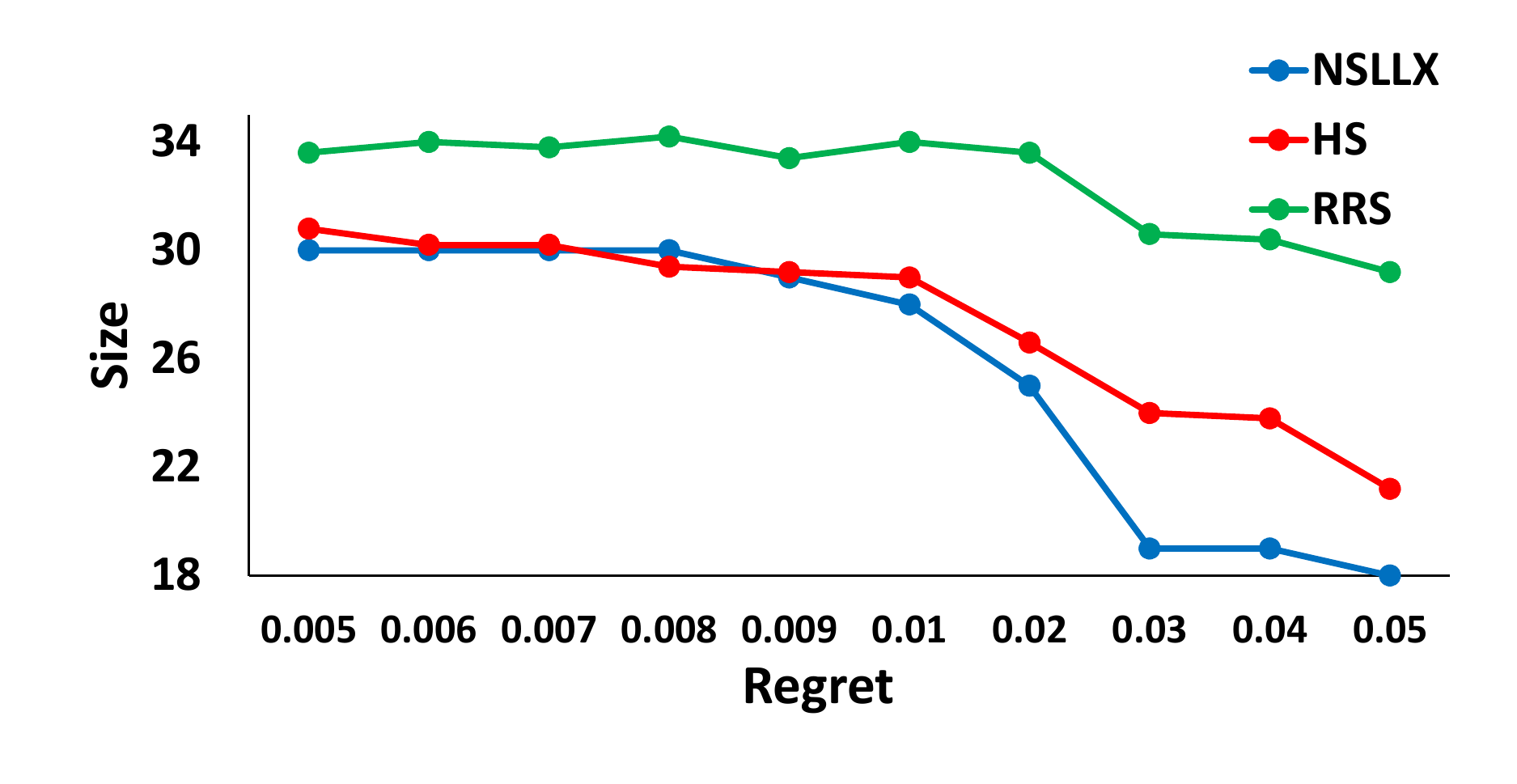}
  \caption{AntiCor}\label{fig:ACM1}
\end{subfigure}\hfill
\begin{subfigure}{0.25\textwidth}%
  \includegraphics[width=\linewidth]{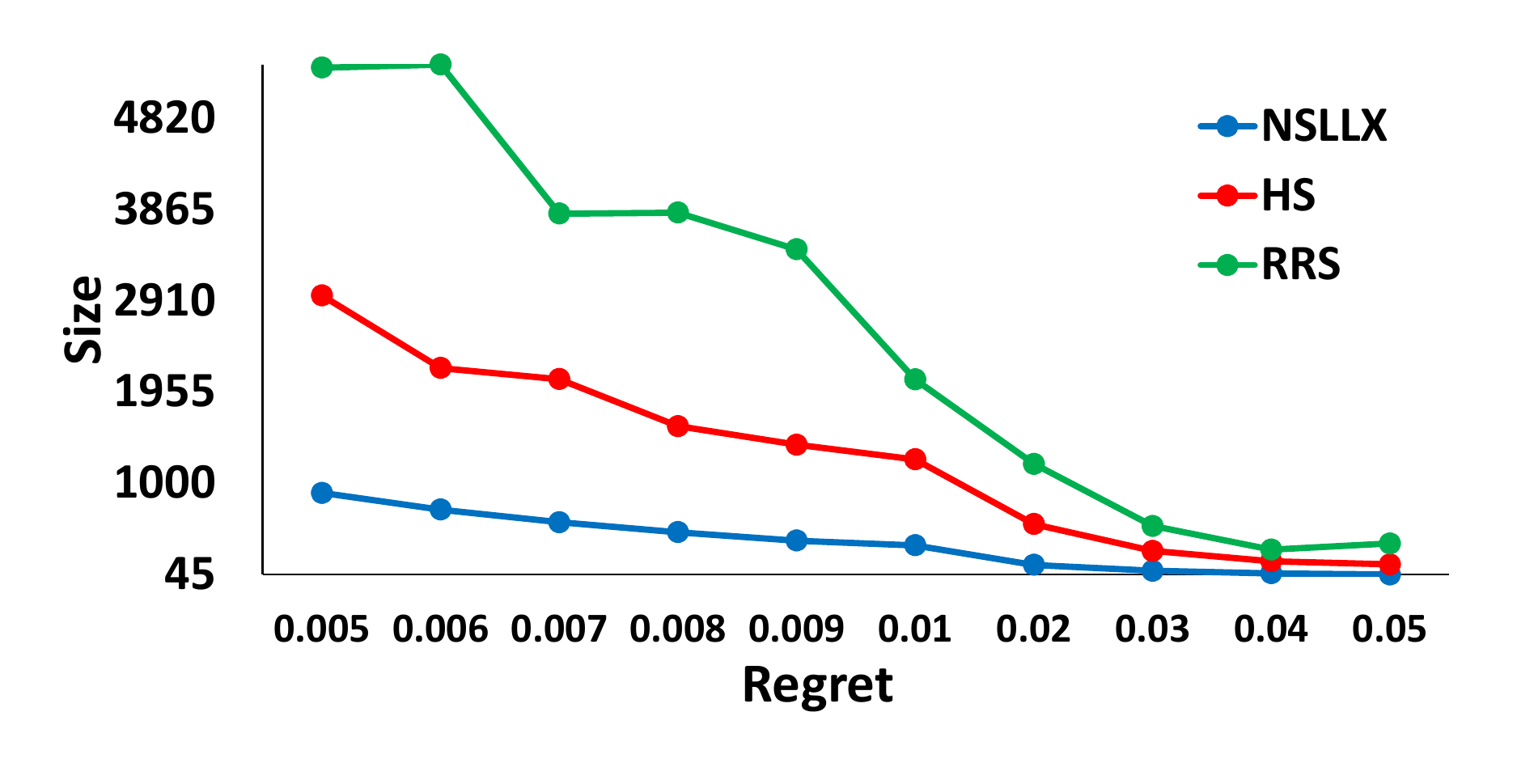}
  \caption{Sphere}\label{fig:SMK1}
\end{subfigure}\hfill
\begin{subfigure}{0.25\textwidth}%
  \includegraphics[width=\linewidth]{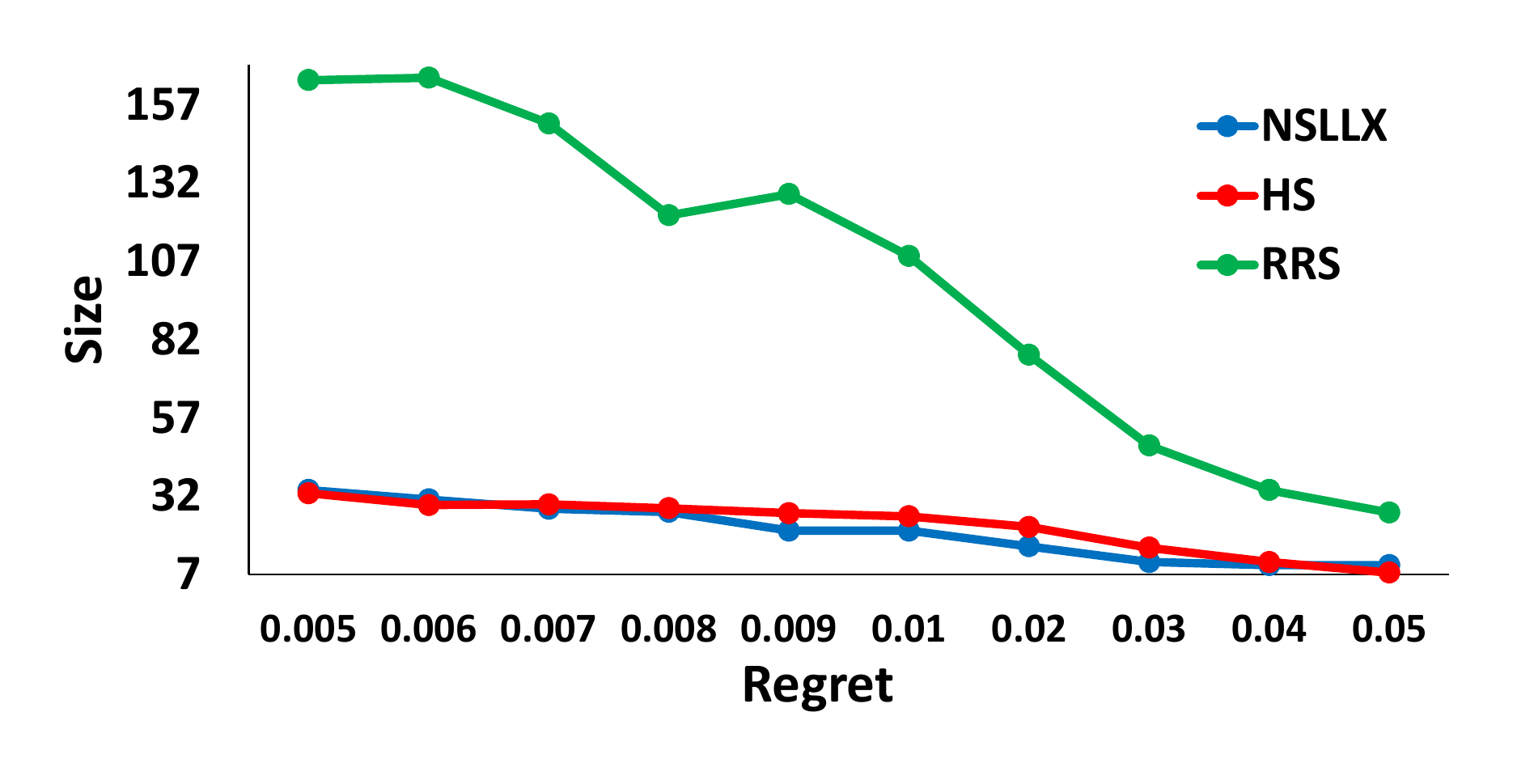}
  \caption{ElNino}\label{fig:ENMK1}
\end{subfigure}
\caption{Maximum regret ratio for $k=1$.}
\label{fig:MaxRegret1}
\end{figure*}

\begin{figure*}
\begin{subfigure}{0.25\textwidth}
  \includegraphics[width=\linewidth]{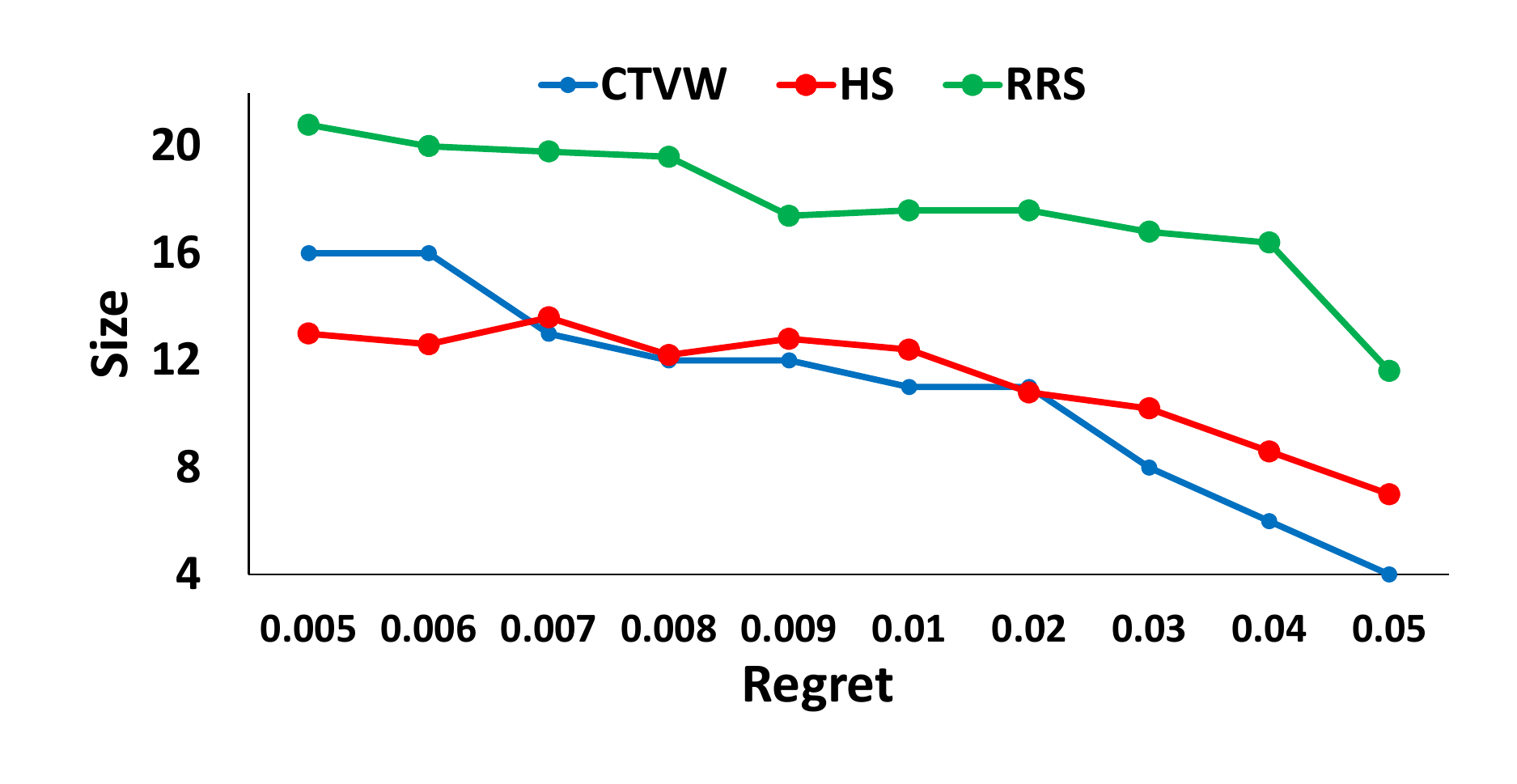}
  \caption{BB}\label{fig:BBM10}
\end{subfigure}\hfill
\begin{subfigure}{0.25\textwidth}
  \includegraphics[width=\linewidth]{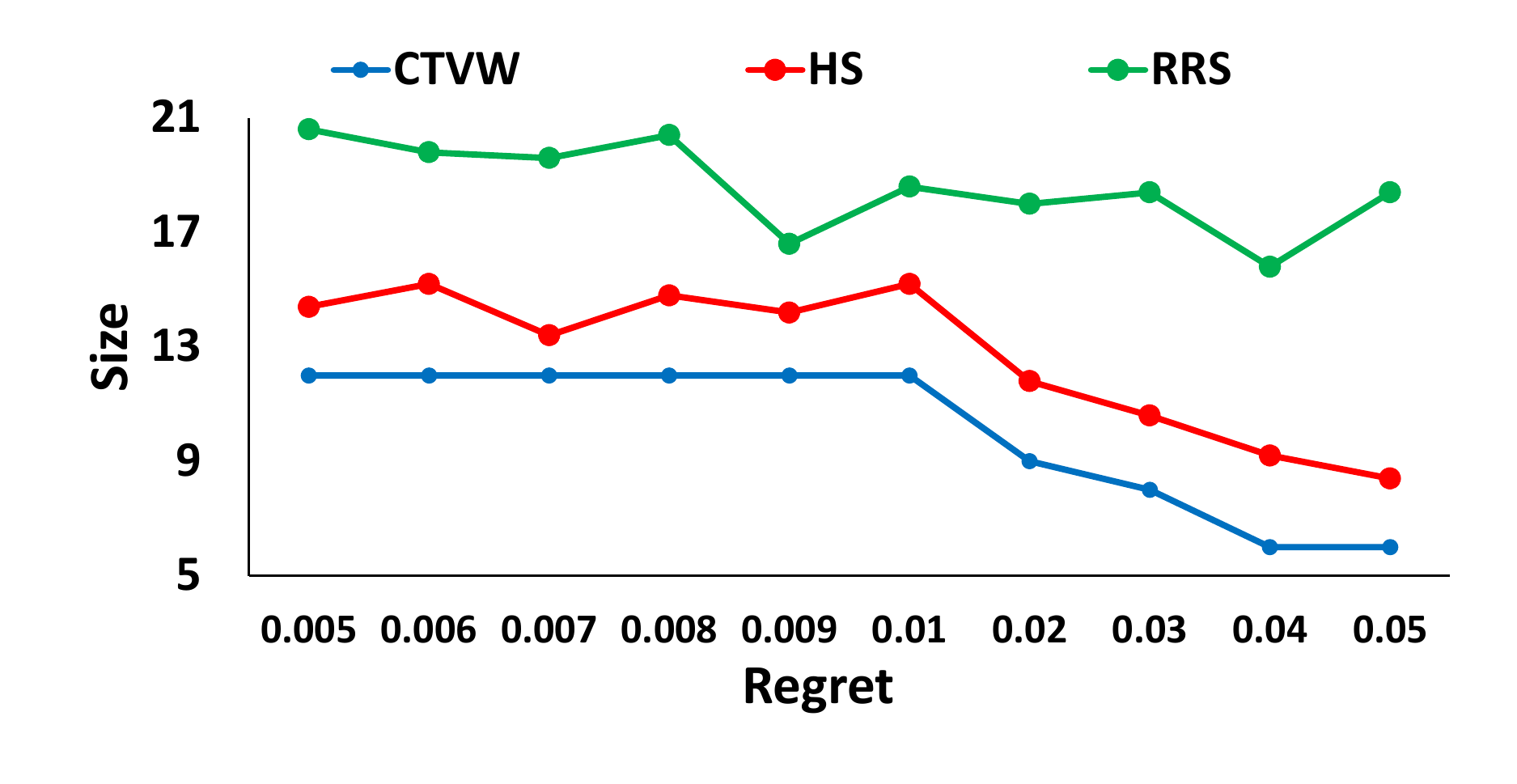}
  \caption{AntiCor}\label{fig:ACM10}
\end{subfigure}\hfill
\begin{subfigure}{0.25\textwidth}%
  \includegraphics[width=\linewidth]{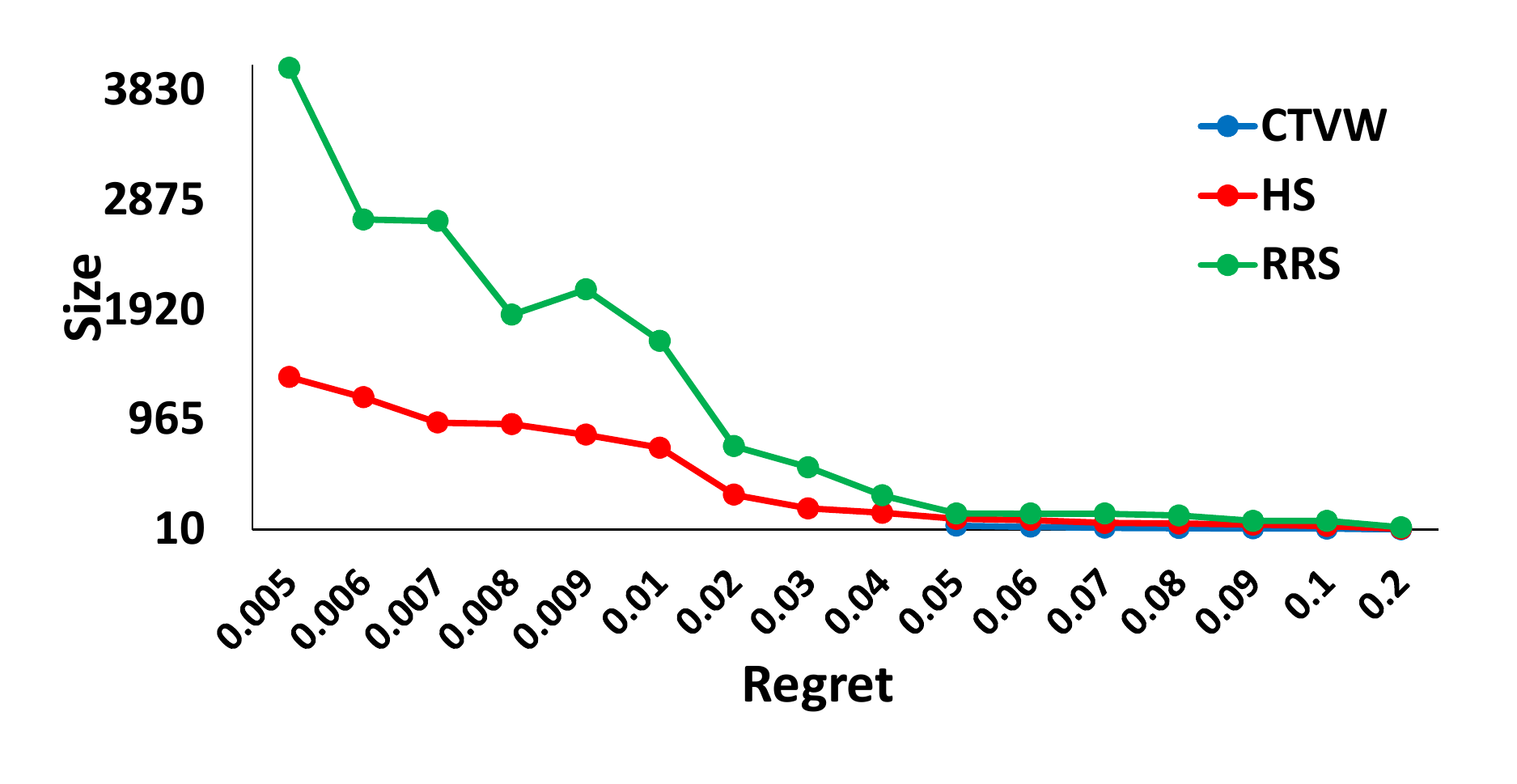}
  \caption{Sphere}\label{fig:SMK10}
\end{subfigure}\hfill
\begin{subfigure}{0.25\textwidth}%
  \includegraphics[width=\linewidth]{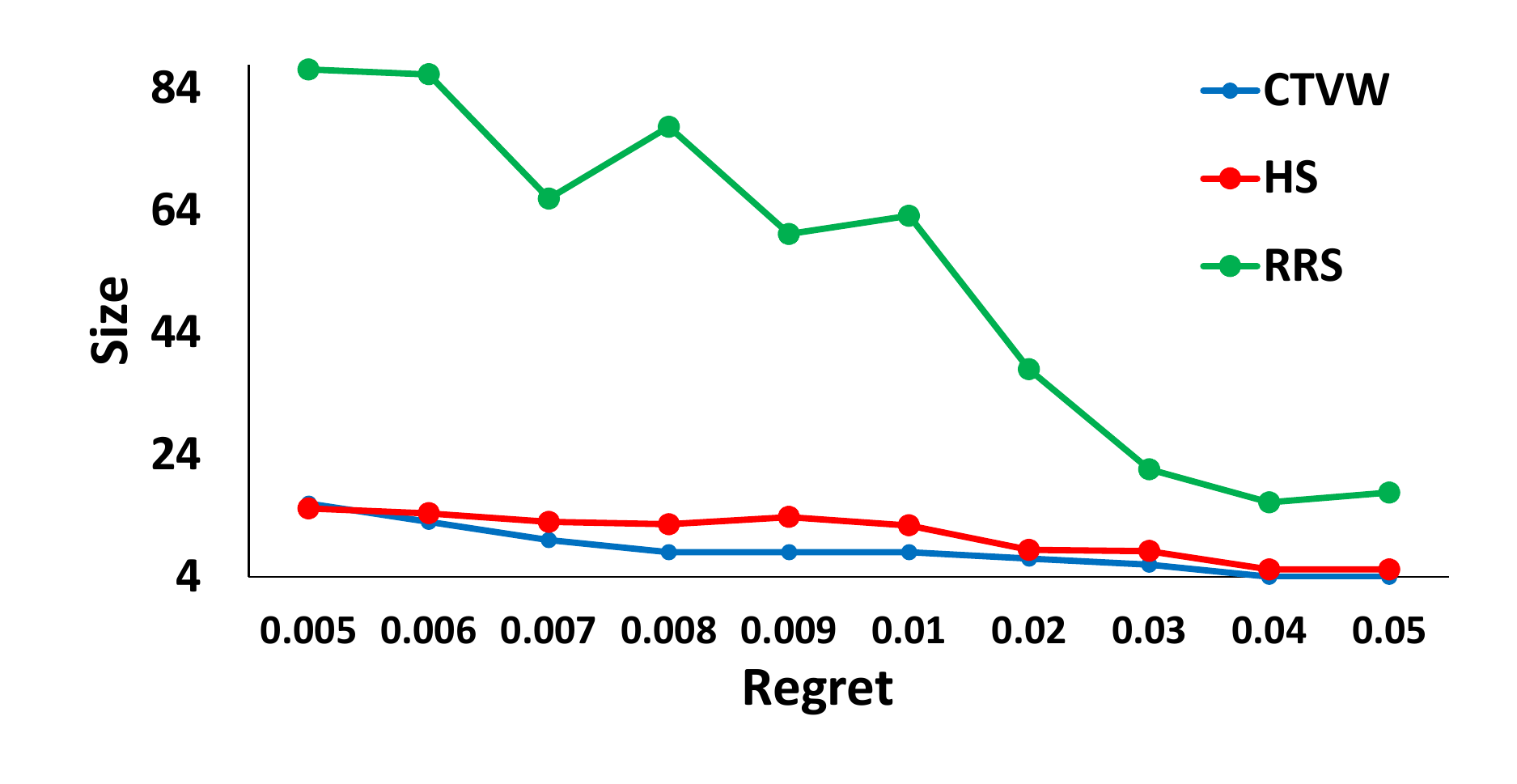}
  \caption{ElNino}\label{fig:ENMK10}
\end{subfigure}
\caption{Maximum regret ratio for $k=10$.}
\label{fig:MaxRegret10}
\end{figure*}

\mparagraph{Running time}
We begin with the runtime efficiency of the four algorithms, which is measured in the
number of seconds taken by each to find a regret set, given a target regret ratio.
\figref{fig:TimeK1} shows the running times of \GreedyKI{}, \HS{}, and \CoreSet{}
for $k=1$. The algorithm \CoreSet{} is the fastest.
For some instances, the running time of \HS{} and \CoreSet{} are close but in some other instances
\HS{} is up to three times slower.
The \GreedyKI{} algorithm is the slowest, especially for smaller values of the
regret ratio.
The relative advantage of our algorithms is quite significant for datasets that have
large skylines, such as AntiCor and Sphere.
Even for $k=1$, \GreedyKI{} is $7$ times slower than \HS{} on AntiCor data set and
$480$ times slower on Sphere data set, for regret ratio $\leq 0.01$.

The speed advantage of  \CoreSet{} and \HS{} algorithms over \GreedyKII{} becomes much more pronounced
for $k=10$, as shown in \figref{fig:TimeK10}.
Recall that \GreedyKII{} discards all points that are not on the skyline. The running time
is significantly larger if one runs this algorithm on the entire point set or when the skyline is
large. For example, for the AntiCor and
Sphere data sets, which have large size skylines, the \GreedyKII{} algorithm is
several orders of magnitude slower than ours.
If we set the parameter $\sigma=0.01$ for AntiCor data set, and generate $10000$ points(the skyline has $8070$ points in this case)
the running time of \GreedyKII{} is much higher as can be seen in Figure~\ref{fig:AntiCors001TimeK10}.

Because of the high running time of \GreedyKI{} and \GreedyKII{} algorithms, in Figure \ref{fig:LOGTimeSPhere},
we show the running time in the $\log$ scale with base $10$.

\begin{figure*}
\begin{subfigure}{0.48\textwidth}
  \includegraphics[width=\linewidth]{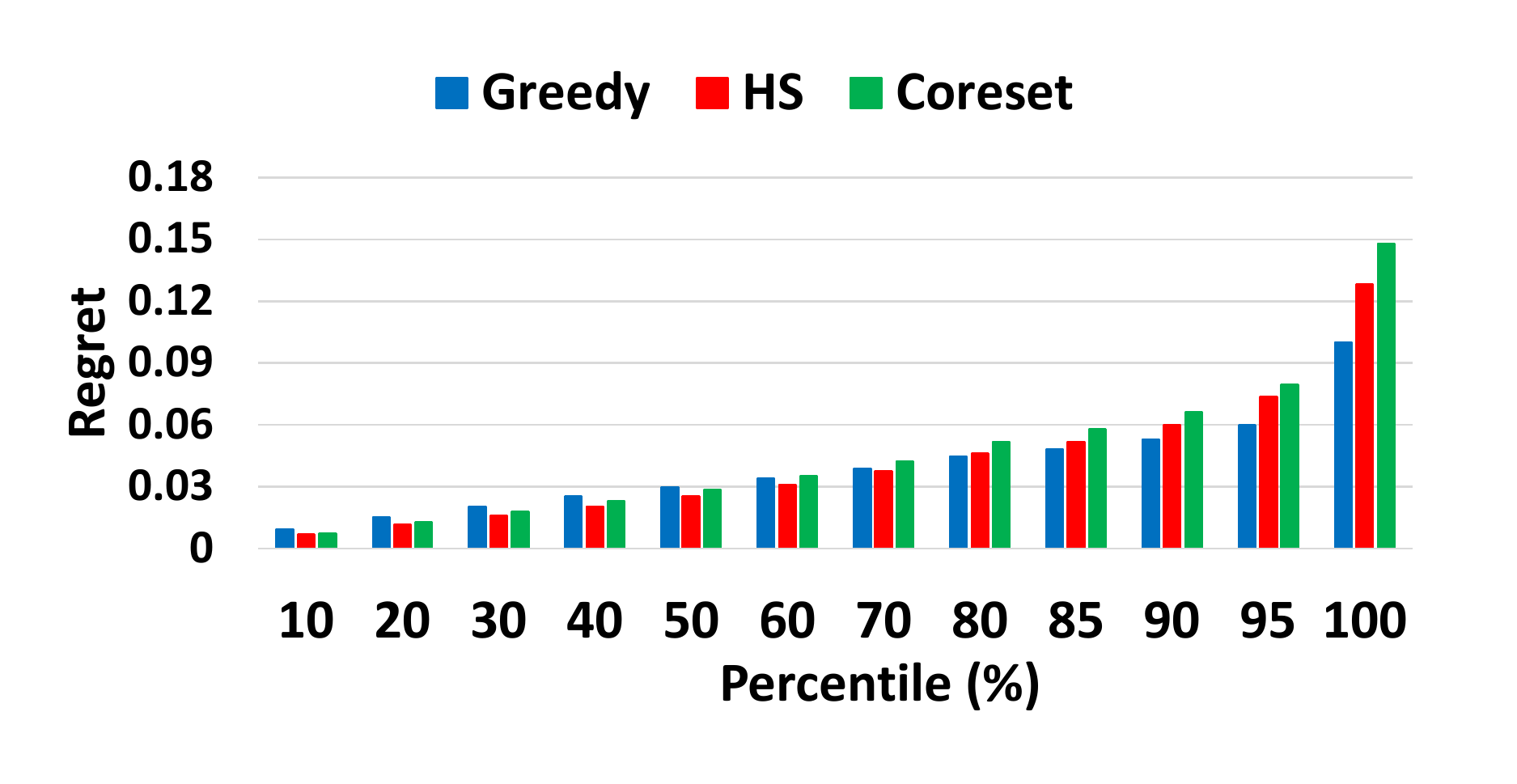}
  \caption{Sphere}\label{fig:SphereHist}
\end{subfigure}\hfill
\begin{subfigure}{0.48\textwidth}
  \includegraphics[width=\linewidth]{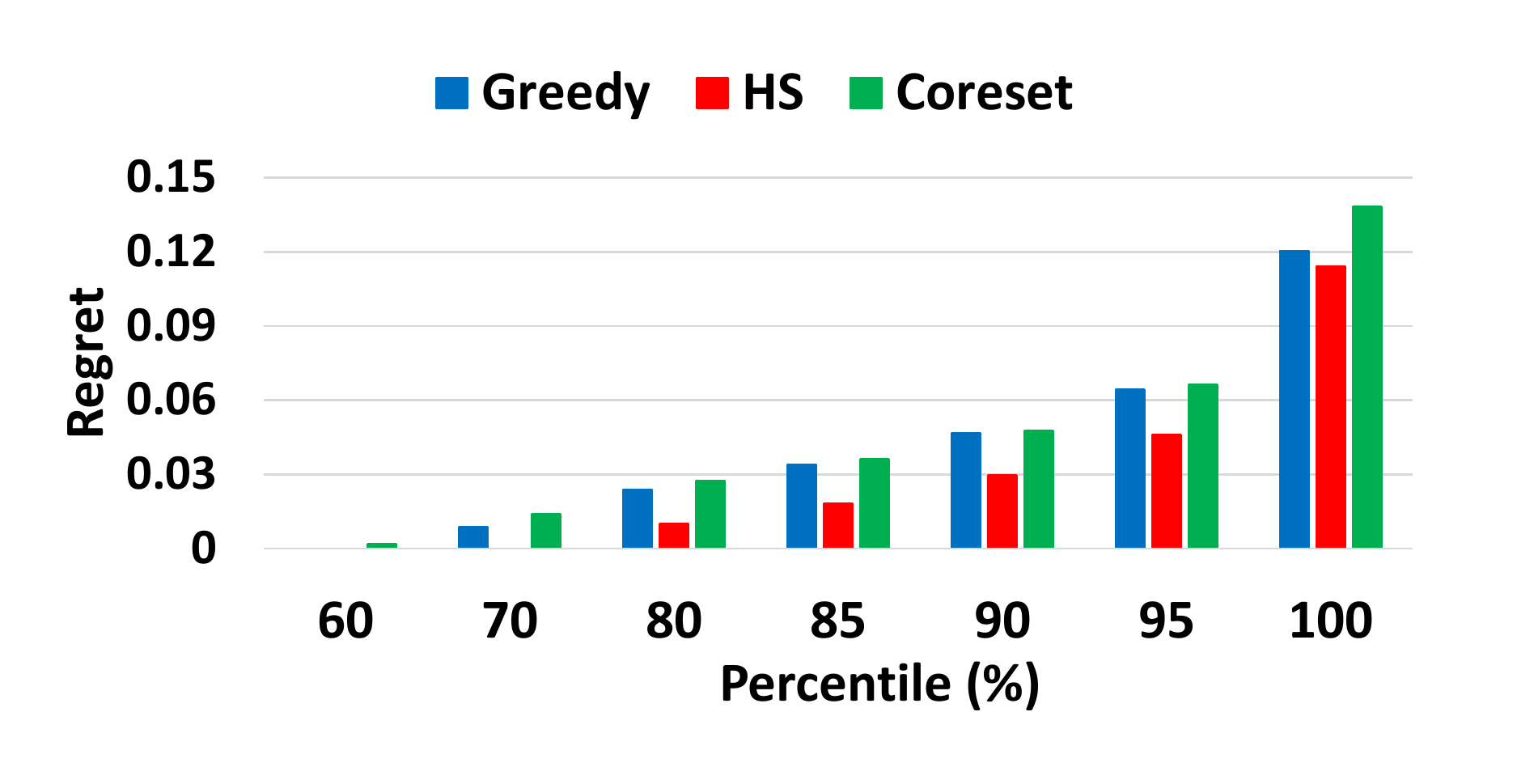}
  \caption{AntiCor}\label{fig:AntiCorHist}
\end{subfigure}
\caption{Regret distribution over the synthetic data sets, $k=1$.}
\figlab{fig:Hist}
\end{figure*}

\mparagraph{Regret ratio}
We now compare the quality of the regret sets (size) computed by the four algorithms.
Figures \ref{fig:MaxRegret1} and \ref{fig:MaxRegret10} show the results for
$k=1$ and for $k=10$, respectively.

The experiments show that in general the \HS{} algorithm finds regret sets comparable
in size to \GreedyKI{} and \GreedyKII{}.
This is also the case for AntiCor data set if we set $\sigma=0.01$ as can be seen in Figure~\ref{fig:AntiCors001K10}.
The \CoreSet{} algorithm tends to find the
largest regret set among the four algorithms, but it does have the advantage of
dynamic udpates: that is,  \CoreSet{} can  maintain a regret set under insertion/deletion
of points. However, since the other algorithms do not allow efficient updates, we do
not include experiments on dynamic updates.

The sphere data set is the worst-case example for regret sets since every point
has the highest score for some direction. As such, the size of the regret set
is much larger than for the other data sets. \HS{} and \CoreSet{} algorithm
rely on random sampling on preference vectors instead of choosing vectors adaptively
to minimize the maximum regret,
it is not surprising that for Sphere data sets \GreedyKII{} does $1.5$-$3$ times better than the \HS{} algorithm.
Nevertheless, as we will see below the regret of \HS{} in $95\%$ directions is close to that of \GreedyKII{}.

\mparagraph{Regret distribution}
The regret ratio only measures the \emph{largest} relative regret over all preference vectors.
A more informative measure could be to look at the entire distribution of the regret over
all preference vectors.
On all three real datasets, we found that $95\%$ of the directions had $0$-regret ratio for all
four algorithms. Therefore, we only show the results for the two synthetic main data sets, namely,
Sphere and AntiCor. See \figref{fig:Hist}.
In this experiment, we fixed the regret set size to $20$ for the Sphere dataset and $10$
for the AntiCor dataset. We observe that the differences in
the regret ratios in $95\%$ of the directions are much smaller than the differences in the maximum regret ratios.
For example, the difference of the maximum regret ratio between \CoreSet{} and \GreedyKI{} in
Sphere data set is $0.048$, while the difference in the $95\%$ ($85\%$) of the directions is $0.019$ ($0.0096$).
Similarly, for AntiCor data set the difference in the maximum regret ratio is $0.018$ but the difference
in the $95\%$ of the directions is $0.0021$.

\begin{figure}
\begin{subfigure}{0.24\textwidth}
  \includegraphics[width=\linewidth]{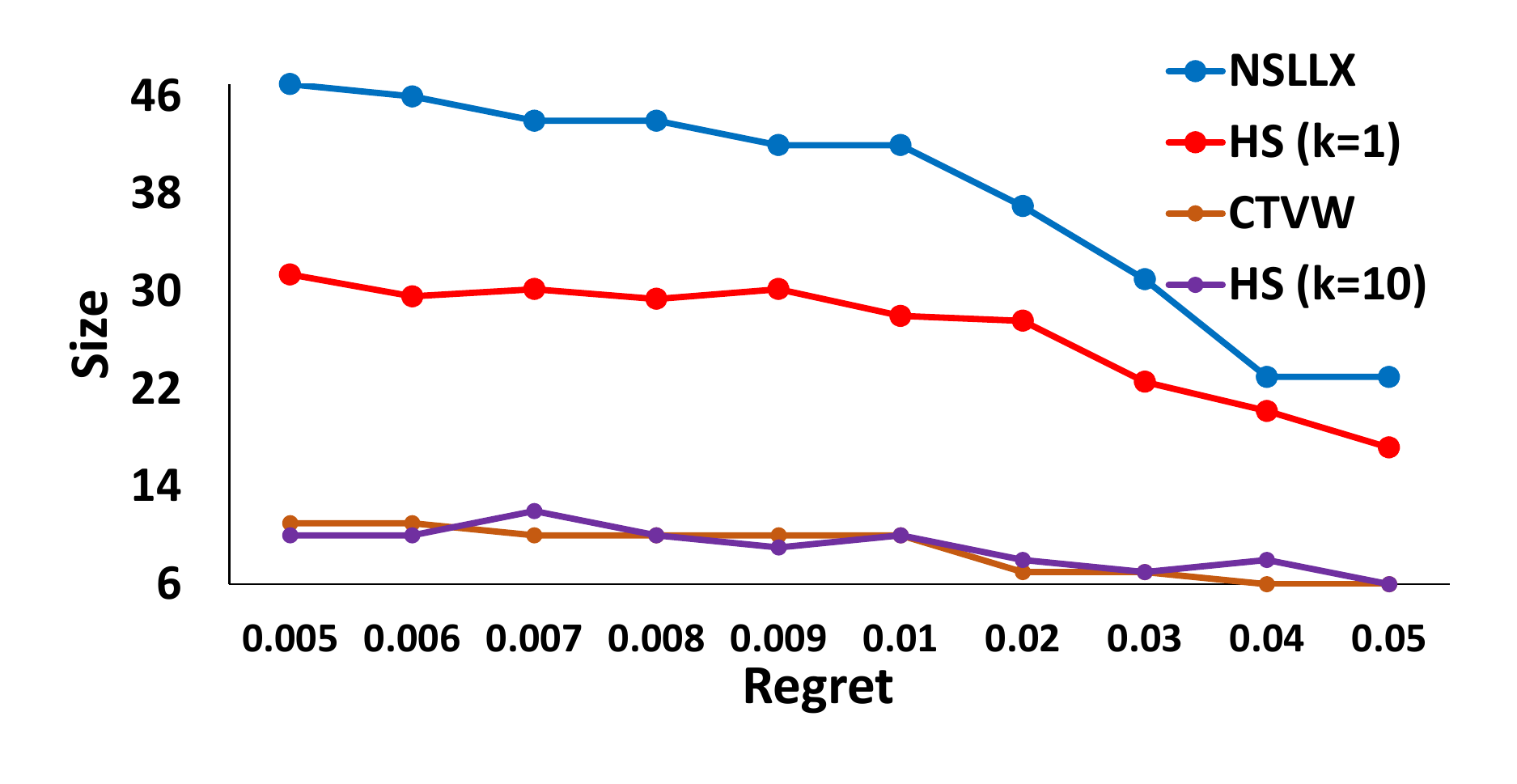}
  \caption{Regret ratio for $k=1, 10$.}\label{fig:ColorMax}
\end{subfigure}\hfill
\begin{subfigure}{0.24\textwidth}
  \includegraphics[width=\linewidth]{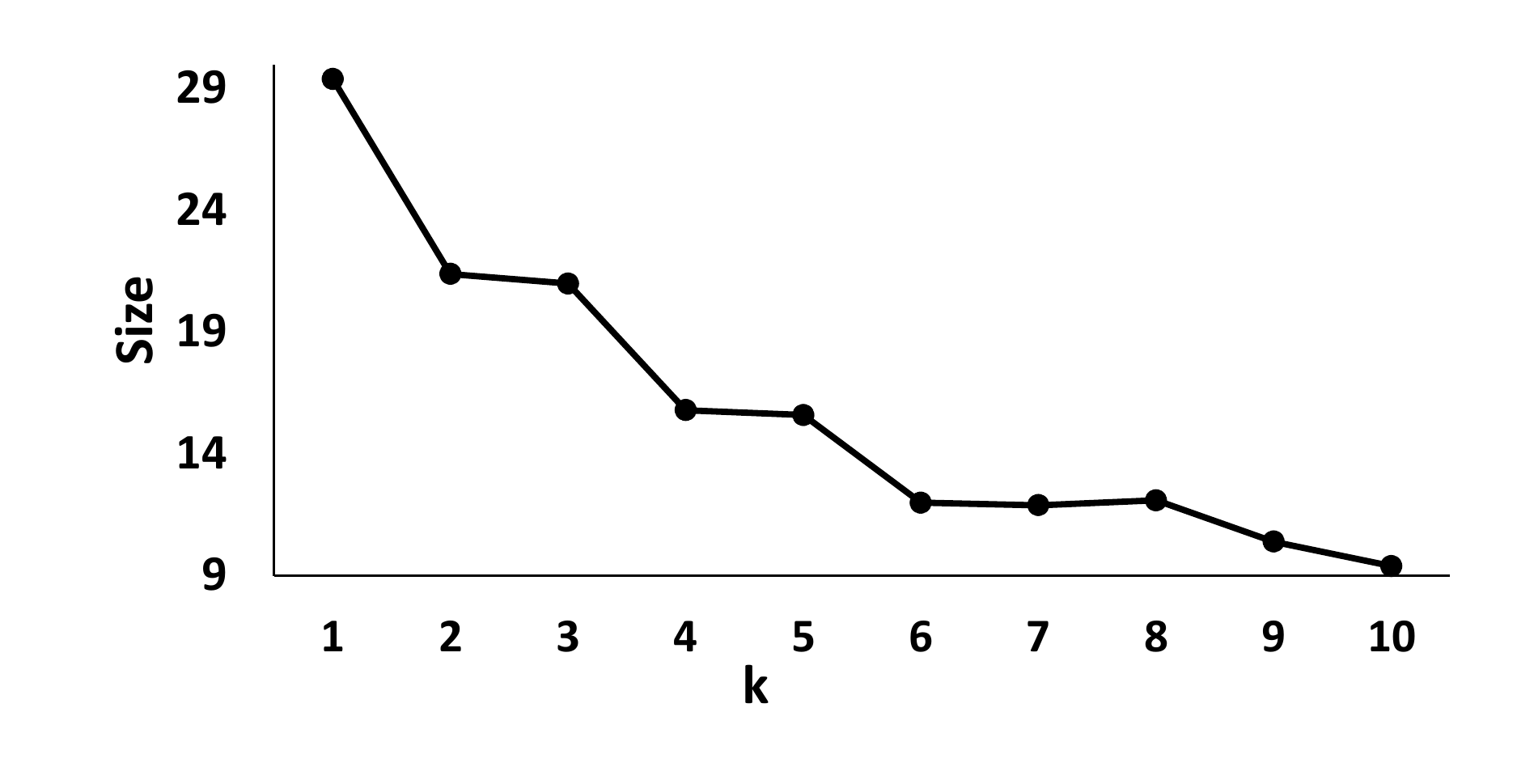}
  \caption{Size of the $(k,0.01)$-regret set as a function of $k$.}\label{fig:ColorK}
\end{subfigure}
\caption{Regret ratio and size of the regret set as a function of $k$ for Color data set.}\label{fig:ColorsP}
\end{figure}

\mparagraph{Impact of larger $k$}
We remarked in the introduction that the size of $(k,\epsilon)$-regret set can
be smaller for some datasets than their $(1,\epsilon)$-regret set, for $k>1$.
We ran experiments to confirm this phenomenon, and the results are shown in
Figure \ref{fig:ColorsP}.
As Figure \ref{fig:ColorMax} shows, the size of $1$-regret set is $3.5$ times larger
than $10$-regret sets for some values of the regret ratio. Figure \ref{fig:ColorK} shows how the size of the regret set
computed by the \HS{} algorithm decreases with $k$, for a fixed value of the regret ratio $0.01$.

\begin{figure}
\begin{subfigure}{0.24\textwidth}
  \includegraphics[width=\linewidth]{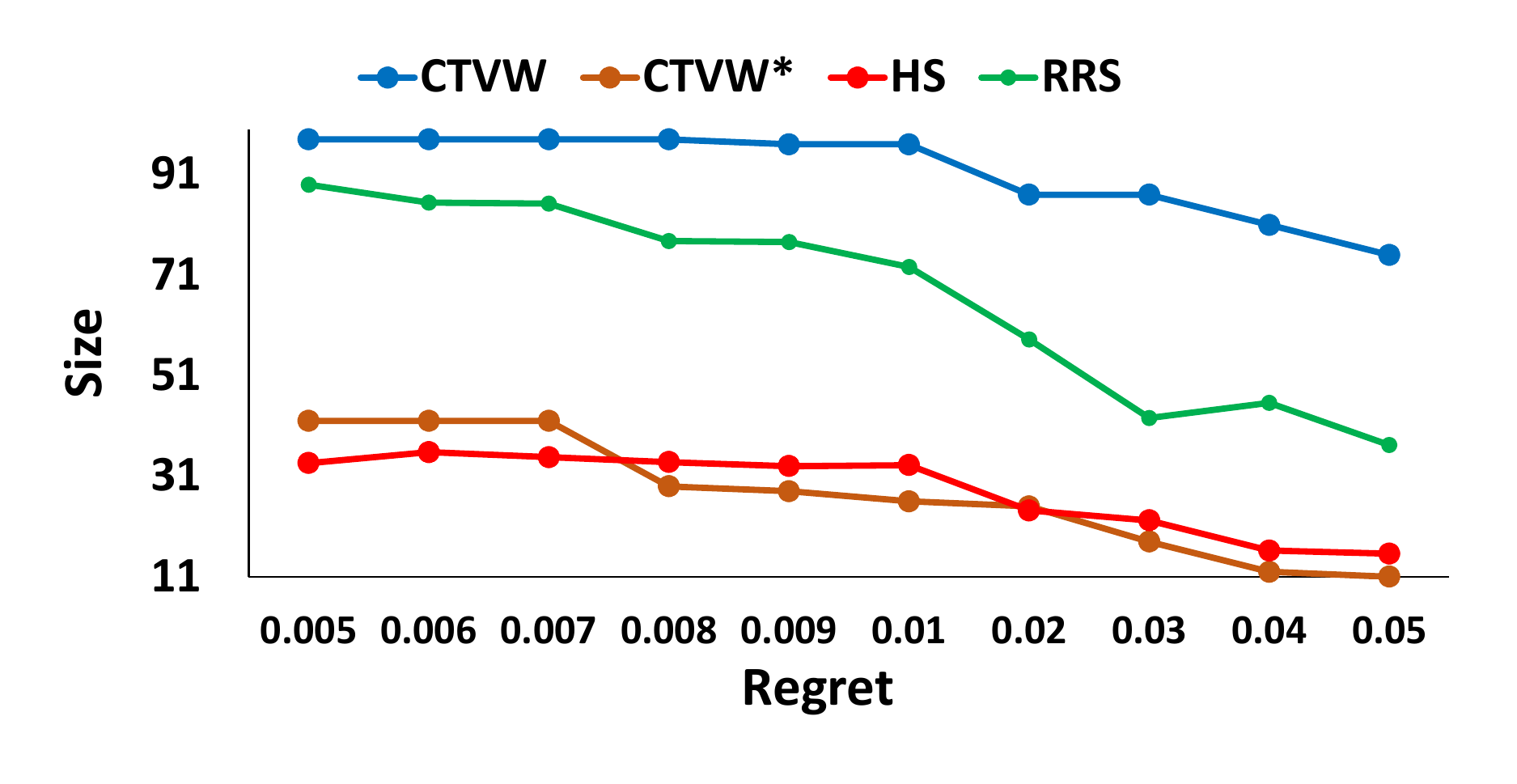}
  \caption{Regret ratio}\label{fig:SkyPointsM}
\end{subfigure}\hfill
\begin{subfigure}{0.24\textwidth}
  \includegraphics[width=\linewidth]{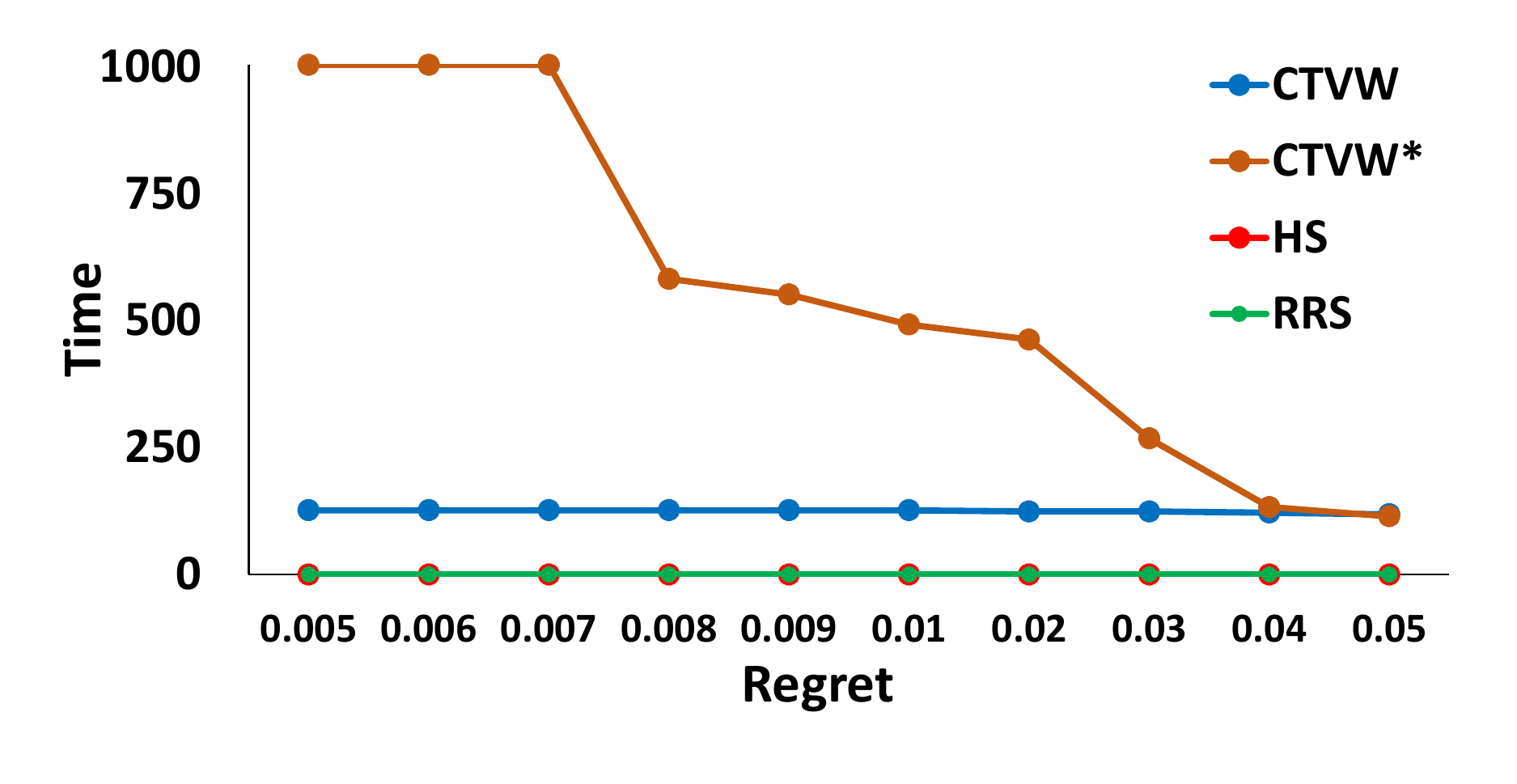}
  \caption{Running time}\label{fig:SkyPointsT}
\end{subfigure}
\caption{Regret ratio and running time of SkyPoints.}\label{fig:SkyPoints}
\end{figure}

\mparagraph{Skyline effect}
In order to improve its running time, the algorithm \GreedyKII{}~\cite{chester2014computing}
removes all the non-skyline points, as a preprocessing step, before computing the regret set.
While expedient, this strategy also risks losing good candidate points, and as a result may
lead to worse regret set. In this experiment, we used the Skypoint dataset to explore
this cost/benefit tradeoff. In particular, the modified version of \GreedyKII{} that
does not remove non-skyline points is called \GreedyKII{}*.

The results are shown in Figure \ref{fig:SkyPointsM}, which confirm that removal of non-skyline
points can cause significant increase in the size of the regret set, for a given target
regret ratio. The experiment shows that the regret set computed by \GreedyKII{}
is about $3$ times larger than the one computed by either \HS{} or \GreedyKII{}*.
(In this experiment, the regret size differences are most pronounced for small values
of regret ratio. When large values of regret ratio are acceptable, the loss of
good candidate points is no longer critical.)
Of course, while \GreedyKII{}* finds nearly as good a regret set as \HS{}, its
running time is much worse than that of \HS{}, or \GreedyKII{}, because of this change,
as shown in Figure \ref{fig:SkyPointsT}.

\section{Related Work}
\seclab{relatedWork}
The work on regret minimization was inspired by preference
top-$k$ and skyline queries.
Both of these research topics try to help a user
find the ``best objects'' from a database. Top-$k$ queries assign scores to
objects by some method, and
return the objects with the topmost $k$ scores while
the skyline query finds the objects such that no other object can be strictly better.
Efficiently answering top-$k$ queries has seen a long
line of work, see e.g. \cite{adhoc2007, guntzer2000, hybrid2010, prefer2001, li2006,marian2004,rahulefficient, theobald2004, xin2006, yu2012processing, zhang2006}
and the survey \cite{ilyas2008survey}.
In earlier work, the ranking of points was done by weight, i.e., ranking criterion was fixed. Recent work has
considered the specification of the ranking as part of the query. Typically, this is specified
as a preference vector $\wtv$ and the ranking of the points is by linear
projection on $\wtv$ see e.g. \cite{adhoc2007,prefer2001,yu2012processing}.
Another ranking criterion is based on the distance from a given query point in a metric
space i.e., the top-$k$ query is a $k$-nearest neighbor query \cite{tiakis2016}.

In general, preference top-$k$ queries are hard, and this has led to
approximate query answering \cite{chen2011efficient,yu2012processing,yu2016}.
Motivated by the need of answering preference top-$k$ queries,
Nanongkai et. al. \cite{nanongkai2010regret}
introduced the notion of a $1$-regret minimizing set (RMS) query.
Their definition attempted to combine preference top-$k$ queries
and the concept of skylines. They gave upper and lower bounds
on the regret ratio if the size of the returned set is fixed
to $r$. Moreover, they proposed an algorithm to compute a $1$-regret set of size $r$ with
regret ratio $O\left(\frac{d-1}{(r-d+1)^{1/(d-1)}+d-1}\right)$, as well
as a greedy heuristic that works well in practice.

Chester et. al. \cite{chester2014computing}
generalized the definition of $1$-RMS to the $k$-RMS
for any $k\geq 1$. They showed that the $k$-\Probl
is $\NP$-hard when the dimension $d$ is also an input to the problem,
and they provided an exact polynomial algorithm for $d=2$.
There has been more work on the $1$-RMS problem see
\cite{catallo2013top,nanongkai2012interactive,peng2014geometry},
including a generalization by Faulkner et. al. \cite{kessler2015k}
that considers non-linear utility functions.

The $1$-regret problem can be easily addressed by the notion of
$\epsilon$-kernel coresets, first introduced by
Agarwal et al. \cite{agarwal2004approximating}.
Later, faster algorithms were proposed to construct a coreset \cite{chan2004faster}.

The $1$-RMS problem is also closely related to the problem of approximating the
Pareto curve (or skyline) of a set of points.
Papadamitriou and Yannakakis \cite{papa-focs-00,papa-pods-01} considered this
problem and defined an approximate Pareto curve as a set of points whose
$(1+\epsilon)$ scaling dominates every point on the skyline.
They showed that there exists such a set of polynomial size \cite{papa-focs-00,papa-pods-01}.
However, computing such a set of the smallest size is $\NP$-Complete \cite{papa-tcs-07}.
See also \cite{yannakakis-tcs-05}.

\section{Conclusion}
\seclab{conclusion}

In this paper, we studied the \Probl. More specifically
we showed that the \Probl is $\NP$-Complete even in $\Re^3$,
which is a stronger result than the $\NP$-hardness proof in \cite{chester2014computing}
where the dimension is an input to the problem. Furthermore, we give bicriteria
approximation algorithms for the \Probl with theoretical guarantees, using the
idea of coresets and by mapping the problem to the well known hitting set problem.
Finally, we run experiments comparing the efficacy and the efficiency of our algorithms
with the greedy algorithms presented in \cite{chester2014computing, nanongkai2010regret}.

There are still some interesting problems for future work. In terms of
the complexity, our $\NP$-completeness proof holds for $k>1$. Is the $1$-regret minimization
problem $\NP$-Complete in $\Re^3$?
In terms of the approximation algorithms, is it possible to find algorithms with
theoretical guarantees where the running time does not have an exponential dependence on $d$, i.e.,
terms like $\frac{1}{\epsilon^{O(d)}}$ do not occur? This is important, because
in practice, the factor $\frac{1}{\epsilon^{O(d)}}$ can be very large even for moderately small
$d$ (say $d > 20$), thus severely limiting the practical utility of these algorithms.

\bibliographystyle{abbrv}
\bibliography{rms2}

\appendix
\seclab{appendix}

\section{Transform Polytope}
\label{Ap2}
We will present the transformation as a composition of transformations.
\mparagraph{Construction}
First, we translate $\Pi$ such that the origin $o$
is inside $\Pi$. Then, we compute the polar dual $\Pi^*$ (The polar dual of a polytope containing the origin $o$ is defined as the intersection of all hyperplanes $\dotp{x}{p} \leq 1$ where $p \in P$, and it can be equivalently defined as the intersection of the dual hyperplanes $\dotp{x}{v} \leq 1$ for all the vertices $v$ of $P$). Let $v$ be a vertex of $\Pi^*$.
Translate $\Pi^*$ such that $v$ becomes the origin. Then take a rotation such
that polytope $\Pi^*$ does not intersect the negative orthant --- i.e., the set of points in $\Re^3$ which have all coordinates strictly negative; we can always do it because
$\Pi^*$ is convex. Let $u_1$, $u_2$, $u_3$ be the three directions emanating from the origin
such that the cone defined by them, contains the entire polytope $\Pi^*$. Such directions always exist and can be found in polynomial time. It is known that
we can find in polynomial time an affine transformation such that $u_1$ is mapped to
the direction $e_1=(1,0.01,0.01)$, $u_2$ to direction $e_2=(0.01,1,0.01)$ and $u_3$ to $e_3=(0.01,0.01,1)$ (we can do it by first transforming $u_1, u_2, u_3$ to the unit axis vectors and then transform them to $e_1, e_2, e_3$).
Apply this affine transformation to $\Pi^*$ to get $\hat{\Pi}^*$.
Polytope $\hat{\Pi}^*$
lies in the first orthant, except for vertex $v$ which is at the origin. Shift this polytope slightly (such a shift can be easily computed in polynomial time) such that the origin lies in the interior of the polytope and very close to $v$ which now lies in the negative orthant, and all the other vertices are still in the first orthant.
Finally we compute the polar dual of $\hat{\Pi}^*$; call this $\hat{\Pi}$.
Translate $\hat{\Pi}$ until all vertices have positive coordinates, and let $\Pi'$ denote the new polytope.

\begin{lemma}
Polytope $\Pi'$ is combinatorially equivalent to $\Pi$ and satisfies properties (i), (ii).
\end{lemma}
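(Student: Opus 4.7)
The plan is to verify the claim by breaking it into three parts—combinatorial equivalence, property (i), and property (ii)—and walking through the construction step by step.

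First, for combinatorial equivalence, I would observe that every operation in the construction is face-lattice preserving. The translations, the rotation, and the invertible linear map sending $u_1,u_2,u_3$ to $e_1,e_2,e_3$ are all invertible affine maps, which preserve convex combinations and hence the vertex-edge-facet incidence structure. Polar duality inverts the face lattice, provided the origin lies in the interior of the polytope being dualised; the construction dualises exactly twice, and at each step this interior condition is arranged explicitly (for $\Pi$ by the first translation, and for $\hat{\Pi}^*$ by the ``shift slightly'' step). Composing two inversions with invertible affine maps in between returns a polytope with the same face lattice as $\Pi$.

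Property (i) is immediate from the final translation, which is explicitly chosen so that every vertex has all coordinates positive.

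For property (ii), I would first reduce the claim to the same statement for $\hat{\Pi}$: a translation of a polytope by a fixed vector $t_0$ preserves vertex rankings in every direction, since $\dotp{\wtv}{v+t_0}-\dotp{\wtv}{v'+t_0}=\dotp{\wtv}{v-v'}$. Now fix an edge $(v_1,v_2)$ of $\hat{\Pi}$. Polar duality (valid because origin is interior to $\hat{\Pi}^*$) assigns to it a dual edge $e^*$ of $\hat{\Pi}^*$, and any $\wtv$ in the relative interior of $e^*$ satisfies $\dotp{\wtv}{v_1}=\dotp{\wtv}{v_2}=1>\dotp{\wtv}{v}$ for every other vertex $v$ of $\hat{\Pi}$, certifying that $v_1,v_2$ are the two top-scoring vertices in direction $\wtv$. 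It therefore suffices to exhibit such a $\wtv$ lying in $\posPoints$. By construction all vertices of $\hat{\Pi}^*$ except one distinguished vertex (call it $\tilde v$) lie in the strict interior of $\posPoints$: the positive span of $e_1,e_2,e_3$ is a closed cone strictly inside the open first orthant, so after the affine map every vertex other than $\tilde v$ (which was at the origin before shifting) is strictly positive in every coordinate, and the slight shift keeps it so. Every edge between two non-$\tilde v$ vertices therefore lies wholly inside $\posPoints$. For an edge from $\tilde v$ to a neighbour $w$, the segment $(1-t)\tilde v+tw$ has $j$-th coordinate $(1-t)\tilde v_j+tw_j$; since $w_j>0$ for every $j$ and $|\tilde v_j|$ is tiny, this coordinate is positive for all $t$ close to $1$, so the relative interior of the edge meets $\posPoints$.

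The main obstacle will be making ``slightly'' quantitative in the shift step, so that (ii) actually holds for edges incident to $\tilde v$. The shift magnitude must simultaneously (1) keep every non-$\tilde v$ vertex strictly positive, (2) push $\tilde v$ just across into the open negative orthant, and (3) make the interval of $t\in(0,1)$ for which $(1-t)\tilde v+tw\in\posPoints$ nonempty for every neighbour $w$ of $\tilde v$. A shift magnitude below a small constant times $\min_{w}\min_j w_j$, taken over neighbours $w$ of $\tilde v$ in $\hat{\Pi}^*$, suffices, and has polynomial bit complexity, so the overall transformation remains polynomial-time computable as needed for the reduction.
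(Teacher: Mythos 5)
Your proof is correct and follows essentially the same route as the paper: combinatorial equivalence via affine maps and double polarity, property (i) from the final translation, and property (ii) by passing to the dual polytope $\hat{\Pi}^*$ and showing that every edge of $\hat{\Pi}^*$ meets the positive orthant. The only substantive difference is how that geometric fact is converted into (ii): the paper invokes the correspondence between the rank of a vertex in direction $\wtv$ and the order in which the ray from the origin in direction $\wtv$ crosses the dual hyperplanes (its intermediate property (ii')), whereas you certify the top-two property directly from polarity by taking $\wtv$ in the relative interior of the dual edge, so that $\dotp{\wtv}{v_1}=\dotp{\wtv}{v_2}=1>\dotp{\wtv}{v}$ for every other vertex $v$; this is a somewhat more self-contained argument. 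You are also more explicit than the paper about the edges incident to the shifted vertex $\tilde v$ and about the quantitative choice of the shift (points the paper only asserts), which strengthens rather than changes the argument.
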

\begin{proof}
We start by mapping property (ii) in the dual space.
Consider a polytope $G$ and its dual $G^*$ (where the origin lies inside them).
It is well known that any vertex $v$ of $G$ corresponds to a hyperplane $h_v$ in
the dual space that defines a facet of $G^*$.
An edge between two vertices in $G$ corresponds to an edge between the two corresponding faces in $G^*$.
Furthermore, if a vertex $v$ of $G$
is the top-$k$ vertex of $G$ in a direction $\wtv$, then the corresponding hyperplane $h_v$
is the $k$-th hyperplane (among the $n$ dual hyperplanes) that is intersected by the ray $o\wtv$, where $o$ is the origin.
From the above it is straightforward to map property (ii) in the dual space:
(ii') For any edge $(f_1, f_2)$ where $f_1, f_2$ are faces of $G^*$ there is
a direction $\wtv\in \posPoints$ such that the first two hyperplanes that are intersected by
the ray $o\wtv$ are $h_1$, $h_2$, where $h_1$ is the hyperplane that contains $f_1$ and $h_2$
the hyperplane that contains the face $f_2$.

We now show how these properties can be guaranteed in $\hat{\Pi}^*$.
Notice that from the construction of $\hat{\Pi}^*$, the origin lies inside $\hat{\Pi}^*$
and all faces of $\hat{\Pi}^*$ have non empty intersection with the positive octant.
By convexity, $\hat{\Pi}^*$ satisfies property (ii') because for any edge $e=(f_1,f_2)$ of $\hat{\Pi}^*$
there is a ray emanating from the origin that first intersects the edge $e$, and hence the hyperplanes $h_1, h_2$
are the first hyperplanes that are intersected by the ray.
So,
its dual polytope $\hat{\Pi}$ satisfies property (ii).
In addition, $\Pi^*$ is combinatorially equivalent to $\Pi$, by duality.
Since we apply an affine transformation
$\hat{\Pi}^*$ is also combinatorially equivalent to $\Pi^*$.
Finally, the polytope $\hat{\Pi}$ is combinatorially equivalent to $\hat{\Pi}^*$ (its dual).
Notice that translation does not change the combinatorial structure of a polytope or the ordering of the points in any direction, so
$\Pi'$ satisfies property (ii), property (i) by definition, and is also combinatorially equivalent to $\Pi$.
\end{proof}

\remove{
\section{Omitted proofs}
\label{Ap3}
\textbf{\lemref{Transfs}} \textit{Let $\PntSet$ be a set of $n$ points in $\Re^d$, and let $M$ be a full rank
 $d\times d$ matrix. A subset $\SbSet\subseteq\PntSet$ is a
 $(k,\epsilon)$-regret set of $\PntSet$ if and only if $\SbSet'=M\SbSet$ is a
 $(k,\epsilon)$-regret set of $\PntSet'=M\PntSet$.}
\begin{proof}
First, observe that $\dotp{\wtv}{M\pnt} = \wtv^TM\pnt = (M^T\wtv)^T\pnt =
\dotp{M^T\wtv}{\pnt}$, and so $\Score_k(\wtv,M\PntSet)=\Score_k(M^T\wtv,\PntSet)$.
We define a mapping $F:\posPoints \to \posPoints$ and its inverse $F^{-1}:\posPoints \to \posPoints$
as $F(\wtv)=(M^{-1})^T\wtv$ and $F^{-1}(\wtv)=M^T\wtv$.
Our proof now follows easily from these mappings.

 If $\SbSet$ is a $(k,\epsilon)$-regret set for $\PntSet$, then for any $\wtv\in \SphereDP$ we have
 $\Score_1(\wtv,M\SbSet)=\Score_1(M^T\wtv,\SbSet)=\Score_1(F^{-1}(\wtv),\SbSet)
 \geq (1-\epsilon)\Score_k(F^{-1}(\wtv), \PntSet)=
 (1-\epsilon)\Score_k(\wtv, M\PntSet)=(1-\epsilon)\Score_k(\wtv, \PntSet')$

 Conversely, if $\SbSet'$ is a $(k,\epsilon)$-regret set for $\PntSet'$, then for any
 $\wtv\in \SphereDP$,
 $\Score_1(\wtv,\SbSet)=\Score_1(\wtv,M^{-1}\SbSet')=\Score_1((M^{-1})^T\wtv,\SbSet')=\Score_1(F(\wtv),\SbSet')
 \geq (1-\epsilon)\Score_k(F(\wtv),\PntSet') = (1-\epsilon)\Score_k(\wtv,M^{-1}\PntSet')
 =(1-\epsilon)\Score_k(\wtv,\PntSet)$.
This completes the proof.
\end{proof}
}

\remove{
\subsection{Faster approximation algorithm for $k$-regret problem}
\label{Ap1}
Recall that the approximation algorithm proposed in \secref{approxAlg} runs
in $O(\frac{n}{\epsilon^{d-1}})$ time.
In this section we propose a linear time algorithm.
The main idea is to first take a $(k,\epsilon)$-kernel \cite{agarwal2008robust} and then
run the hitting set algorithm considering only points in the kernel.

Given a set of $n$ points $\PntSet$ in $\Re^d$ in \cite{agarwal2008robust}, find
a subset $Q\subseteq \PntSet$, with size $O(\frac{k}{\epsilon^{(d-1)/2}})$, in
time $O(n+\frac{k^2}{\epsilon^{d-1}})$, such that for all positive $i\leq k$,
$\Score_i(\wtv,Q)-\Score_1(-\wtv,Q) \geq (1-\epsilon)(\Score_i(\wtv,\PntSet)-\Score_1(-\wtv,\PntSet))$
for all $\wtv\in \Sphere^{d-1}$.
As we had in \secref{coreSet}, for $\wtv\in \SphereDP$ it also holds that
\[
 \Score_i(\wtv,Q) \geq (1-\epsilon)\Score_i(\wtv,\PntSet)
\]
for all $i\leq k$.
We show:
\begin{theorem}
Given a set of $n$ points $\PntSet\in \Re^d$, a real number $\epsilon$ with $0\leq \epsilon\leq 1$,
and $1\leq k\leq n$, a subset $Q\subseteq \PntSet$ can be computed in
$O(n+\frac{k^2}{\epsilon^{d-1}}+\frac{k}{\epsilon^{3(d-1)/2}})$ time,
of size $\cardin{Q}=O(\minSize{\epsilon}\log \minSize{\epsilon})$, such that
\[
 \Score_1(\wtv,Q)\geq (1-2\epsilon)\Score_k(\wtv,\PntSet) \quad \forall \wtv\in \SphereDP.
\]
\end{theorem}

We give the pseudocode of the new algorithm:
\mparagraph{Algorithm 1}
\captionof{algorithm}{Fast\_Hit\_Set\_Alg}\algolab{HitSetAlgEf}
\noindent Input: $\PntSet$, $k$, $\epsilon$\\
\noindent Output: Approximation for $(k,\epsilon)$-regret problem
\begin{algorithmic}[1]
\State Let $\PntSet'$ be a ($k,\epsilon/3)$-kernel of $\PntSet$.
\State Take $m$ uniform discrete directions $\dirSet\subset \Sphere_+^{d-1}$ such that
the angle between two neighboring directions is $\epsilon/(6d)$.
\State Let $Q=\emptyset$ be a family of sets of points.
\For {$\wtv\in \dirSet$}
  \State $Q_{\wtv}=\kSet_{k,4\epsilon/3-\epsilon^2/3}(\wtv,\PntSet')$
\EndFor
\State $\bar{Q}=\bigcup_{\wtv\in \dirSet}Q_{\wtv}$
\State Define the range space $R=(\PntSet', \bar{Q})$
\State Let $Q_{HS}$ be a hitting set (geometric) of $R$.
\State Return $Q=Q_{HS}\cup\PntSet_D$
\end{algorithmic}

\mparagraph{Running time}
The time that we need to take a $(k,\epsilon/3)$-kernel is $O(n+\frac{k^2}{\epsilon^{d-1}})$.
We run the hitting set algorithm considering only $O(\frac{k}{\epsilon^{(d-1)/2}})$
points, and the running time is $O(\frac{k}{\epsilon^{3(d-1)/2}})$.
So, the overall running time is $O(n+\frac{k^2}{\epsilon^{d-1}}+\frac{k}{\epsilon^{3(d-1)/2}})$.

\mparagraph{Correctness}
It is easy to show that
$\Score_1(\wtv,Q)\geq (1-2\epsilon)\Score_k(\wtv,\PntSet)$ for all $\wtv\in \Sphere_+^{d-1}$.
The subset $\PntSet'$ is a $(k,\epsilon/3)$-kernel of $\PntSet$ so,
$\Score_i(\wtv,\PntSet')\geq (1-\epsilon/3)\Score_i(\wtv,\PntSet)$ for all $i\leq k$.
From Theorem \ref{Theor1} we have that
$\Score_1(\wtv,Q)\geq (1-\epsilon/3)(1-\epsilon/3-\epsilon+\epsilon^2/3)\Score_k(\wtv,\PntSet')=
(1-\epsilon/3)^2(1-\epsilon)\Score_k(\wtv,\PntSet')$.
From the above, we conclude to the result.
We need to show that $\cardin{Q}=O(\minSize{\epsilon} \log \minSize{\epsilon})$.
For simplicity, we consider all directions $\SphereD$,
and $\Score_k(\wtv,\PntSet)>0$ for all $\wtv\in \Sphere^{d-1}$.
The proof also holds in our setting with points and directions in $\SphereDP$.
Following the notation of \secref{approxAlg}, let $\mu_A=\cardin{Q_{HS}}$ and
$\mu_{\dirSet}$ be the size of the optimum hitting set of the range space $R$.
We have that $\mu_A = O(\mu_{\dirSet}\log \mu_{\dirSet})$. In \secref{approxAlg}
it was straightforward to show that $\mu_{\dirSet}\leq \minSize{\epsilon}$,
but in this case the relationship between $\minSize{\epsilon}$ and
$\mu_{\dirSet}$ is not clear. We show that $\mu_{\dirSet}\leq (d+1)\minSize{\epsilon}$.

\begin{lemma}
\label{LemCar}
 For any $\pnt\in \PntSet$, there is a subset
 $A_p\subseteq \PntSet'$ with $\cardin{A_p}\leq d+1$ such that
 $\Score_1(\wtv,A_p)\geq (1-\epsilon/3)\Score(\wtv,\pnt)$ for all $\wtv\in \SphereD$.
 \footnote{If we take directions in $\SphereDP$ we can show that $\cardin{A_p}\leq d$ instead of $d+1$ and hence $\mu_{\dirSet}\leq d\minSize{\epsilon}$.
 Since it does not make asymptotically any difference we present it with the easier way considering all directions in $\Sphere^{d-1}$.}
\end{lemma}
\begin{proof}
Let fix a point $\pnt\in\PntSet$. For any direction $\wtv\in \SphereD$, there is
a point $q\in\PntSet'$ such that $\Score(\wtv,q)\geq (1-\epsilon/3)\Score(\wtv,\pnt)$.
Let fix such a point $q\in \PntSet'$. The inequality
$\Score(\wtv,q)\geq (1-\epsilon/3)\Score(\wtv,\pnt)$ defines a halfspace $h_{\pnt q}^+$
that is defined by a hyperplane $h_{\pnt q}$ that passes through the origin
In order to make it clear, let
$\pnt_i$, $q_i$, $\wtv_i$ be the $i$-th coordinate of point $\pnt$, $q$ and direction
$\wtv$, respectively. We have,
$\Score(\wtv,q)\geq (1-\epsilon/3)\Score(\wtv,\pnt) \Leftrightarrow
\dotp{\wtv}{q}\geq (1-\epsilon/3)\dotp{\wtv}{\pnt} \Leftrightarrow
\dotp{\wtv}{q-(1-\epsilon/3)\pnt}\geq 0 \Leftrightarrow
\sum_{i=1}^d \wtv_i[q_i-(1-\epsilon/3)\pnt_i] \geq 0$.
Since $\pnt$, $q$ are fixed points the inequality holds for directions in the halfspace
defined by the hyperplane $\sum_{i=1}^d \wtv_i[q_i-(1-\epsilon/3)\pnt_i] = 0$.

For any direction $\wtv\in \Sphere^{d-1}$ there is a halfspace $h_{\pnt q_{\wtv}}^+$ such that
$\wtv\in h_{\pnt q_{\wtv}}^+$. Indeed, if $\pnt\in\kSet_k(\wtv,\PntSet)$ then from the
definition of $(k,\epsilon/3)$-kernel there is a point $q\in\PntSet'$ such that
$\Score(\wtv,q)\geq (1-\epsilon/3)\Score(\wtv,\pnt)$. Furthermore, if $\pnt\notin\kSet_k(\wtv,\PntSet)$
there will be at least $k$ points in the kernel with score greater than $\Score(\wtv,\pnt)$.
Let $H$ be the set of the $O(\frac{k}{\epsilon^{(d-1)/2}})$ halfspaces.
If we can guarantee that there always exists a subset $H_p\subseteq H$ with $\cardin{H_p}\leq d+1$
such that for any $\wtv \in \Sphere^{d-1}$, $\exists h^+\in H_p$ where $\wtv\in h^+$, then we would prove the result.
Let $v_h$ be the normal vector to hyperplane $h$ with direction in $h^+$. We can define
the set of vectors $V=\{v_h\mid h^+\in H\}$. Let $P_V$ be the set of points on $\Sphere^{d-1}$
that are taken considering the directions of $v_h$, i.e., $P_V=\{p_h=\Sphere^{d-1}\cap v_h\mid v_h\in V\}$.
It is easy to observe that the origin $o\in conv(P_V)$, where $conv(P_V)$ is the convex hull of
$P_V$. If it was not the case, then there would be a direction $\wtv$ such that no hyperplane
in $H$ would contain $\wtv$ which is a contradiction. From Caratheodory's theorem,
any point $p\in conv(P_V)$ can be written as the convex combination of a subset of
at most $d+1$ points in $P_V$. So, there is a subset $S_V\subseteq P_V$ of $\cardin{S_v}\leq d+1$,
such that $o\in conv(S_V)$ and let $H_p$ be the set of the corresponding halfspaces in $H$, i.e., $H_p=\{h^+\mid p_h\in S_V\}$. Since $o$ lies in the convex hull of $S_V$, for any
direction $\wtv\in \Sphere^{d-1}$ there is a halfspace $h^+\in H_p$ such that
$\wtv\in h^+$.
\end{proof}

It remains to show that for any direction $\wtv\in \dirSet$ "covered" by a point $\pnt$ in the optimum algorithm,
the set $Q_{\wtv}$ contains one of the points in $A_p$.

Let $\wtv\in \dirSet$ be one of the sampled directions and let $\pnt\in \PntSet$ be the
point selected by the optimum algorithm to cover direction $\wtv$, i.e.
$\Score(\wtv,\pnt)\geq (1-\epsilon)\Score_k(\wtv,\PntSet)$. If there are more than one
such points we can choose any of them.
Let $p'\in A_{\pnt}$ be a point such that $\Score(\wtv,p')\geq (1-\epsilon/3)\Score(\wtv,\pnt)$.
If we show that $p'\in Q_{\wtv}$ then we prove the result. Indeed, if $p'\in Q_{\wtv}$,
then all directions in $\dirSet$ covered by $\pnt$ in the optimum solution can be covered
by at most $d+1$ points in $\PntSet'$, and hence, $\mu_A\leq (d+1)\minSize{\epsilon}$.
There are two cases:
If $\pnt\in \kSet_k(\wtv,\PntSet)$, then $\Score(\wtv,p')\geq (1-\epsilon/3)\Score_k(\wtv,\PntSet')$,
so $p'\in Q_{\wtv}$ because $Q_{\wtv}=\kSet_{k,4\epsilon/3-\epsilon^2/3}(\wtv,\PntSet')$.
Otherwise, it should hold that
$\pnt\in \kSet_{k,\epsilon}(\wtv,\PntSet)$. In this case we have $\Score(\wtv,p')\geq (1-\epsilon/3)(1-\epsilon)\Score_k(\wtv,\PntSet')$,
so again $p'\in Q_{\wtv}$ because $Q_{\wtv}=\kSet_{k,4\epsilon/3-\epsilon^2/3}(\wtv,\PntSet')$.
In any other case, $\pnt$ would not be the point that cover direction $\wtv$ in $\PntSet$.
We conclude that $\mu_{\dirSet}\leq (d+1)\minSize{\epsilon}$, so $\mu_A = O(\minSize{\epsilon}\log \minSize{\epsilon})$.
} 
\end{document}